\documentclass[11pt,letterpaper]{article}
\usepackage[latin1]{inputenc}
\usepackage{amsmath}
\usepackage{mathtools}
\usepackage{amsfonts}
\usepackage{amssymb}
\usepackage{amsthm}
\usepackage{ulem}
\usepackage{graphicx}
\usepackage{epstopdf}
\usepackage[margin=1in]{geometry}
\usepackage[hidelinks]{hyperref}
\usepackage{color}
\usepackage{todonotes}

\usepackage{indentfirst}
\usepackage{multirow}
\usepackage{bm}
\usepackage{subfigure}
\usepackage{multirow}
\usepackage{tabularx}
\usepackage{lineno}
\usepackage{xr}

\newtheorem*{remark}{Remark}

\usepackage{tikz}
\usetikzlibrary{shapes.geometric, arrows}
\tikzstyle{startstop} = [rectangle, rounded corners, minimum width=3cm, minimum height=1cm, text centered, draw=black, fill=red!30]
\tikzstyle{io} = [trapezium, trapezium left angle=70, trapezium right angle=110, minimum width=3cm, minimum height=1cm, text centered,  text width=3cm, draw=black, fill=blue!30]
\tikzstyle{process} = [rectangle, minimum width=3cm, minimum height=1cm, text centered, draw=black, fill=orange!30]
\tikzstyle{decision} = [diamond, minimum width=3cm, minimum height=3cm, text centered, text width=6.5em, draw=black, fill=green!30]
\tikzstyle{arrow} = [thick,->,>=stealth]

\makeatletter
\newcommand*{\addFileDependency}[1]{
  \typeout{(#1)}
  \@addtofilelist{#1}
  \IfFileExists{#1}{}{\typeout{No file #1.}}
}
\makeatother




\hypersetup{colorlinks,citecolor=blue}
%
%

%
%

\def\Dd{\mathcal{D}}

\def\Ff{\mathcal{F}}

\def\Hh{\mathcal{H}}

\def\Rr{{\mathcal{R}}}

%
%
\def\E{\mathbb{E}}

\def\R{\mathbb{R}}

%
%

%
%
%

%

\def\PD#1#2{\frac{\partial{#1}}{\partial{#2}}}

%
%

\def\EXPECT{{\mathbb{E}}}

\def\RELENT#1#2{\Rr\left(#1|#2\right)}

%
%

%
%
\def\COMMA{\,,}             
\def\PERIOD{\,.}            
\def\SEP{{\,|\,}}           

%
%
\def\VIZ#1{(\ref{#1})}      

%
%

\def\BARIT#1{{\bar {#1}}}

%

\def\CGmu{\BARMt}

%

%

\def\RELENT#1#2{\mathcal{R}\left({#1}\SEP{#2}\right)}
\def\ENTRATE#1#2{\mathcal{H}({#1}\SEP{#2})}

\def\EXPECT{{\mathbb{E}}}

\def\COPINV{\COP^\dagger}                  
\def\PULLBCK#1{\COPINV_*{#1}}

\def\RELENT#1#2{\Rr\left(#1\|#2\right)}
\def\argmin{\textrm{ argmin}}
\def\argmax{\textrm{argmax}}

%

\def\bbx{\mathbf{\bar q}}

\def\bbX{{X}}

\def\bxs{\mathbf{x}}
\def\bbxs{\bar{\mathbf{x}}}

\def\bx{\bar q}

\def\bbq{\mathbf{q}}


\def\kelvin{\mathrm{\ K}}

\def\ns{\mathrm{\ ns}}
\def\fs{\mathrm{\ fs}}
\def\ps{\mathrm{\ ps}}
\def\nm{\mathrm{\ nm}}

\def\PXT{\{\bbX_t\}_{t\geq 0}}

\def\0o{\mathbb{O}}

\def\BARIT#1{{\bar {#1}}}

\def\BARM{\BARIT{\mu}}
\def\BARMt{\BARIT{\mu}^{\theta}}

\def\BARV{\BARIT{U}}

\def\IP{U}
\def\PMF{{U}^{\mathrm{pmf}}}

\def\COP{\mathbf{\Pi}}

\def\ENTRATE#1#2{\mathcal{H}({#1}\SEP{#2})}

\title{{\bf Data-driven Uncertainty Quantification for Systematic Coarse-grained Models  
}}
\author{Tangxin Jin$^{1}$, Anthony Chazirakis$^{2,3}$, Evangelia Kalligiannaki$^{2,*}$, \\ Vagelis Harmandaris$^{2,3}$, and  Markos A. Katsoulakis$^{2}$
}
\date{\begin{flushleft} $^{1}$  Department of Mathematics and Statistics, University of Massachusetts, Amherst, USA; \\
 $^{2}$  Institute of Applied and Computational Mathematics,  Foundation for Research and Technology - Hellas,  Heraklion Crete,  Greece;\\
$^{3}$ Department of Mathematics and Applied Mathematics, University of Crete, Heraklion Crete,  Greece; \\
*E-mail: evangelia.kalligiannaki@iacm.forth.gr
 \end{flushleft}
}


\begin{document}
\maketitle

\begin{abstract}
In this work, we present methodologies for the quantification of confidence in bottom-up coarse-grained models for molecular and macromolecular systems. Coarse-graining methods   have been  extensively used in the past decades in order to extend the length and time scales accessible by simulation methodologies. The quantification, though, of induced errors due to the limited availability of fine-grained data  is not yet established. 
Here, we employ rigorous statistical methods to deduce guarantees for the optimal coarse models obtained via approximations of the multi-body potential of mean force,  with the relative entropy, the relative entropy rate minimization, and the force matching methods.
Specifically, we present and apply statistical approaches, such as bootstrap and jackknife, to infer confidence sets for a limited number of samples, i.e., molecular configurations. 
Moreover, we estimate asymptotic confidence intervals assuming adequate sampling of the phase space.
We demonstrate the need for non-asymptotic methods and quantify confidence sets through two applications. The first is a two-scale fast/slow diffusion process projected on the slow process. With this benchmark example, we establish the methodology for both independent and time-series data. Second, we apply these uncertainty quantification approaches on a polymeric bulk system. We consider an atomistic  polyethylene melt as the prototype system for developing coarse-graining tools for macromolecular systems.  For this system, we  estimate the coarse-grained force field and present confidence levels with respect to the number of available microscopic data.

 \end{abstract}

{{Keywords:}  coarse-graining; confidence; finite data;  bootstrap; jackknife; asymptotic error}

\newpage
 
 \section{Introduction}
The research in systematic bottom-up coarse-graining methods for molecular systems has significantly advanced in the past decades. When adequate information is provided through the fine-grained data, the resulting coarse force fields are describing well   structural  properties, \cite{MulPlat2002,tsop1,Reith2003,Noid2013a}. 
Moreover, there is active research and  considerable progress   on  the dynamics of coarse models, \cite{Harmandaris2009a, Harmandaris2009b,HKKP2016,Rudzinski2019}.
However, there is a gap in the literature regarding the quantification of the induced errors due to the limited availability of fine-grained data. In the current work, we aim to incorporate rigorous statistical methods   with coarse-graining methods   to provide data-driven confidence sets.

Coarse-graining (CG) is a model reduction methodology that is used in order to extend the spatio-temporal scales accessible by microscopic (atomistic) simulations and to study molecular systems properties at mesoscale regimes. 
Systematic (chemistry specific) CG models are obtained by lumping
groups of chemically connected atoms into CG particles (or CG beads) and deriving the effective coarse-grained interaction potentials from the microscopic details of the atomistic models.
Such models are capable of predicting  the properties of specific systems quantitatively and have been applied with great success to a vast range of molecular systems. 
To build CG models, one needs to derive (a) CG interaction potentials to describe equilibrium properties and (b) dynamical models to describe kinetic properties, directly from more detailed (microscopic) simulations.
The effective CG potentials approximate the many-body potential, describing the equilibrium distribution of CG particles. These CG potentials can be developed through different numerical parameterizing methods at equilibrium,  such as the iterative Boltzmann inversion (IBI) \cite{MulPlat2002,tsop1,Reith2003}, the inverse Newton (or inverse Monte Carlo)~\cite{Lyubartsev1995,Lyubartsev2010}, the force matching (FM) or Multiscale  Coarse-Graining (MSCG) ~\cite{IzVoth2005a,IzVoth2005}, \cite{Noid2007,Voth2008a,Voth2008b}, the Relative Entropy (RE)    methods~\cite{Shell2009,Shell2008}, and the cluster expansion based methods ~\cite{Tsourtis2017}. Also, during the last decade, bottom-up CG methods for treating molecular systems under non-equilibrium conditions have been developed.    Such are, the  recently  introduced,   path-space relative entropy (PSRE), relative entropy rate (RER), and  path-space force matching (PSFM) methods for providing effective CG models at equilibrium, non-equilibrium, transient, or stationary time regimes, \cite{KP2013, HKKP2016}.
The path-space methods have been further applied successfully to the     dimensionality reduction of stochastic reaction networks \cite{KatsVilanova2020}, and the sensitivity analysis of molecular models \cite{TPKH2015}. All these  methods fall under the umbrella of  statistical inference methods. Statistical inference is our point of view in the current study from which we draw the rigorous mathematical and statistical tools, \cite{EfronHastie2016, wasserman2010all, wasserman2006all}.  
 
Quantifying parametric uncertainties accounts for assessing the model accuracy, variability, and sensitivity. Thus, naturally, a primary challenge in all above CG approaches is to quantify uncertainties in  effective CG model due to the involved approximations. We are concidering  the (limited) size of the available microscopic data, and the numerical/algorithmic errors. 
Two general 'philosophies' in inferential statistics are   frequentist inference and  Bayesian inference.
The Bayesian approach has been studied recently for the coarse-graining   of  molecular systems. For example, Voth and co-workers \cite{VothBayes2008} have applied the empirical Bayes technique to estimate the force field  parameters for the  FM  method. Authors in \cite{Farrell1,Farell2}, in addition to parameter estimation, propose a methodology for model selection based on the Bayesian approach.
Furthermore, in refs. \cite{Koutsourelakis2017, Koutsourelakis2019}, authors focus on the derivation of credible intervals for CG models of water.
Bayesian uncertainty estimation has also  been applied to parametrize atomistic molecular models in refs.~\cite{Koumoutsakos2012, Koumoutsakos2013, Brotzakis2018, Frederiksen2004}.  
The Bayesian perspective can provide a range of probabilistic properties, but it relies on prior knowledge  often not available. 
Thus any credible interval estimation relies on uninformative priors. In contrast, estimating frequentist parametric and non-parametric confidence intervals requires no prior information, \cite{EfronHastie2016}.
 
Estimates of confidence intervals are given by asymptotic and non-asymptotic methods, chosen based on the available data. The asymptotic approach relies on the central limit theorem and the asymptotic Gaussian convergence theory. 
Additionally, concentration inequalities can provide reliable bound estimates for quantities of interest, \cite{DKPP}.
The non-asymptotic methods concern estimating parameter statistics for finite data; typical examples are the jackknife and the bootstrap ones \cite{efron1979, Jackknifereview1974}. 
Such methods have been employed in the past   to obtain estimates of the parameters in classical force fields.  For example, Reiher and collaborators \cite{Reiher1, Reiher2}, employed frequentist statistical tools. Specifically, they utilize non-parametric bootstrapping, to obtain reliable estimates of the fit parameters present in semi-classical  dispersion interactions based on the Density Functional Theory (DFT).
Recently, authors in ref. \cite{Longbottom_2019} introduce a probabilistic potential ensemble method to estimate uncertainties in classical potential fitting based on DFT calculations.
In addition, uncertainty quantification studies for the parameters of molecular models   appear in   \cite{TPKH2015}
using information theory tools,  and  in \cite{UQJacobson2014}  via a polynomial chaos approach.

Despite the above studies, according to our knowledge, asymptotic and non-asymptotic methods have not yet been explored in the context of CG modeling of high dimensional systems, and in particular for macromolecular systems.
Here we address the accuracy of CG models for molecular systems by employing frequentist statistical data analysis.
 Our goal is to present and apply rigorous statistical approaches, i.e., bootstrap and jackknife, to infer confidence sets for a \textit{limited number of samples}.
 
We apply these methodologies to: (a) a relatively simple benchmarking problem, of a two scale fast/slow diffusion process and (b) a realistic  bulk polymer model, as a prototype example of a high dimensional macromolecular system. 
The latter is essential if we consider that independent data are required to deduce the confidence sets with the non-asymptotic methods, though obtaining sufficiently uncorrelated data of high molecular weight model polymers is challenging, ~\cite{Harmandaris2003a,Harmandaris2009a}. 

The structure of this work is as follows. Firstly, we present a short review of the bottom-up coarse-graining methodologies of molecular systems  from the perspective of statistical inference. Next, we construct the asymptotic and non-asymptotic confidence intervals for the RE, RER, and FM methods.
We benchmark the methodology with a multi-scale diffusion system with known corresponding stochastic averaging limits. 
We derive the bootstrap and jackknife estimates for the fitted interaction potential for a  high dimensional polyethylene melt, based on data derived from detailed atomistic simulations. 
Finally, we conclude and discuss our findings.

\section{Physics-based data-driven coarse-graining} {\label{sec:Background}
Assume a prototypical problem of $n$ particles (atoms or molecules) in a box of volume $V$ at temperature $T$.  Let $\bbq=(q_1,\dots,q_n) \in \R^{3n}$ describe the position of the  particles in the atomistic (microscopic)  description  with potential energy $\IP(\bbq)$. The probability of a state $\bbq$ at the temperature $T$ is given by the Gibbs canonical  probability density
\begin{equation}\label{eq:Gibbs}
\mu(\bbq) =Z^{-1}\exp\{ -\beta \IP(\bbq)\}\COMMA
\end{equation}
where $Z= \int_{\R^{3n}}e^{-\beta \IP(\bbq)} d\bbq$ is the configurational the partition function, $\beta=\frac{1}{k_B T}$, and  $k_B$ is the Boltzmann constant. We should note that the studies and analyses presented in this work are  performed on the configuration space.
Moreover, we assume that the configurational time evolution of the particles is described by a continuous time process $\PXT=\{\bbq_t\}_{t\ge 0}$ in $\R^{3n}$, with path space distribution $P_{[0,t]}\COMMA$ and  Gibbs probability density \eqref{eq:Gibbs}. If we assume Markovianity, then a temporal discretization of  the process leads to a Markov chain with the transition probability kernel $p(\bxs,\bxs')$. Thus, the path space probability {\ density} of   $ \{\bbX_1,\dots,\bbX_N\} $,  observed at   $ t_1\dots, t_N  $ respectively,  is
\begin{equation}\label{eq:psdensity}
    P(\bbX_1,\dots,\bbX_N) = \nu(\bbX_1)\prod_{i=1}^{N-1} p(\bbX_i, \bbX_{i+1})\COMMA
\end{equation}
where $\nu$ is the initial state probability density. 
We define coarse-graining through the configurational CG mapping    $\COP:~\R^{3n} \to \R^{3m} $,
determining the $m(<n)$ CG particles as a function  of the microscopic configuration $\bbq$. The  mappings most commonly considered in coarse-graining of molecular systems are linearly  represented by a set of non-negative real constants $\{\zeta_{ij}, i=1,\dots, m,\ j=1,\dots, n \}$, for which
  $\bbx_i :=\COP_i(\bbq) = \sum_{j} \zeta_{ij} q_j \in \R^{3}, \ i=1,\dots,m \PERIOD$
The probability that the CG system has configuration $\bbx= (\bx_1,\dots,\bx_m)\in \R^{3m}$  is 
 $ \BARM(\bbx) =  \int_{\Omega(\bbx)}\mu(\bbq)d\bbq,  \ \ \Omega(\bbx) =\{\bbq\in \R^{3n}: \ \COP(\bbq) = \bbx\}  
 \PERIOD$
  The corresponding free energy at the CG level, described by  the $m-$body potential of the mean force (PMF), is  
\begin{equation*}\label{PMF}
  \PMF(\bbx)=   -\frac{1}{\beta} \log \int_{\Omega(\bbx)} e^{-\beta \IP(\bbq)} d\bbq\PERIOD
\end{equation*}
Bottom-up structural-based CG methods look for approximations of the m-body PMF $\PMF(\bbx)$  
\begin{equation}\label{CGpotential}
\BARV(\bbx;\theta)  \,, \quad \theta \in \Theta\subseteq \R^K \COMMA
\end{equation}
which defines the corresponding approximating probability density 
\begin{equation}\label{eq:CGdensity}
\CGmu(\bbx) = (Z^{\theta})^{-1}\exp\{-\beta \BARV(\bbx;\theta)\} \,, \quad \theta \in \Theta  \COMMA
\end{equation}
where $Z^{\theta} = \int_{\R^{3m}} e^{ -\beta \BARV(\bbx;\theta) }d\bbx $ is the normalization constant.

 We introduce  a Markov process $\{\bar X_t\}_{t\ge 0} $ in $\R^{3m}$  to approximate the time evolution of the coarse  variables $ \{\COP \bbX_t\}_{t\ge 0}$.  The CG process $\{\bar X_t\}_{t\ge 0} $ is defined through its parametric path space distribution 
  \begin{equation}\label{eq:CGpsdistribution}
      \bar{Q}^{\theta}_{[0,t]} \,, \quad \theta \in  \Theta \subseteq  \R^{K}\PERIOD
  \end{equation}
The goal is to find the most effective CG model given  a set of  either  {\  independent and identically distributed (i.i.d.)}\  or time-series data. In this work, we elaborate with the relative entropy minimization, relative entropy rate minimization, and the force matching methods to find the effective CG model.

\medskip
\noindent
{\bf I. Independent, identically distributed data.} Given $ N $  i.i.d.  configurational observations from the microscopic Gibbs density \eqref{eq:Gibbs}, 
 \begin{equation}\label{eq:DataIID}
     \Dd^{iid}_{N}= \{\bbX_1,\dots,\bbX_{N}\}\COMMA 
 \end{equation} 
 we aim to infer the CG probability density \eqref{eq:CGdensity}.
 
 The Force Matching  method determines a CG approximating force $\BARIT F(\bbx;\theta) = -\nabla \BARV(\bbx;\theta)  $,  and thus an effective potential from atomistic force information, as the solution of  the mean least-square minimization  problem
\begin{equation}\label{eq:FM}
\min_{  \theta \in \Theta}\E_\mu\left[ \| F(\bbq) - \BARIT F(\COP(\bbq);\theta)\|^2  \right] \COMMA
\end{equation}
where $\E_\mu [\cdot] $ denote the average  with respect to the probability density $\mu(\bbq)$, and $\| \cdot \|$  the Euclidean norm in $\R^{3m}$.
The reference field $F(\bbq)\in \R^{3m}$  is the  local mean force whose component $F_I(\bbq), \ I=1,\dots, m$  is the force exerted at the $I$-th CG particle and is a function of the microscopic forces.  
For example, if the CG particle corresponds to the center of mass of a group of atoms then $F_I(\bbq) = \sum_{j\in \{\textrm{group } I \}}  f_j(\bbq)$, $ I=1,\dots m$, where $ f_j(\bbq)$ is the force exerted at the $j$-th microscopic particle. 
Thus, given the set of i.i.d.\ data  $\Dd^{iid}_{N} $ described in \eqref{eq:DataIID}, the discrete  optimization problem  corresponding to \eqref{eq:FM} is
\begin{equation}\label{eq:FMestimator}
\hat\theta^{iid,fm}_{N}(X_1,\dots,X_N) = \underset{\theta  \in \Theta}\argmin \frac{1}{N}\sum_{i=1}^{N} \| F(X_i) - \BARIT F(\COP(X_i);\theta)\|^2 \PERIOD
\end{equation}
 
The  Relative Entropy minimization  method determines a CG effective potential $\BARV(\bbx;\theta)$ by minimizing the relative entropy $\RELENT{\mu}{ \mu^{\theta}}$ between the microscopic Gibbs measure $\mu(\bbq)$ and a back-mapping $\mu^\theta(\bbq ) = \CGmu(\bbx)\nu(\bbq|\bbx) $ of the approximate CG measure  $ \CGmu(\bbx)$. That is 
\begin{equation}\label{eq:REmin}
 \underset{\theta  \in \Theta}\min  \RELENT{\mu}{\mu^{\theta}  } \COMMA  
\end{equation}
 where 
\begin{equation*}
\RELENT{\mu}{ \mu^{\theta}} =
\E_\mu\left[ \log\frac{\mu(\bbq)}{\mu^{\theta}(\bbq)} \right]\PERIOD
\end{equation*}
Thus, the RE estimator for the CG model is
\begin{equation}\label{eq:RE_equi_N}
\hat\theta^{iid, re}_{N} (X_1,\dots,X_N) =  \underset{\theta\in\Theta}\argmin   \frac{1}{N  }\sum_{i=1}^{N }\log \frac{ {\mu}(  X_{i})}{{\CGmu(\COP X_i)}}\COMMA
\end{equation}
assuming that the back-mapping distribution does not depend on $\theta$.

\medskip
\noindent
{\bf II. Time-series data.} 
In path space  we estimate the probability density \eqref{eq:CGpsdistribution} at dynamical regimes, given $N_p$ i.i.d. path observations 
 \begin{equation}\label{eq:DataPath}
\Dd^{ts}_{N_p,N_t}=\{\bbX^{k}_1,\dots,\bbX^k_{N_t}\}_{k=1}^{N_p}\COMMA
 \end{equation}
from the microscopic path space probability density \eqref{eq:psdensity}. Each path (or trajectory) observation consists of $N_t$ discrete time observations, which, for simplicity, we consider of uniform time step. Also, each path observation can have different size $N_k,\ k=1,\dots,N_p$. 
   
The best approximation is given by entropy based criteria to find the best Markovian approximation of the coarse-grained process. The optimization principle is defined in terms of the path-space relative entropy,
 \begin{equation}\label{VP-path1}
 \min_{\theta\in \Theta} \RELENT{P_{[0, t]}}{ Q^\theta_{[0, t]}}\, ,
 \end{equation}
where 
$ Q^\theta_{[0, t]} := \PULLBCK{\bar Q^\theta_{[0, t]}}$ 
is the 
back-mapping to the microscopic space of the parameterized path-space coarse-grained 
distribution.
The relative entropy rate (RER) is defined by
\begin{equation*}
    \ENTRATE{P}{Q^\theta } := \lim_{t\to \infty}\frac{1}{t} \RELENT{P_{[0, t]}}{ Q^\theta_{[0, t]}}\PERIOD
\end{equation*} 
Therefore, the minimization of the RER 
\begin{equation*}
\min_{\theta\in \Theta}\ENTRATE{P}{Q^\theta }\COMMA
\end{equation*}
is the appropriate  optimization problem {\  for  $ t\to \infty $, and  for stationary Markov processes} \cite{KP2013}. 
In work \cite{HKKP2016}, we prove that the path-space variational inference problem \VIZ{VP-path1} in continuous time reduces to a   path-space force matching optimization, for a class of CG mappings. 
In addition,  the RER reduces to the FM for stationary processes with invariant probability density $\mu(\bbq)$ defined in \eqref{eq:Gibbs}.
 
 For discrete time observations the CG path-space distribution \eqref{eq:CGpsdistribution}, assuming Markovianity for the CG model, is
  \begin{equation*}\label{eq:CGdiscretsps}
      \bar{Q}^{\theta}(\bar\bbX_1, \dots, \bar\bbX_{N}) =  \bar\nu(\bar\bbX_1)\prod_{i=1}^{N-1} \bar{q}^{\theta}(\bar\bbX_i, \bar\bbX_{i+1})\COMMA
 \end{equation*} 
 where $\bar q^{\theta}(\bbxs, \bbxs')$ is the transition probability kernel of the proposed approximate CG process, and $\bar\nu(\bbxs)$ denotes the initial distribution.
 Introducing an unbiased estimator for the relative entropy,  the optimal parameter estimate  for $\Dd^{ts}_{N_p,N_t}$ is given by  
\begin{equation}
  \hat \theta^{ts}_{N_pN_t}(X^{1}_1,\dots,X^{N_p}_{N_t}) = \underset{\theta\in\Theta}\argmin \frac{1}{N_p}\sum_{k=1}^{N_p}\log \frac{ {P}(  X^k_1,   X^k_2,...,  X^k_{N_t})}{ 
  	  \bar{Q}^\theta({\COP X^k_1, \COP X^k_2,...,\COP X^k_{N_t}})}\COMMA
  \end{equation}
where we assume that the $Q^\theta_{[0, t]}$ in relation \eqref{VP-path1} is given as the product of  $\bar{Q}^\theta$ and a back-mapping probability independent of $\theta$, which for notation simplicity we do not present here. 
In terms of the transition probability kernels, the parameter estimator is 
\begin{equation} \label{eq:thetaTrans}
  \hat \theta^{ts}_{N_pN_t}(X^{1}_1,\dots,X^{N_p}_{N_t}) = \underset{\theta\in\Theta}\argmin \frac{1}{N_p} \sum_{k=1}^{N_p} \frac{1}{N_t-1}\sum_{i=1}^{N_t-1}\log \frac{ {p}(  X^k_i, X^k_{i+1})}{{\bar{q}}^\theta(\COP X^k_i,\COP X^k_{i+1})}\PERIOD
\end{equation}
Note that when the time series are  stationary,  then they are statistically indistinguishable and 
the path-space  optimization problem \VIZ{eq:thetaTrans}  reduces to the RER optimization, \cite{HKKP2016}. That is, for observations $\Dd_{N_t} = \{X_1,\dots,X_{N_t}\} $ the optimal parameter set is given by 
\begin{equation}\label{eq:RE_path_N}
  \hat \theta^{ts}_{N_t}(X_1,\dots,X_{N_t}) = \underset{\theta\in\Theta}\argmin   \frac{1}{N_t-1}\sum_{i=1}^{N_t-1}\log \frac{ {p}(  X_i, X_{i+1})}{{\bar{q}}^\theta(\COP X_i,\COP X_{i+1})}\PERIOD
\end{equation}
The  RER  estimator becomes the RE estimator  when  the samples  are replaced by i.i.d. generated from the stationary probability distribution $\mu(\bbx)$ and   ${\bar{q}}^\theta(\COP X_i,\COP X_{i+1}) = \bar\mu(\COP X_{i+1};\theta) $. 

 Note, that the major difference between the RE minimization and the RER minimization is that in the first we need i.i.d.\ data from $\mu(\bbq) $ while in later we need time series data from $P_{[0,t]}$. This is an advantage of the path-space methods since there is no computational effort to generate the i.i.d.\ data.   
 On the other hand, due to the ergodic theory, when the time-series data  is long enough we can  substitute the configuration space average with the time space average where correlated data are admissible. Thus, the effort to generate i.i.d. data is transferred to the effort to generate long time correlated data.  

\section{Confidence intervals for coarse-grained methods}\label{sec:CI}
In this section, we asses the uncertainty of the estimated parameters $\theta$, as well as quantities of interest given as composite functions of the parameters. Specifically,  we construct confidence intervals (CIs) on the CG model parameters for a given set of data for both equilibrium and path-space models. We demonstrate the methodology of constructing  non-asymptotic and asymptotic confidence intervals in detail for the  relative entropy estimation  $\hat \theta^{iid,re}_N$. The  methodology is also valid for the force matching  estimation $\hat \theta^{iid,fm}_N$, if we consider  it as a regression problem with the corresponding likelihood, which is proportional to   $\exp\left\{-\|F(\bbx) - \bar{F}(\bbx;\theta)\|^2\right\} $.
 
\subsection{Statistical estimation and path-space relative entropy optimization}\label{sec:SE-RE}     
As described in the previous section, we consider two types of data; i.e., sets of configurations derived from the more detailed microscopic  simulations in the form of:
(a)  {\it independent and identically distributed} data,  $\Dd^{iid}_{N}$  generated from the invariant  distribution  $\mu$, and
(b) {\it discrete time-series} data $\Dd^{ts}_{N_p,N_t}$, eq. \eqref{eq:DataPath}, generated from the path distribution of the original microscopic process $P_{[0,t]}$. 
Note that eq. \eqref{eq:RE_equi_N} simplifies further since the  invariant measure $\mu$ is  independent of $\theta$,
\begin{equation}
\label{eq: RE_equi_N2}
\hat \theta^{iid, re}_{N}(X_1,\dots,X_N) = \underset{\theta\in\Theta}\argmax \frac{1}{N} \sum_{i=1}^N \log \CGmu(\COP X_i)     \PERIOD
\end{equation}
For the time-series data  the optimization problem  \eqref{eq:RE_path_N} is equivalent to
\begin{equation}
\label{eq: RE_path_N2}
\hat \theta^{ts}_{N_pN_t}(X^{1}_1,\dots,X^{N_p}_{N_t}) = \underset{\theta\in\Theta}\argmax \frac{1}{N_p} \sum_{k=1}^{N_p} \frac{1}{N_t-1}\sum_{i=1}^{N_t-1} \log    \bar{q}^\theta(\COP X^k_i, \COP X^k_{i+1}) \PERIOD
\end{equation}
  
Thus, to derive the optimal CG model parameter, in both cases, we  need (a) the data from the microscopic process,  (b) the pre-defined CG mapping $\COP$, and  (c) the parameterized coarse-grained model. These characterize  the data and physics driven nature of the coarse-graining approach, which relates the true CG model to its digital-twin, the approximate CG model, \cite{HKK-ERCIM2018}.

However, in many situations only a small number $N$ of data is available due to the extreme cost to generate them, either experimentally or numerically. This is precisely the case in the coarse-graining of macromolecular (polymeric) systems, where the cost to generate i.i.d.\ samples increases strongly with the molecular length. For example, for polymer melts, the maximum relaxation time of entangled linear chains
scales with the cubic power of their length; for other architectures the dependence is even stronger, e.g., for star  polymers becomes exponential~\cite{doi1988theory}.
This is evident in section~\ref{sec:polyethylene}, where we derive the optimal CG force field for a polyethylene melt and the corresponding confidence intervals.

\subsection{Non-asymptotic confidence intervals}
\label{sec:Nonasymptotic_results}
There is a vast need for  statistical information about parameters in CG models, especially when the size of data is limited. Such information would provide estimates of whether those parameters are in a reasonable region, and whether they are sensitive to the data. 
Here we present two statistically rigorous non-asymptotic methods to compute standard errors and construct confidence intervals,  namely the  jackknife and the  bootstrap~\cite{wasserman2010all, EfronHastie2016}.
 These techniques are valid for the i.i.d.\ case $\Dd^{iid}_{N}$, as well as for multiple i.i.d. time-series $\Dd^{ts}_{ N_pN_t}$, but not for the correlated data of a single time-series.  We will  apply the  jackknife and bootstrap methods 
to construct  confidence bounds for the CG parameters.

\medskip
\noindent
\subsubsection{ The Jackknife}
Let us  denote $\hat\theta_N= \hat\theta^{iid}_N$
and $\hat{\theta}_{(-i)} $ the  estimators of the CG parameters, from  $\Dd_{N}=\{X_1,\ldots,X_N\}$ and with the $i$-th observation $X_i$ removed respectively, i.e., 
\begin{equation*}
    \hat{\theta}_{(-i)} = \hat{\theta}_{N-1} (X_1,\dots,X_{i-1},X_{i+1},\dots, X_n) \PERIOD
\end{equation*}
Let also $\tilde{T}_i$ be the pseudo-values
$$\tilde{T}_i = N\hat{\theta}_N - (N-1)\hat{\theta}_{(-i)} \PERIOD$$
Then, the {jackknife} variance estimation is  
\begin{equation*}
    V_{jack} = \frac{\sum_{i=1}^{N} \left(\tilde{T}_i - \frac{1}{n}\sum_{i=1}^N \tilde{T}_i \right)^2 }{N(N-1)} = \frac{N-1}{N}\sum_{i=1}^N \left(\hat{\theta}_{(-i)} - \frac{1}{N}\sum_{i=1}^N \hat{\theta}_{(-i)} \right)^2\COMMA
\end{equation*}
and the  corresponding    standard  confidence interval is 
 \begin{equation}\label{eq:CIjackknife}
CI_{jack} = \left[\hat{\theta}_N- z_{\alpha/2}\sqrt{V_{jack}} \quad , \quad \hat{\theta}_N+ z_{\alpha/2}\sqrt{V_{jack}}\right]\PERIOD
\end{equation}
The jackknife method  consistently estimates the variance of $\hat \theta_N$, though it cannot produce consistent estimates of the standard error of sample quantiles. 
The bootstrap method, on the other hand, is able to produce not only variance estimation  but also quantile estimates and thus non-symmetric confidence intervals, as discussed below. 

\medskip
\noindent
\subsubsection{ The Bootstrap}\label{sec:Bootstrap}
To construct bootstrap  confidence intervals, firstly we assume that the empirical distribution of the data $ \Dd_N$  is $\hat{F}_N$ mimicking the true distribution. 
Then,  $B$ bootstrap samples are generated i.e.,   $B$   sets of samples $X_1^*, \ldots,X_N^*$ are drawn from $\hat{F}_N$. The procedure is described by the following steps:

\begin{enumerate} 
    \item Draw $N$ new samples $X_1^*, \ldots,X_N^* \sim \hat{F}_N $,  i.e., draw $X_i^*$ randomly from $\Dd_N=\{X_1,\ldots,X_N\}$ with equal probability and  with replacement.
    \item Compute $\hat{\theta}^*$ according to the chosen estimator, e.g.,\ \eqref{eq: RE_equi_N2} for i.i.d data.  
    \item Repeat steps 1 and 2,  $B$ times to get $\hat{\theta}^*_1,\ldots, \hat{\theta}^*_B $.
    \end{enumerate}
With this procedure we construct an approximate distribution of the statistical estimator $\hat{\theta}_N$.  
There are several approaches to construct bootstrap confidence intervals, such as the standard, the pivotal, the percentile, and the bootstrap-t intervals, \cite{EfronHastie2016}, \cite{wasserman2010all}. In the current work, we estimate the standard and percentile  confidence intervals which we present next.
The bootstrap variance estimation is 
    \begin{equation}
        V_{boot} = \frac{1}{B}\sum_{i=1}^B \left(\hat{\theta}^*_i - \frac{1}{B}\sum_{b=1}^B \hat{\theta}^*_b \right)^2\COMMA
    \end{equation}
and the  bootstrap  standard  confidence interval is  
\begin{equation}\label{eq:CIboot}
    CI_{s,boot} = \left[\hat{\theta}_N - z_{\alpha/2}\sqrt{V_{boot}} \quad, \quad \hat{\theta}_N + z_{\alpha/2}\sqrt{V_{boot}}\right]\PERIOD
\end{equation}
The bootstrap percentile   confidence interval is given directly from the bootstrap distribution of the statistical estimator $\hat{\theta}$, and is 
\begin{equation} 
    CI_{p,boot} = \left[  \hat{\theta}^*_{\alpha/2} \quad , \quad \hat{\theta}^*_{1-\alpha/2} \right]\COMMA
\end{equation}
where $ \hat{\theta}^*_{\alpha/2}$ is the $\alpha/2$ percentile of $\hat{\theta}^*_1,\ldots,\hat{\theta}^*_B$.

The percentile bootstrap intervals are not accurate though if {\  bootstrap estimates} are highly biased and skewed. {\  Highly biased bootstrap estimates can not represent the true distribution, and highly skewed bootstrap estimates concentrate more on one side of the distribution and thus has a long tail on the other side.} There are improved intervals but more complicated, such as the bias-corrected and accelerated bootstrap (BCa). 
BCa corrects for bias and skewness in the distribution of bootstrap estimates and improves the coverage accuracy of standard intervals from first order to second order, thus provides reasonably narrow intervals but is complicated  to implement \cite{diciccio1996bootstrap}.  
Both techniques, the jackknife,  and bootstrap  use part of the data to get several estimators for the parameters and then use those estimators to construct confidence intervals. 
Bootstrap can have higher computational cost if the number of bootstrap samples ($B$) is larger than the number of data $(N)$, which is often the case. Thus, the jackknife method is less computationally expensive  but is less general. 
As reported in literature, \cite{EfronHastie2016}, empirical evidence suggests that $B=200$ is usually sufficient for evaluating  the bootstrap estimate of the standard error, but larger values  should be considered for the bootstrap confidence intervals.

\subsection{Asymptotic confidence intervals}
\label{sec:asymptotic_results}
{\noindent \bf I. Independent, identically distributed data.}
Recall that for   i.i.d. data, the RE optimal parameter is 
\begin{equation*}
\hat{\theta}^{iid}_N =  \underset{\theta\in\Theta}\argmax\frac{1}{N}\sum_{i=1}^N \log \CGmu(\COP X_i)\PERIOD
\end{equation*}
Note that  $\hat{\theta}^{iid}_N$ is similar to the maximum likelihood estimator, \cite{EfronHastie2016}. The difference is that the maximum likelihood estimator assumes that $\COP X_i$ has measure $\CGmu$, while this assumption is not true here. Thus, a confidence interval directly obtained from maximum likelihood likelihood estimator is inaccurate. We resolve this issue by constructing a slightly different confidence interval along with two versions of the Fisher information:
\begin{align*}
\hat{\Ff}_1 &= - \frac{1}{N} \sum_{i=1}^N \nabla_\theta^2 \log \CGmu(\COP X_i)|_{\theta = \hat{\theta}^{iid}_N} \COMMA\\
\hat{\Ff}_2 &=  \frac{1}{N} \sum_{i=1}^N (\nabla_\theta \log \CGmu(\COP X_i))(\nabla_\theta \log \CGmu(\COP X_i))^T|_{\theta = \hat{\theta}^{iid}_N} \PERIOD 
\end{align*}
These two Fisher information matrices are close if $\COP X_i$ has distribution $\CGmu$ and under the assumption that $N$ is large enough (see Corollary 1.1.3 in supplementary information). Whether $\hat{\Ff_1}$ is close to $\hat{\Ff_2}$ could be an indirect indicator of whether the parameterized CG distribution $\CGmu$ can mimic the distribution of $\COP X_i$. That is, $\hat{\Ff_1}$ is close to $\hat{\Ff_2}$ indicates that $\COP X_i$ has a measure  close to $\mu^{\theta^{iid}_N}$, which means that the parameterized family of $\mu^\theta$ can reconstruct the distribution of $\COP X_i$. But the inverse might not be true in general.
The asymptotic theory  provides the $1-\alpha$ confidence interval for $\theta$ in the equilibrium model (see Theorem 1.1.1 in supplementary information), which  is 
\begin{equation}
    \label{eq: CI_equil}
  CI_{iid} = \left[\hat{\theta}^{iid}_N - \frac{z_{\alpha/2}}{\sqrt{N}}\sqrt{\hat{\Ff}^{-T}_1\hat{\Ff}_2\hat{\Ff}^{-1}_1 }\quad, \quad \hat{\theta}^{iid}_N + \frac{z_{\alpha/2}}{\sqrt{N}}\sqrt{\hat{\Ff}^{-T}_1\hat{\Ff}_2\hat{\Ff}^{-1}_1 } \right]\PERIOD
\end{equation}

\medskip
\noindent
{\bf II. Time-series data.} 
In the  path-space models, the result is similar to the one in the equilibrium models where $\CGmu$ is replaced by transition probability density $\bar{q}^\theta$ and a more complicated Fisher information. Recall that
$$\hat{\theta}^{ts}_N =\underset{\theta\in\Theta}\argmax \frac{1}{N-1}\sum_{i=1}^{N-1}\log\bar{ q}^\theta(\COP X_i, \COP X_{i+1})\PERIOD $$
The first Fisher information matrix 
\begin{equation*}
    \label{eq: I1_fisher}
    \hat{I}_1 = - \frac{1}{N-1} \sum_{i=1}^{N-1} \nabla_\theta^2 \log \bar{q}^\theta(\COP X_i, \COP X_{i+1})|_{\theta = \hat{\theta}^{ts}_N} ,
\end{equation*}
while the second Fisher information matrix is given by  the Markov chain central limit theorem and is estimated by a batch means estimator, \cite{jones2004markov, jones2006fixed}
\begin{equation*}
    \label{eq: I2_fisher}
    \hat{I}_{2,BM} = \frac{b}{a-1} \sum_{j=1}^a(\bar{Y}_j - \bar{Y})\COMMA
\end{equation*}
where 
$\bar{Y}_j = \frac{1}{b} \sum_{i=(j-1)b+1}^{jb-1} \log \bar{q}^{\hat{\theta}^{ts}_N}(\COP X_i, \COP X_{i+1})$, $\bar{Y} = \frac{1}{N-1} \sum_{i=1}^{N-1} \log \bar{q}^{\hat{\theta}^{ts}_N}(\COP X_i, \COP X_{i+1})$ and $N=ab$.
Thus the $1-\alpha$ confidence interval for $\theta$ in the path-space models is
\begin{equation}
    \label{eq: CI_path_space}
  CI_{ts} =  \left[\hat{\theta}^{ts}_N - \frac{z_{\alpha/2}}{\sqrt{N-1}}\sqrt{\hat{I}^{-T}_1\hat{I}_{2,BM}\hat{I}^{-1}_1 }\quad, \quad \hat{\theta}^{ts}_N + \frac{z_{\alpha/2}}{\sqrt{N-1}}\sqrt{\hat{I}^{-T}_1\hat{I}_{2,BM}\hat{I}^{-1}_1 } \right] \PERIOD
\end{equation}
In the supplementary information accompanying this work, we present the mathematical justification for the confidence intervals provided here. 

\subsection{Estimating quantities of interest}
Thus far, we have estimated the coarse model parameters and assessed their accuracy. Now, our interest is in finding an  estimator and their uncertainty for  quantities of interest (QoI),
\begin{equation}\label{eq:QoI}
    \tau= g(\theta),
\end{equation}
which  are functions of $\theta$. 
Given a data set $\Dd^{iid}_N$,   the invariance principle ensures that  the MLE estimator of the QoI $\tau$,  is
\begin{equation}\label{eq:QoIestimator}
    \hat\tau   =  \hat\tau(X_1,\dots, X_N) = g\left(\hat\theta_N(X_1,\dots,X_N)\right)\COMMA
\end{equation}
where $\hat\theta_N $ is the set of estimated model parameters, \cite{casella2002statistical,wasserman2010all}.

The delta method provides asymptotic standard errors for $\hat \tau$
\begin{equation*}
    \hat{se}(\hat \tau) = \sqrt{(\hat \nabla g)^{tr}\hat J_N \hat\nabla g}\COMMA
\end{equation*}
where $\hat J_N = \hat{\Ff}^{-1}_1$, and $ \hat \nabla g$ is  $ \nabla g = (\frac{\partial{g}}{\partial{\theta_1}}, \dots,\frac{ \partial{g}}{\partial{\theta_K}})^{tr}$ evaluated at $ \theta= (\hat \theta_N)$. {\  Here $(\cdot)^{tr} $ denotes matrix transpose.}

The non-parametric resampling  methods, jackknife and bootstrap described in \ref{sec:Nonasymptotic_results}, apply  straightforwardly on 
$\hat\tau(X_1,\dots,X_N)$ through $\theta_N$ and the use of the invariance property. Indeed, the percentile bootstrap CI is 
\begin{equation}\label{eq:CI-QoI}
    CI^{qoi}_{p,boot} = \left[  \hat{\tau}^*_{\alpha/2} \quad , \quad \hat{\tau}^*_{1-\alpha/2} \right]\COMMA
\end{equation}
where $ \hat{\tau}^*_{\alpha/2}$ is the $\alpha/2$ percentile of $\hat{\tau}^*_1,\ldots,\hat{\tau}^*_B$, and $ \hat{\tau}^*= \hat\tau(X^*_1,\dots,X^*_N)$, for a bootstrap sample $(X^*_1,\dots,X^*_N)$, described in section   \ref{sec:Bootstrap}.

 \begin{remark}  {\normalfont 
 Bayesian analysis  can provide a range of information about the model through the posterior   probability distribution of the  model parameters. Credible intervals are  thus obtained from the posterior.  However, the need for prior information for the parameters is a drawback,  since it is often not available.   Of course, there exist techniques to overcome this, such as uninformative priors and  hyper-parameters, but still some prior knowledge is necessary.
 In contrast, frequentist parametric and non-parametric confidence intervals require no prior information.
  Also, the non-parametric, uninformative posterior distribution  can be approximately represented by a  bootstrap distribution which may be   much easier to obtain, \cite{friedman2001elements}. }
 \end{remark}

\section{Test-bed 1: Two-scale diffusion processes}\label{sec:two-scale}
In this section, we benchmark our methodology by considering  a two-dimensional  two-scale diffusion process. This diffusion process is a good, relatively simple, example that allows us to  (a) test and compare the accuracy of the estimated parameters by the different optimization methods, (b) provide the corresponding confidence intervals, and (c) validate the results  since we  know the effective dynamics analytically.

The two-scale diffusion process consists of a slow variable $X_t^\epsilon \in \R$ and a fast variable $Y_t^\epsilon \in \R$, for $t\ge 0 $, which satisfy the system of stochastic differential equations, 
\begin{eqnarray}\label{eq:two-scale}
dX_t^\epsilon &=& -Y_t^\epsilon dt + dW_t^1  \COMMA\\
dY_t^\epsilon &=& -\epsilon^{-1}(Y_t^\epsilon - X_t^\epsilon) dt + \epsilon^{-0.5}dW_t^2 \COMMA \nonumber
\end{eqnarray}
for $\epsilon >0$,  where $dW_t^1$ and $dW_t^2$ are independent standard Wiener processes, \cite{Oksendal2003}.
As  $\epsilon \to 0$, $X_t^\epsilon $ follows the effective process $\hat{X}_t$  which  is  proved to satisfy,  by the averaging principle, \cite{freidlin2012random} 
\begin{equation}\label{eq:eff.model}
d{ \hat{X} }_t = -  \hat{X}_t dt + d\hat{W}_t \PERIOD
\end{equation}
 Note that  the effective potential driving the  process $\hat{X}_t$  is the harmonic potential $U(x) = \frac12 x^2$, depicted in Figure \ref{fig:two_scale_potential_plot}. 

\begin{figure}[htbp]
    \centering
    \includegraphics[width = 0.6\textwidth]{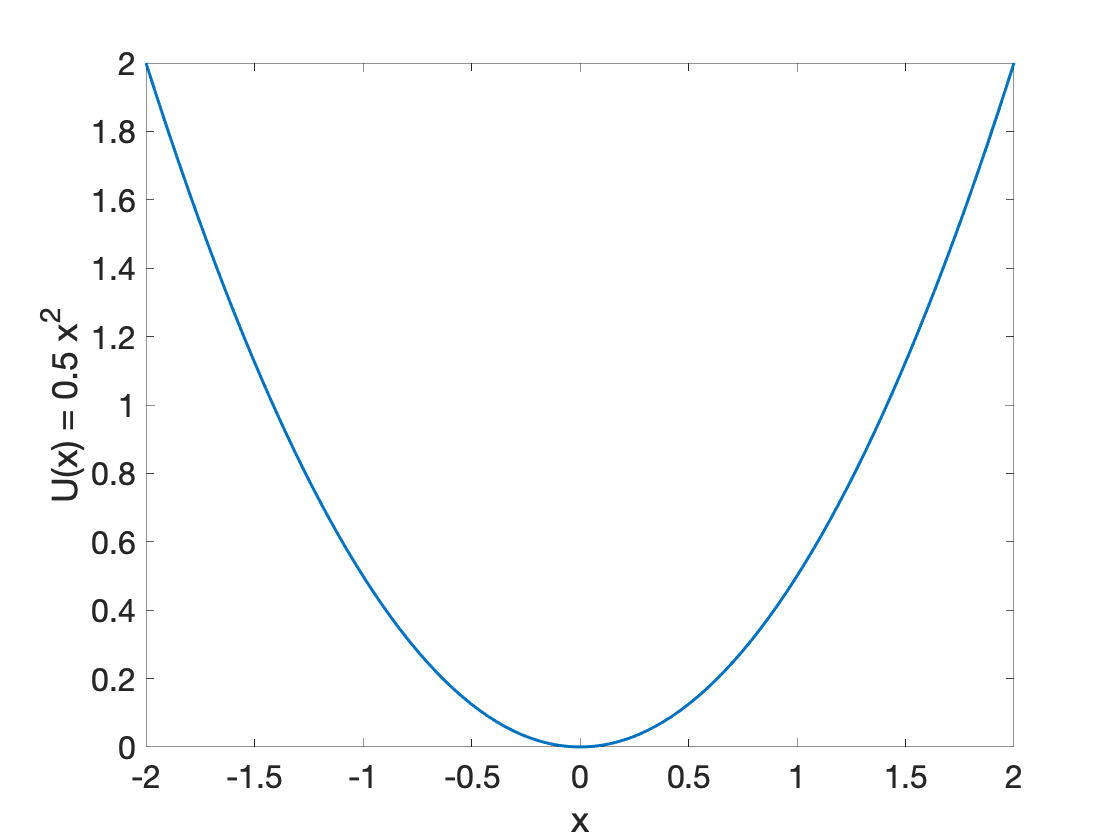}
    \caption{The effective  potential for the coarse process $\hat X_t$ is the harmonic interaction potential.}
    \label{fig:two_scale_potential_plot}
\end{figure}

We are interested in constructing a coarse-grained model for the slow variable $X_t^\epsilon$ and for a finite value of $\epsilon > 0$. Thus,  the CG map is $\Pi: (X_t^\epsilon, Y_t^\epsilon) \rightarrow X_t^\epsilon$.  The CG process $X_t^{CG}$, approximating 
 $\Pi (X_t^\epsilon, Y_t^\epsilon)$, is assumed to satisfy 
\begin{equation}
dX_t^{CG} = a(X_t^{CG};\theta) dt + dW_t \COMMA\label{eq:CGmodel}
\end{equation}
where $W_t$ is a standard Wiener process. 
To approximate the coarse-grained dynamics we propose an effective   drift
 \begin{equation}\label{eq:CGdrift}
 a(x;\theta) = \sum_{k=1}^K \theta_k x^{k-1} \COMMA
 \end{equation}
that is an approximation over the set of polynomials $\{1, x, \dots, x^{K-1} \}$. In the example presented   we choose $K=5$.
Note that in this example, we expect that the estimated parameters of the coarse grained model are close to $\theta^* = [0, -1, 0, 0, 0]$, due to the known analytical form of the effective dynamics for the process, \eqref{eq:eff.model}.
We present next a comparison between \eqref{eq:eff.model} and \eqref{eq:CGmodel} by investigating the uncertainty of parameters through confidence intervals.

\bigskip
{\noindent \bf I. Independent, identically distributed data.} Firstly, we investigate the results with i.i.d data $\Dd_N=\{ (X_i, Y_i)\}_{i=1}^N$, corresponding to the invariant density of \eqref{eq:two-scale}.  We omit the notation of $\epsilon$-dependence for notation simplicity. 
We minimize the RE  between the invariant densities of $X_t^{\epsilon}$ and of $X_t^{CG}$. The invariant density of  $X_t^{CG}$ is 
\begin{equation*}
   \CGmu(x) = \frac{1}{Z^{\theta}} e^{-2 \bar U(x;\theta)},
\end{equation*}
where $\bar U(x;\theta) $ is defined by
$a(x;\theta) = - \frac{d}{dx} \bar U(x;\theta) $ and $ Z^{\theta} = \int e^{-2 \bar U(x;\theta)} dx$. The optimal parameter   is   given by
\begin{eqnarray}\label{eq:tsRE}
   {\theta}^{iid, re}  &=& \underset{\theta\in\Theta}\argmax \EXPECT_{\mu}[  \log \CGmu \circ \COP ] \nonumber\\
             &=& \underset{\theta\in\Theta}\argmax \left\{-2 \EXPECT_{\mu}[   \bar U(\cdot;\theta) ] - \log Z^{\theta} \right\}\PERIOD
\end{eqnarray}
The  RE estimator is  described in eq. \eqref{eq: RE_equi_N2}, while the optimization method in section 2.3 of the supplementary information.  We also apply the FM method for which the  optimal parameter  estimator is
\begin{equation}\label{eq:FMtheta-twoscale}
    \hat{\theta}^{iid, fm}_{N} = \underset{\theta\in\Theta}\argmin \frac{1}{N} \sum_{i=1}^N |Y_i + a(X_i;\theta)]^2\PERIOD
\end{equation} 

\bigskip
{\noindent \bf II. Time-series data.} 
{\  
Secondly, we estimate the parameter for   time-series data $\Dd_{N_pN_t} = \{ (X^k_i, Y^k_i)\}_{k=1,i=1}^{N_p,N_t}$. 
The approximate transition probability density of \eqref{eq:CGmodel}  is  
\begin{equation} 
\label{eq:transition_prob}
    q_h^\theta (X_i,X_{i+1}) =\frac{1}{z} e^{-\frac{1}{2h^2}|X_{i+1} - X_i - a(X_i;\theta)h|^2},
\end{equation}
where $z$ is a normalized factor independent of $\theta$,  $h$ is the discretization time step for the Euler-Maruyama approximation of $X_t^{CG}$ with corresponding transition density $q_h^\theta (x,x') $.
For multiple time-series data $\Dd_{N_pN_t}$, the appropriate estimator is the path-space RE (PSRE). The optimal estimate  given by the minimization problem  \eqref{eq: RE_path_N2} is
\begin{equation}\label{eq:PSREtheta-twoscale}
    \hat{\theta}^{ts, psre}_{N} = \underset{\theta\in\Theta}\argmin  \frac{1}{N_p} \sum_{k=1}^{N_p}\frac{1}{N_t-1}\sum_{i=1}^{N_t-1} |X^{k}_{i+1} - X^{k}_{i} - a(X^{k}_i;\theta) h|^2\PERIOD
\end{equation}
The corresponding  RER estimator, valid for a long, stationary time series, is
\begin{equation}\label{eq:RERtheta-twoscale}
   \hat{\theta}^{ts,rer}_{N} = \underset{\theta\in\Theta}\argmin  \frac{1}{N_t-1}\sum_{i=1}^{N_t-1} |X_{i+1} - X_{i} - a(X_i;\theta) h|^2\PERIOD
\end{equation}
  as described in section \ref{sec:Background}.
Moreover, in the equilibrium region the RER minimization is equivalent to the FM minimization. We describe the proof in section 2.2 of the supplementary information.}

\bigskip
{\noindent \bf III. Asymptotic results.} We begin with  reporting the  results for a 'large' sample size and the corresponding asymptotic confidence intervals as described in section \ref{sec:CI}.
In all the numerical tests we fix $\epsilon= 0.005$ and $h=0.01$.
For the RE estimation, \eqref{eq:tsRE}, we applied the Newton-Raphson (NR) algorithm. To estimate the normalization parameter $Z^{\theta}$ which changes at each NR iteration, we  generated $5,000$ CG i.i.d.\ samples from $ \CGmu(x)$ 
 with a Hamiltonian Monte Carlo sampler. 
The NR algorithm converged after $20$ iterations,  with initial value of $\theta$ near $\theta^*$. The details of the NR method are described in the supplementary information. 
For the FM and RER estimation we solve  the corresponding least squares problem described in \eqref{eq:FMtheta-twoscale} and \eqref{eq:RERtheta-twoscale}.

Firstly, we generate two sets of samples: (a) $\Dd_N$,  i.i.d.\ samples from the invariant distribution of the exact process with   $N=500$, and (b) $\Dd_{N_t}$ one time-series samples  with $N_t=50,000$. 
 Note that we have experimented with various values of $N$ and $N_t$.
  We chose to  report the $N=500$ and $N_t=50,000$ so that the optimization methods show variance estimates of the same order.

Figures \ref{fig:two-scale_FM} and  \ref{fig:two-scale_RE} show the results for the FM  and the relative entropy minimization with $N=500$ i.i.d.\ data  respectively. 
In both figures, the right hand side  depicts the invariant probability density function of the estimated coarse process $X_t^{CG}$ and  of the exact process $X_t^{\epsilon}$. 
The left hand side figure presents the estimated parameters and the corresponding $95\%$ asymptotic standard confidence interval, defined in \eqref{eq: CI_equil}.
Similarly, figure \ref{fig:two-scale_TS} depicts the parameter estimates with the asymptotic CI and the  invariant probability density functions of the estimated and the exact process with one correlated time-series with time step $h=0.01$ and size $N_t=50,000$.   Also, in table~\ref{tab:FM vs RE} we present the point parameter estimates, the asymptotic variance and the computational cost for the RE, the FM and the RER optimization methods. 
 
\begin{table}[htbp]
      \centering
      \resizebox{\columnwidth}{!}{\begin{tabular}{|c|c|c|c|c|c|}
      \hline
            Method &  $\hat{\theta}$ & $ \hat{\sigma}^2$ & CI & Number of samples & CPU time (sec)\\
            \hline
            \hline  
           FM & $\left[\begin{matrix}  \ 0.0236\\ -1.0240 \\ \ 0.0039\\ -0.0012 \\ -0.0338   \end{matrix}\right] $  &
            $\left[\begin{matrix} 0.0021 \\ 0.0063\\  0.0138\\  0.0023\\ 0.0019   \end{matrix} \right]$ & $\begin{bmatrix} -0.0663  &  0.1135\\
   -1.1790 &  -0.8689\\
   -0.2265  &  0.2342\\
   -0.0947  &  0.0922\\
   -0.1189  &  0.0513 \end{bmatrix}$ & 500 & 0.02\\
          \hline  
          RE  &
          $\left[\begin{matrix}   0.0247\\ -0.9827\\ 0.0260 \\ -0.0640 \\ 0.0001  \end{matrix}\right]  $ & 
          $\left[\begin{matrix}  0.0046\\ 0.0261\\ 0.0439 \\ 0.0145 \\ 0.0048 \end{matrix} \right] $ & $\begin{bmatrix}-0.1390 & 0.1151\\
   -1.2287 &  -0.7505 \\
   -0.3005 &   0.3572 \\
   -0.1345 &   0.1509 \\
   -0.0913 &   0.0572 \end{bmatrix}$  & 500 & 5.32\\
          \hline  
      RER &  	$\begin{bmatrix}   0.0746 \\   -0.9805 \\   0.0483 \\ -0.0313\\   -0.0255 \end{bmatrix}$  &
          $\left[\begin{matrix}  0.0040 \\  0.0112   \\ 0.0158  \\  0.0040 \\   0.0013 \end{matrix} \right] $ & $ \begin{bmatrix}  -0.0491  &  0.1983 \\
   -1.1876 &  -0.7733 \\
   -0.1979 &   0.2944 \\
   -0.1550 &   0.0925 \\
   -0.0965 &   0.0455 \end{bmatrix}$	 & 50,000 &  0.19\\     
          \hline 
      \end{tabular}}
      \caption{Parameter and asymptotic variance estimates for 'large sample' sets for the two-scale diffusion benchmarking problem. The exact  parameters for $\epsilon \to 0$ are  $\theta^*= [0, -1, 0, 0, 0]$.}
    \label{tab:FM vs RE}
\end{table}

All methods approximate well the expected $\theta^*=  [0, -1, 0, 0, 0]$, corresponding to the asymptotic model as $\epsilon \to 0$, as $ \theta^*$ falls into the confidence interval for all methods.  The RE method presents larger asymptotic variance compared to the FM. Moreover, the RE has higher computational cost than the FM. Its benefit though is the better estimation of the 'true' probability density, which in return will give better estimations of  quantities of interest given as expected values. We can notice an excellent match of the CG invariant density with RE estimation  to the exact one, while there is a small difference with the FM and RER estimation. We attribute this difference to the fact the  RE matches directly the probability densities while the FM and RER match the drift terms (i.e.\ the force).  

{\  Next, we comment on the FM and the  RER  methods} from  the point of view of comparing an i.i.d.\ method and a path-space method. 
The results show that we can achieve estimates with the same order of  magnitude with the FM with i.i.d. data and the RER with correlated data if we use about hundred times more data in the later.  This naturally increases the computational cost of the optimization problem.
However, there is a  computational benefit on the generation of the samples, since for the path-space samples (time series) we do not need to reject any generated data. On the contrary, to generate the i.i.d.\ observations we have to reject a large number of simulated data. 
This is extremely insufficient in high dimensional applications, as in long polymer chains discussed in the next section. Therefore the path-space methods  can be  advantageous when we have to generate high-dimensional samples. 
 
On the other hand,  the ergodic theory  ensures that  we can apply the FM  method  for  correlated time series data, as long as the time-series is long enough. 
Therefore, our next numerical study  examines the validity of the FM for  short and long correlated  time-series. 
{\  
That is, we use  the FM estimator for correlated data, and thus introduce the estimator for time series data $\Dd_{N_pN_t}$
\begin{equation}\label{eq:RERtheta2-twoscale}
    \hat{\theta}^{ts,fm}_{N}  = \underset{\theta\in\Theta}\argmin   \frac{1}{N_p} \sum_{k=1}^{N_p}\frac{1}{N_t}\sum_{i=1}^{N_t} |Y^k_{i} + a(X^k_i;\theta) |^2\PERIOD
\end{equation}
}
Table \ref{tab:RER_multiple_estimators} reports the point estimates for time correlated samples, resulting from the FM estimator \eqref{eq:RERtheta2-twoscale} and the PSRE (and RER) estimator \eqref{eq:RERtheta-twoscale}. The table with the estimates for all parameters is provided the supplementary information.
We  observe that the FM point estimates 
improve as the size of the  time-series  increases, as expected.
Comparing the $N_t=50,000$ for the FM and the $N_p=100, N_t=500$ cases for which the number of samples is the same, \ 
they both yield estimates close to the truth. 
We notice that the FM estimator gives slighlty better estimates  than the  PSRE estimator for short time trajectories, e.g., $N_t =500$, $N_t =5,000$. 
We ascribe  this difference to  that the first uses all fine-scale observations $(X_i, Y_i)$, while the latter only uses the partial observations $(X_i)$. 
To have thus reliable  PSRE estimates, we need to guarantee either the trajectory is long enough or the number of trajectories is large enough. 
Important to note is that for the RER, we can estimate the asymptotic CIs  while  the FM CIs are no longer valid.
}

\begin{figure}[htbp]
\begin{minipage}{0.48\textwidth}
\includegraphics[width=\textwidth]{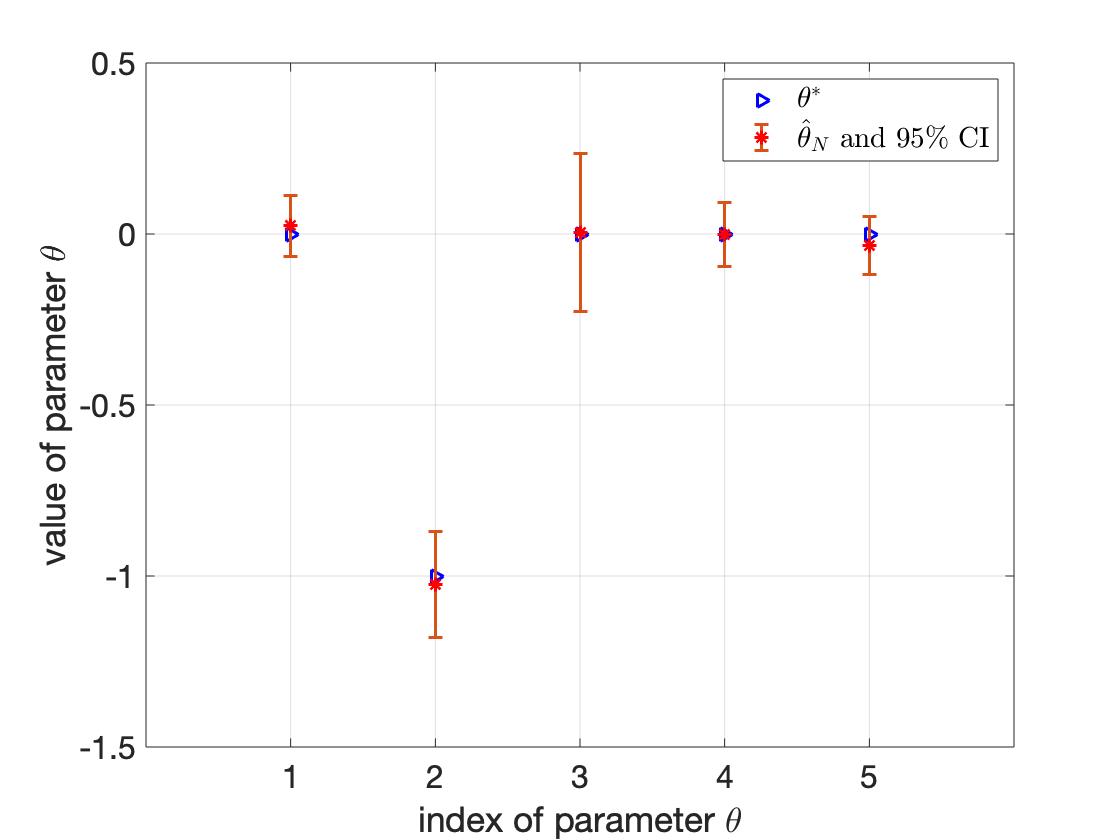}
\end{minipage}
\begin{minipage}{0.48\textwidth}
\includegraphics[width=\textwidth]{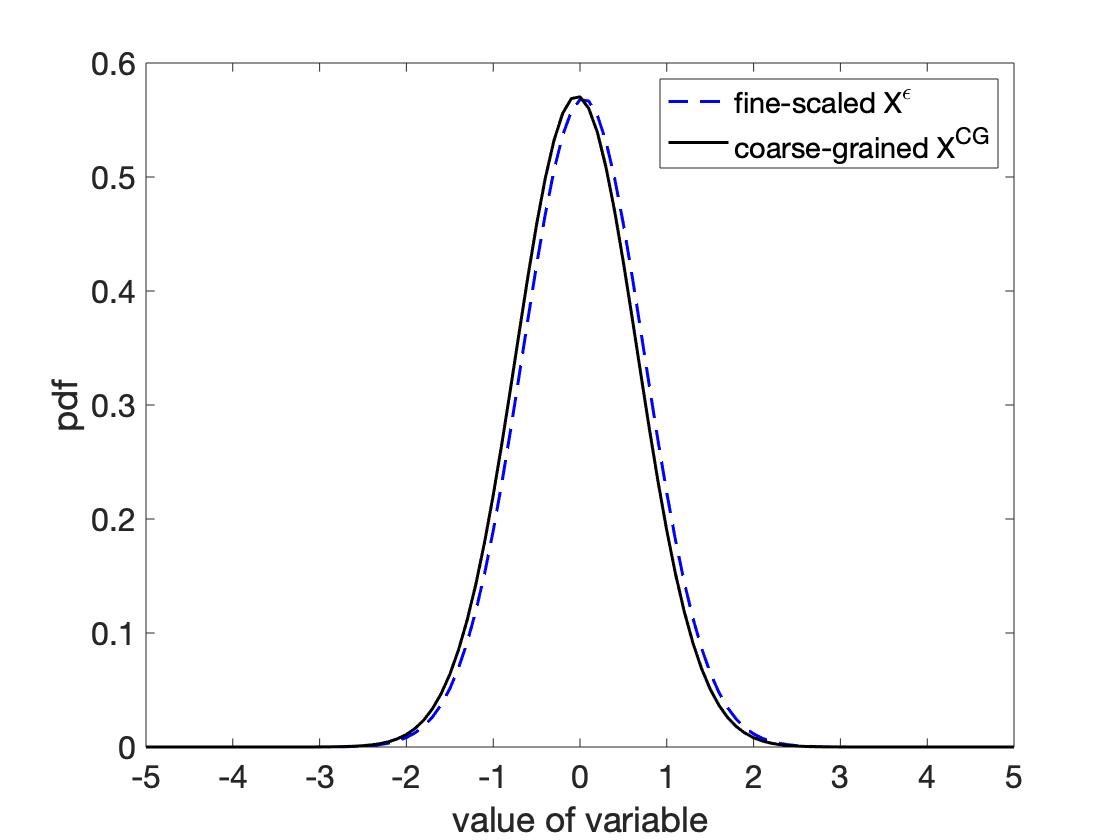}
\end{minipage}
\caption{The estimator and $95\%$ confidence intervals for FM method is shown on the left. The expected values [0,-1,0,0,0] are located inside the intervals. A distribution of constructed CG variable by using $\hat{\theta}_N$ and a comparison with fine-scaled $X^\epsilon$ is shown on the right.}
\label{fig:two-scale_FM}
\end{figure}

\begin{figure}[htbp]
\begin{minipage}{0.48\textwidth}
\includegraphics[width=\textwidth]{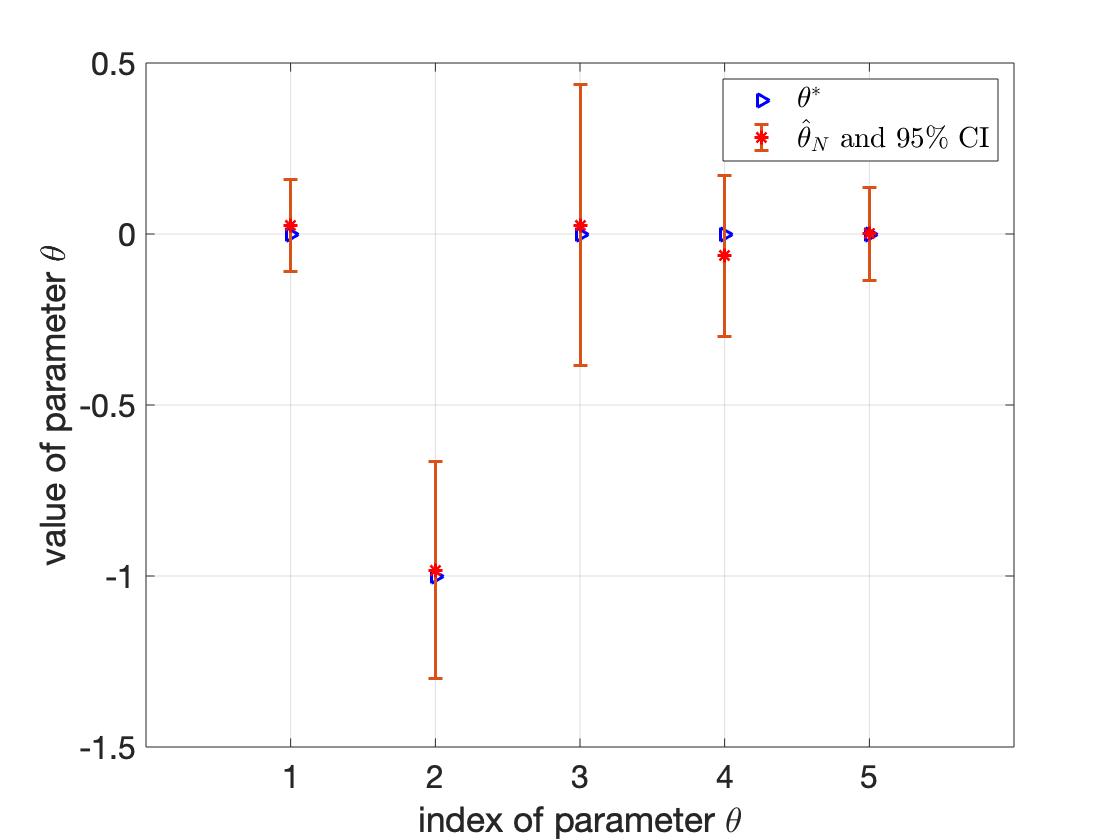}
\end{minipage}
\begin{minipage}{0.48\textwidth}
\includegraphics[width=\textwidth]{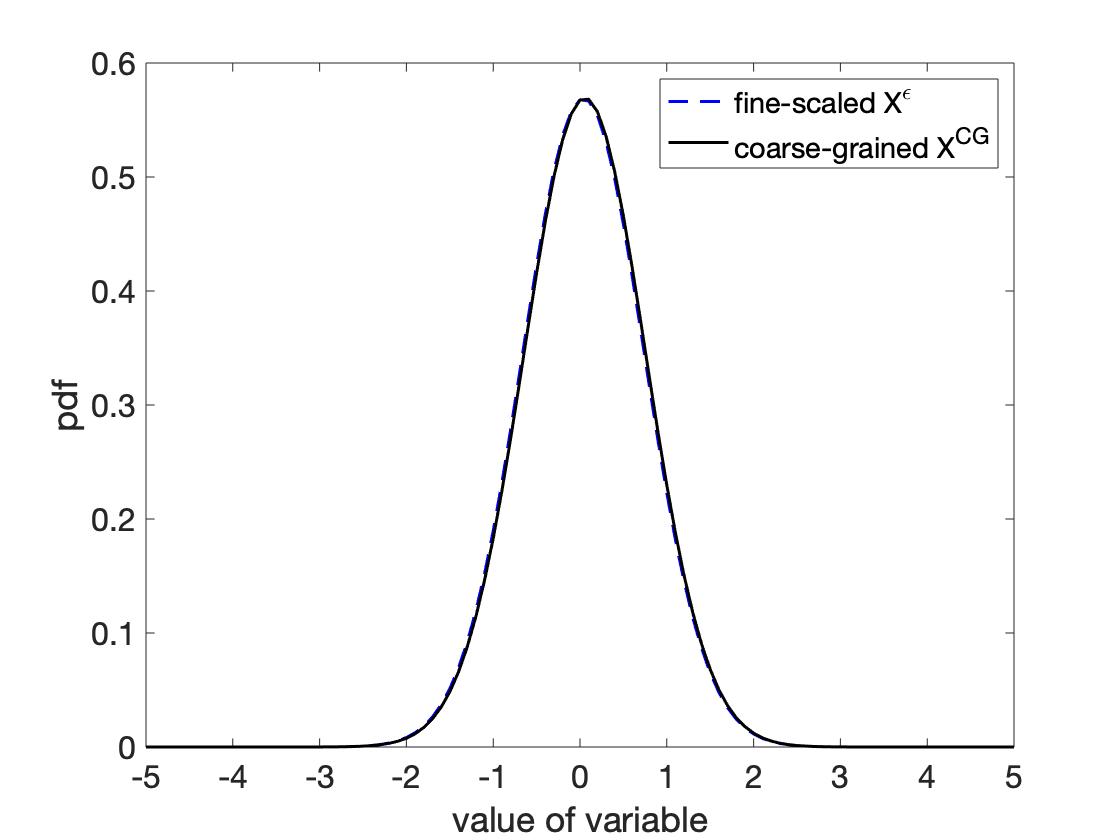}
\end{minipage}
\caption{The left plot shows the results for Relative Entropy minimization. The density of CG variable matches  well the one of $X^\epsilon$. }
\label{fig:two-scale_RE}
\end{figure}

\begin{figure}[htbp]
\begin{minipage}{0.48\textwidth}
\includegraphics[width=\textwidth]{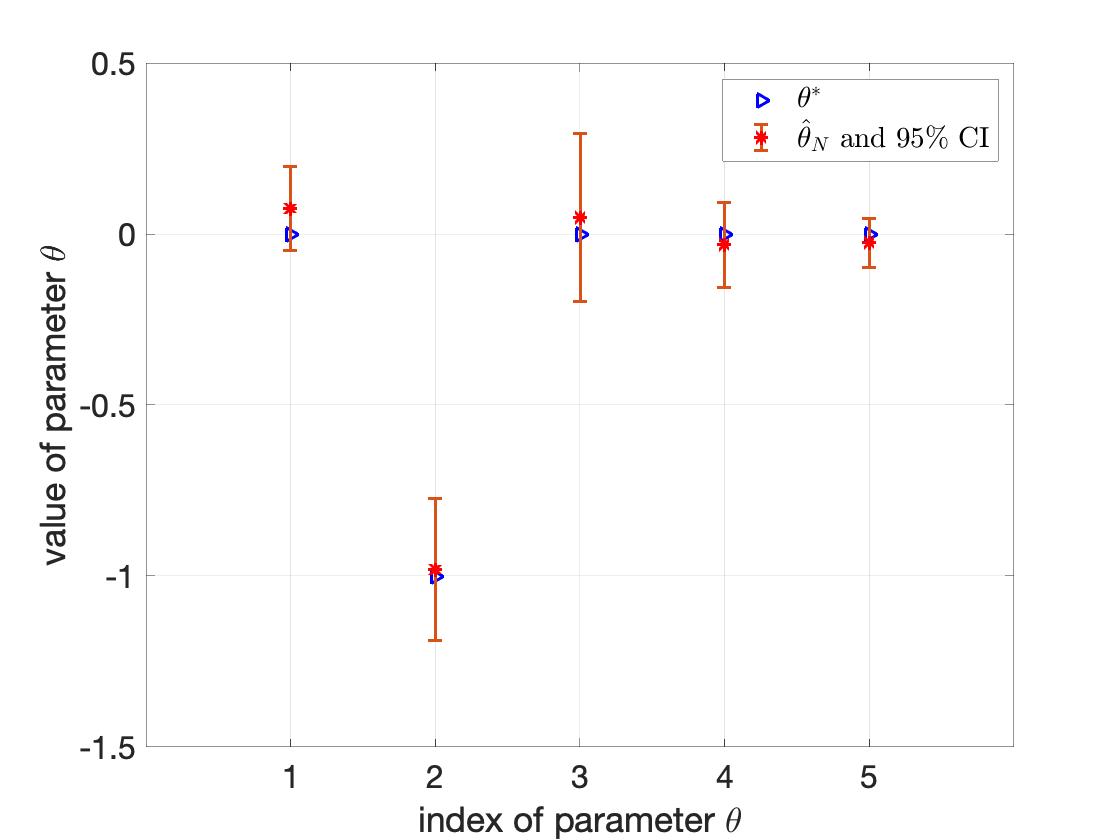}
\end{minipage}
\begin{minipage}{0.48\textwidth}
\includegraphics[width=\textwidth]{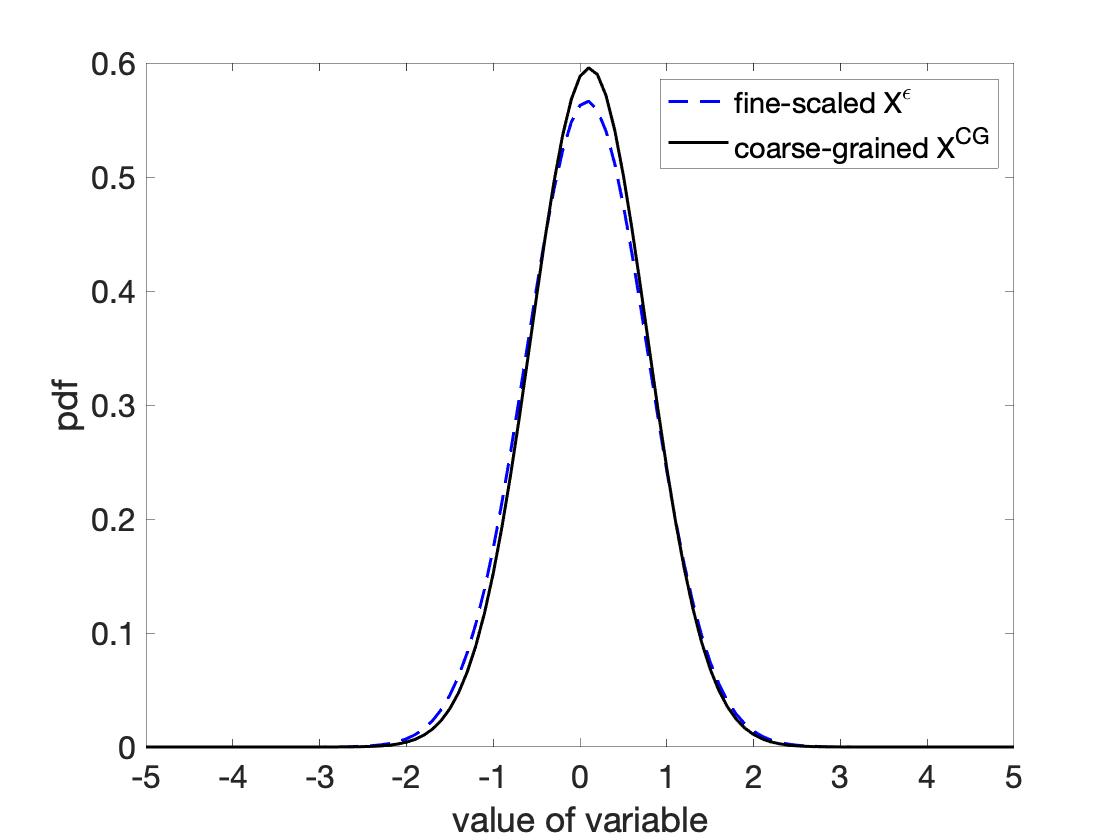}
\end{minipage}
\caption{Path-space optimization with the RER method.}
\label{fig:two-scale_TS}
\end{figure}

\begin{table}[htbp]
\begin{minipage}[t]{0.95\linewidth}
    \centering
    \begin{tabular}{|c|c|c|c|}
    \hline
   $N_p$ & $N_t$ & $\hat{\theta}^{ts,fm}_{N}$ &  $\hat{\theta}^{ts,psre}_{N}$ \\\hline
   1 & 500 &  $ -0.4328   $ & $ -6.0944 $\\\hline
     1  & 5,000 &$  -0.9485  $  & $   -1.1251  $ \\\hline
     1 & 50,000 &$   -0.9734 $ & $ -0.9805  $\\\hline
      10 & 500 &$    -0.8960  $ &$    -1.3288  $\\\hline
     100 & 500 &$  -0.9728   $ & $ -0.7976  $\\\hline
     100 & 5,000 &  $-0.9777$ & $   -0.9673   $ \\\hline
    \end{tabular}
    \caption{{\  Point estimates  for  the $\theta_2=-1$, with  correlated time-series data, and with the different estimators \eqref{eq:RERtheta2-twoscale} and \eqref{eq:PSREtheta-twoscale} for the FM and PSRE. Note that the 
    point estimates by FM \eqref{eq:RERtheta2-twoscale}  has the same form with  FM estimator.}}
    \label{tab:RER_multiple_estimators}
\end{minipage}
\end{table}

\bigskip
{\noindent \bf IV. Non-asymptotic results.} For a 'small' number of samples we  test  the case (a) of i.i.d.  samples with $N=50, 100$, $200$, and $500$  and (b) of multiple i.i.d. trajectories  consisting of correlated time-series data, $N_p=1,  N_t =50,000, $ and $ N_p=100, N_t = 500$. In all results presented next, the number of bootstrap samples is $B= 200$. 

We compare the RE and FM estimates and  confidence intervals for the sets of $N=50$, $N=200$ and $N=500$, see table \ref{tab:FM_RE_nonparametric_theta2}.
For better readability we  report results  only for the parameter $\theta_2$. The complete table is given  the supplementary information.  Moreover,  in the supplementary information,  we report the parameter estimates  and  the asymptotic, jackknife, and bootstrap variance estimates obtained with the FM method for $N=50, 100$, and $200$. 

First, we notice that the jackknife estimate for the RE gives an inconsistent value of variance, thus making the confidence intervals too wide and useless. This is a common issue for the jackknife method with leaving one observation out each time, especially for non-smooth estimators. This well-known deficiency can be rectified by using a more general jackknife with leaving $d >1$ observations out, \cite{shao1989general}, but with an extremely large computational cost, as it is common to choose $d = \sqrt{N}$. For example,  with  sample size $N=50$ the jackknife with leaving $d=\sqrt{N}\approx 7$ observations out we need to compute $10^9$ jackknife estimates  which is impracticable even for this toy example. 

\begin{table}[htbp]
    \centering
    \resizebox{\columnwidth}{!}{\begin{tabular}{|c|c|c|c|c|c|c|c|}
    \hline
     N& & Asymptotic with FM & Jackknife with FM & Bootstrap with FM & Asymptotic with RE & Jackknife with RE & Bootstrap with RE\\\hline
     \multirow{3}{*}{50} & $\hat{\theta}_2$ & \multicolumn{3}{c|}{ $ -0.8575 $ } &  \multicolumn{3}{c|}{$  -0.8720 $}\\\cline{2-8}
     &$\hat{\sigma}^2$ & 0.0654 & 0.0968 & 0.0982 & 0.0156 & 0.5296 & 0.0099 \\\cline{2-8}
     & CI & $\begin{bmatrix}    -1.3589 &  -0.3562  \end{bmatrix}$ & $\begin{bmatrix}   -1.4674 &  -0.2477  \end{bmatrix}$  & $ \begin{bmatrix} -1.4717 &  -0.2433 \end{bmatrix}$ & $\begin{bmatrix}   -1.1165 &  -0.6275  \end{bmatrix}$ & $\begin{bmatrix}   -2.2984 &   0.5543  \end{bmatrix}$ & $\begin{bmatrix}    -1.0670 &  -0.6771  \end{bmatrix}$ \\\hline
     \multirow{3}{*}{200} & $\hat{\theta}_2$ & \multicolumn{3}{c|}{ $-0.9702$ } & \multicolumn{3}{c|}{ $-0.9759$ }\\\cline{2-8}
     &$\hat{\sigma}^2$ & 0.0167 & 0.0200 & 0.0197 & 0.0206 & 1.0325 & 0.0081 \\\cline{2-8}
     &CI & $\begin{bmatrix}   -1.2233 &  -0.7172 \end{bmatrix}$ & $\begin{bmatrix}
	-1.2476 &  -0.6929 \end{bmatrix}$ & $\begin{bmatrix} 	
	-1.2452 &  -0.6953  \end{bmatrix}$ & $\begin{bmatrix}
    -1.2574 &  -0.6945  \end{bmatrix}$ & $\begin{bmatrix} 
    -2.9675 &   1.0157 \end{bmatrix}$ & $\begin{bmatrix} 
    -1.1523 &  -0.7996 \end{bmatrix}$\\\hline
     \multirow{3}{*}{500} & $\hat{\theta}_2$ & \multicolumn{3}{c|}{$ -1.0240$} &  \multicolumn{3}{c|}{$ -0.9827$}\\\cline{2-8}
	 &$\hat{\sigma}^2$ & 0.0063 & 0.0072 & 0.0069 & 0.0140 & 2.8518 & 0.0069 \\\cline{2-8}
     &CI & $\begin{bmatrix}   -1.1790 &  -0.8689\end{bmatrix}$ & $\begin{bmatrix}    -1.1900 &  -0.8579 \end{bmatrix}$ & $\begin{bmatrix}    -1.1868 &  -0.8611 \end{bmatrix}$ & $\begin{bmatrix}   -1.2287 &  -0.7505  \end{bmatrix}$ & $\begin{bmatrix}    -4.2963 &   2.3272  \end{bmatrix}$ & $\begin{bmatrix}    -1.1491 &  -0.8200  \end{bmatrix}$ \\\hline
    \end{tabular}}
    \caption{Comparison of the FM and  RE methods and the corresponding asymptotic, jackknife, and bootstrap,  variance $\sigma^2$ and $95\%$ CIs, with i.i.d.\ samples.}
    \label{tab:FM_RE_nonparametric_theta2}
\end{table}
Notice that the jackknife and bootstrap variances are  slightly larger than the asymptotic variance. 
On the other hand, the jackknife and bootstrap variance estimates are very close for all cases. 
We observe also the improvement of the variance as the number of  samples increases in all methods, as is expected. 
Despite the fact that the  {   non-asymptotic and asymptotic variances  here   are comparable, the non-asymptotic approaches} have advantages over standard asymptotic methods.
{\  Indeed, when the variance of the estimator does not admit an analytic formulation or is too complicated to calculate,  the non-parametric methods are the only option.}
However, non-parametric methods have higher computational cost than the  asymptotic due to the need for resampling and computing estimates repeatedly.

We calculate the bootstrap and jackknife estimates for the {\  PSRE}  minimization, applied  for multiple i.i.d.\ trajectories. With $N_p=100$ trajectories, and $N_t=300$ correlated samples in each trajectory, the 95\% standard confidence intervals are reported in table \ref{tab:nonparametric_CI_RER}. We also validate the accuracy of jackknife and bootstrap CI for the RER minimization, by computing confidence intervals with different sets of samples. On average, 94\% Jackknife CIs and 94.8\% Bootstrap CIs include the true values of the parameters $\theta^*$.

\begin{table}[htbp]
    \centering
    \resizebox{0.8\columnwidth}{!}{\begin{tabular}{|c|c|c|c|c|c|}
    \hline
    $N_p$& $N_t$& $\hat{\theta}$ & Asymptotic CI & Jackknife CI     & Bootstrap CI \\\hline
   1 & 50,000 & $\begin{bmatrix}   0.0746 \\   -0.9805 \\   0.0483 \\ -0.0313\\   -0.0255 \end{bmatrix}$  & $\begin{bmatrix} -0.0491 &   0.1983 \\
   -1.1876 &  -0.7733\\
   -0.1979 &   0.2944\\
   -0.1550 &   0.0925\\
   -0.0965 &   0.0455 \end{bmatrix}$ & NA & NA \\\hline
   100 & 300 &$\begin{bmatrix}-0.0807 \\  -0.9358 \\  0.0270 \\   -0.0546 \\  0.0169 \end{bmatrix}$ & NA &
    $\begin{bmatrix}-0.2546 &   0.0932 \\
   -1.1718 &  -0.6997 \\ 
   -0.4219 &   0.4759 \\
   -0.1722 &   0.0630 \\
   -0.1009 &   0.1346 \end{bmatrix}$     &   $\begin{bmatrix} -0.2498  &  0.0884 \\
   -1.1897 &  -0.6818 \\
   -0.3791 &   0.4331 \\
   -0.1928 &   0.0836 \\
   -0.0939 &   0.1276\end{bmatrix}$\\\hline
    \end{tabular}}
    \caption{Jackknife and bootstrap 95\% CI by RER on correlated data with multiple trajectories. NA means that the CI does not apply to that type of data.}
    \label{tab:nonparametric_CI_RER}
\end{table}

Figure~5 
presents the induced bootstrap confidence intervals for the drift as the QoI $a(x,\hat\theta)$, computed by the corresponding quantiles of the set $ \{a(x,\hat\theta_i)\}_{i=1}^B$, following \eqref{eq:CI-QoI}. We observe that for only $N=50$ samples the bootstrap confidence interval captures the 'large sample' ($N=500$ in FM and RE, $N=50000$ in RER) parameter estimates for all values of $x$. Note though, that the CI is wider for  larger absolute values of $x$ which depicts a wider uncertainty in the estimate.

\begin{figure}[htbp]\label{fig:bootstrap_CI_drift}
    \centering
    \includegraphics[width = 0.6\textwidth]{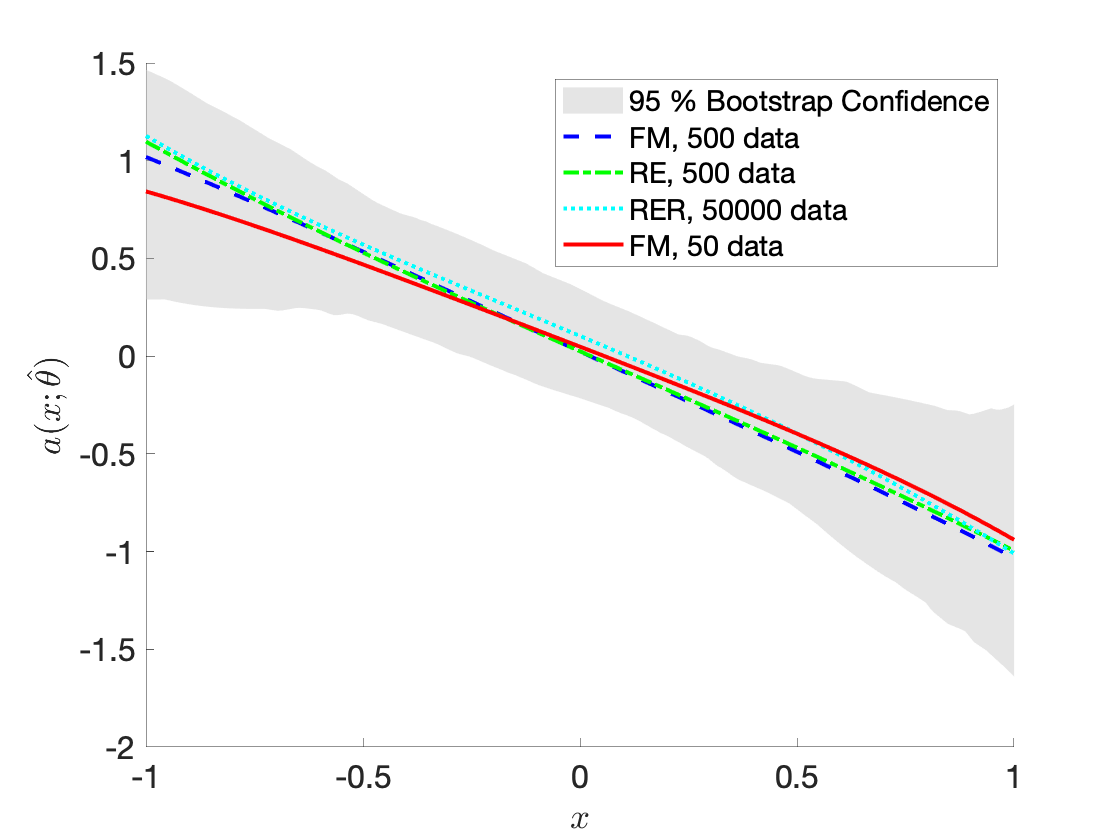}
\caption{Drift function $a(x;\hat{\theta})$ with $\hat{\theta}$ estimated by different methods and 95\% bootstrap percentile confidence interval. The number of bootstrap samples is 200.}
\end{figure}

\bigskip
\noindent {\bf V. Validation of confidence intervals}. In the previews sections  we estimated the asymptotic confidence intervals for i.i.d. data and time-series data. Table \ref{tab:CI_validation} shows the experiment results on validating those confidence intervals. For each method, sample size, and confidence level we calculate the corresponding confidence intervals for $500$ independent sets of synthetic samples generated from \eqref{eq:two-scale}. 
Then,  we calculate the  percentage of those confidence intervals containing the true value of the parameters. Those probabilities are close to the confidence levels, with RER's probability being slightly smaller than the confidence level. 
\begin{table}[htbp]
    \centering
    \begin{tabular}{|c|c|c|c|c|}
    \hline
    Method   & Sample size & 90\% CI &95\% CI & 99\% CI \\\hline
    FM & 50 &  89.40\% & 93.84\% & 98.28\% \\\hline
    FM & 500 &  90.16\% & 95.68 \%& 98.88\% \\\hline
    FM & 5,000 & 87.56 \%& 92.96 \% & 98.56 \% \\\hline
    RER & 5,000 & 87.76 \%& 92.80 \%& 97.56 \% \\\hline
    RER & 50,000 & 88.00\% & 93.84 \% &  98.76\%\\\hline
    \end{tabular}
    \caption{The percentage of the  estimated confidence intervals that  include the true values of the parameters. The percentage presented is the average  over the corresponding percentages of the parameters.}
    \label{tab:CI_validation}
\end{table}


\section{Test-bed 2: Effective force-fields and confidence in coarse-graining of linear polymer chains}\label{sec:polyethylene}
{\  In the present section, we apply the methodology described in section~\ref{sec:CI}  on }
the CG approximation of a polyethylene bulk system.
Specifically, we derive effective force fields with the FM method and the corresponding confidence intervals. Our focus is to understand  the behavior of the output model when the available data is limited. Thus, we concentrate  on  the non-asymptotic methods of section~\ref{sec:Nonasymptotic_results}.
  
To generate the simulated data sets $\Dd_N$ of i.i.d.\ configurations, Molecular Dynamics (MD) simulations of a united-atom polyethylne (PE) system were performed using home made (parallel) MD code.
A PE chain is represented via a united-atom model in which each methylene $CH_2$ and methyl $CH_3$ group  are considered as a single Van der Waals interacting site. Details of the model parameters are given in the supplementary information.  
The model system consists of 96 polyethylene chains of 99 monomer units ($-\textrm{CH}_2-$), i.e., the number of atomistic degrees of freedom is $n=9504$.  
The simulations were performed under NVT conditions at  temperature  $\text{T} = 450 \kelvin $, and density $ \rho  =  0.76868\textrm{ gr/cm}^3$.
The integration time step was $2\fs$. 
We record system configurations every $50\ps$ for about $500 \ns = 500,000 \ps$. 
Thus, the size of the available data set is $ 10,000$ (number of configurations).
In the following, results are reported for a large $N=2,000$ and  smaller ($N=200, 100, 30$) data sets, to examine the dependence of the predictions on the size of the actual data set. Note that we choose $N_k$ configurations from the  $10,000$ to be equidistant , e.g.,  the configurations for the  set $N=2,000$ have distance $    (10,000/2,000)* 50 \ps = 250\ps$, and for  $N=200$  the distance is  $  (10,000/200)* 50 \ps = 2,500\ps$.
We should also note that the maximum relaxation time of the polymer chains is around 
$1,700\ps$, calculated  by fitting the end-to-end vector autocorrelation function using a stretched exponential.
This suggests that the large set ($N=2,000$) is composed of correlated data while the smaller ones ($N=200, 100, 30$) are uncorrelated. 

For the coarse-grained representation of the PE, we consider a $3:1$ mapping representation, i.e., three monomer units form one CG particle,  see figure~\ref{fig:PEsnap}. 
Thus, the total number of CG particles in the system is $m = 33 \times 96 = 3168$.

\begin{figure}[htbp]
    \includegraphics[width=0.5\textwidth,height=0.3\textheight]{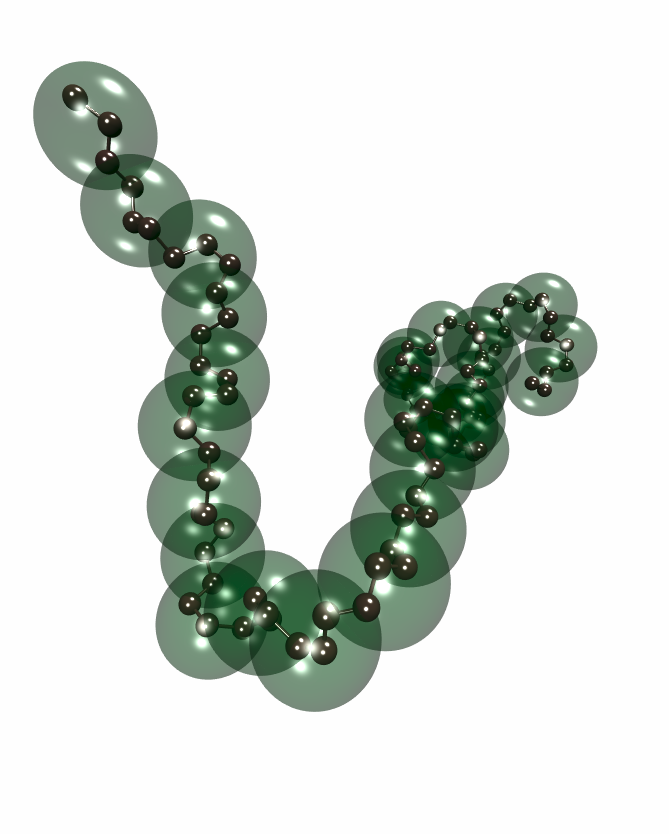}\includegraphics[width=0.5\textwidth,height=0.3\textheight]{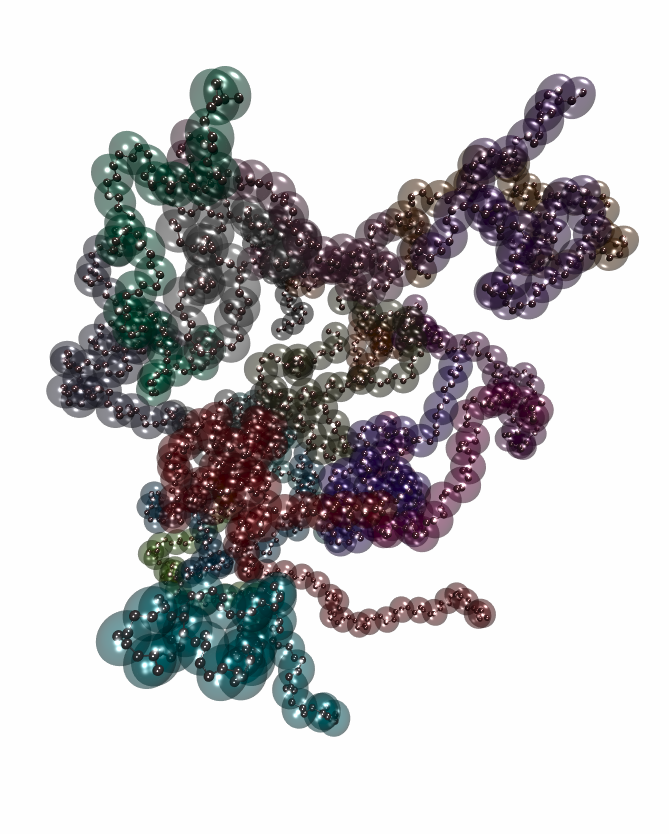}
    \caption{Snapshot of model polyethylene bulk system, shown in atomistic and CG (3:1 mapping scheme) description: a) single PE chain, b) PE bulk system.}
    \label{fig:PEsnap}
\end{figure} 
The CG PE model exhibits both bonded and non-bonded interactions. 
First, we estimate all the CG interactions  with the Iterative Inverse Boltzmann (IBI)\cite{Reith2003}  method, both non-bonded and   bonded interactions (bonds, angles, dihedrals) presented in a tabulated form. We  disregard the non-bonded estimates and keep only the bonded ones which are input for the FM method applied next. The resulting  interaction potentials are reported in the supplementary information.
Then, we estimate the non-bonded interactions with the FM method. We  represent the  non-bonded interactions   with a two-body pair potential, which only depends on the distance between monomers.
That is, the proposed CG potential described in Eq. \eqref{CGpotential} is  
\begin{equation*}
 \BARV(\bbx;\theta) = \sum_{I=1}^m \sum_{J\neq I }^m   u(r;\theta)\COMMA
 \end{equation*}
where $r = |\bar q_I -\bar q_J|$.
The CG pair interaction potential $u(r;\theta)$ is approximated via a functional basis of the form:
\begin{equation*}
u(r;\theta) = \sum_{k=1}^K\theta_k \phi_{k}(r)\COMMA
\end{equation*} 
using the linear  
or the cubic B-splines $ \{\phi_k(r)   \}_{k=1}^K$,
 ~\cite{numer_rec}.
The cutoff range for the non-bonded interactions is $1.4\nm$.
The size of the parameter set $\theta$ is determined by the number of knots $K$.  We present results for a varying number of (a) parameters $K$ and (b) all-atom configurations $N$, given in table~\ref{tab:PEnumbers}. 

The QoI is the pair potential ${u}(r; \theta), \ r>0 $,  
with estimator   the random variable in $\R$ 
\begin{equation}
     \hat{u}_N(r; \theta) = \sum_{k=1}^K \hat{\theta}_{k,N} \phi_{k}(r)\COMMA
\end{equation}
that is a linear combination of the parameter estimators $\hat\theta_N$. 
 
For each combination of parameters and data set size, we find the optimal force field with the FM method. For the small data sets, we also derive the bootstrap and jackknife statistics for the parameters.
Next, we report the results for the cubic B-splines representation with $K =30$ parameters. 
The results for the linear B-splines and comparisons between the different functional basis are reported in the supplementary information. 
In the results reported below, the number of bootstrap samples is $B=200$.
We consider the large data set ($N=2,000$) estimation as a reliable approximation, and thus we use it as {\it reference} result to compare to estimations with the small data sets.

Figure~\ref{fig:1.CIboot_RSTD}~(a) depicts the parameters $\hat\theta_N$ estimated with the large ($N=2,000$) and the small ($N=200$) data set, along with the $95\%$ bootstrap  percentile CI.  
In figure~\ref{fig:1.CIboot_RSTD}~(b) we report the relative standard deviation (RSTD) of each nonzero parameter, defined by $\textrm{RSTD} =  \hat\sigma_k / \hat\theta_k  $, where $\hat\sigma_k$ is the standard deviation estimated with the bootstrap method. The RSTD reveals the most uncertain parameters, e.g., the spline parameters with index $12,\ 20,\ 23,\  26,\  29$. We compare the asymptotic and bootstrap standard deviation $\hat{\sigma}$  in  figure \ref{fig:1.1_AsymptoticVar} for the  $N=30$   and   $N=200$  configurations. It is evident that the asymptotic and bootstrap variance differ for the $N=30$  but are very close when $N=200$. 

\begin{figure}[!htbp]
 \includegraphics[width=0.48\textwidth,height=0.55\textheight,keepaspectratio]{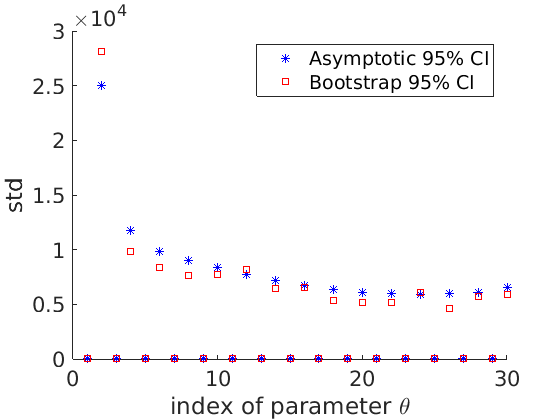}
  \includegraphics[width=0.48\textwidth,height=0.55\textheight,keepaspectratio]{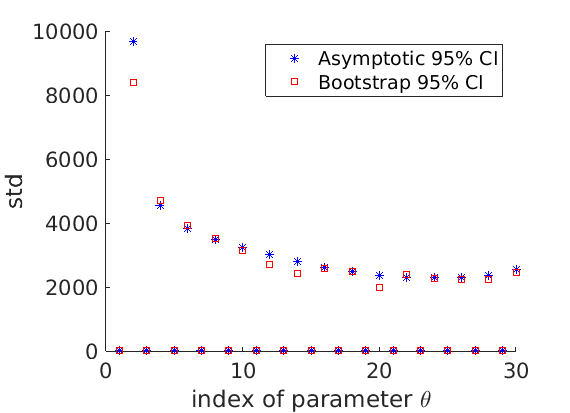}
\caption{Asymptotic and bootstrap estimated standard deviation for  $N=30$ (left figure)  and $N=200$ (right figure) configurations. }\label{fig:1.1_AsymptoticVar}
\end{figure}

Figure~\ref{fig:2.PairPotCIs} depicts the small data set, $N=200$, estimated pair potential, as well as the $80\%,\ 95\%$ and $99\%$ bootstrap percentile confidence intervals. 
We observe that the estimated potential captures the  minimum value point of the reference, though there is an amplitude deviation. Most importantly, the reference potential falls inside the  $95\%$ and $99\%$ bootstrap  CI for the whole range of distances. This observation suggests that for $N=200$, the bootstrap CI is capable of providing useful information for  the range and minima of the potential.

In order to examine the dependence of the CG potential on the size of the data set,  we present in  Figure~\ref{fig:3.PairSmallN}   the resulting effective potential of CG PE beads, as well as its $95\%$ CI, analyzing an increasing number of atomistic configurations. 
We observe that the CI for $N=30$ is practically uninformative, as its range is too wide.

\begin{table}[!htbp]
     \centering
     \begin{tabular}{|c|c|c|c|}
     \hline
          & Number of parameters & Small data set size & Large data set size \\
          \hline
          \hline
        Linear  &  $  \begin{matrix} 75 \\ 30  \end{matrix}$ & $  \begin{matrix} 300 \\ 200  \end{matrix}$  & $ \begin{matrix} 5000 \\ 2000  \end{matrix} $\\
        \hline
        Cubic   & 30 & $ \begin{matrix} 30 \\100\\ 200  \end{matrix}$ & 2000 \\
        \hline
     \end{tabular}
     \caption{Available sample sets for the PE model.}
     \label{tab:PEnumbers}
 \end{table}

\begin{figure}[!htbp]
 \includegraphics[width=0.48\textwidth,height=0.55\textheight,keepaspectratio]{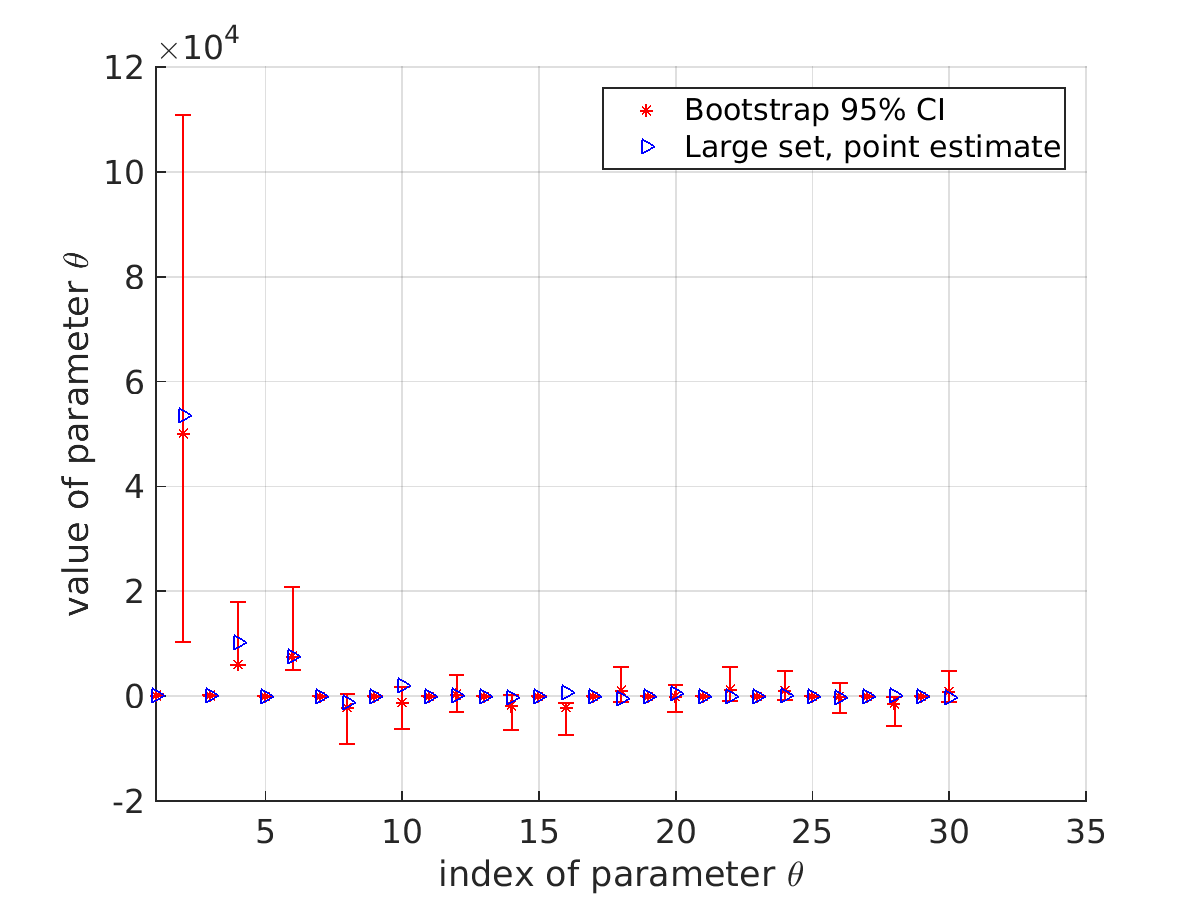}
  \includegraphics[width=0.48\textwidth,height=0.55\textheight,keepaspectratio]{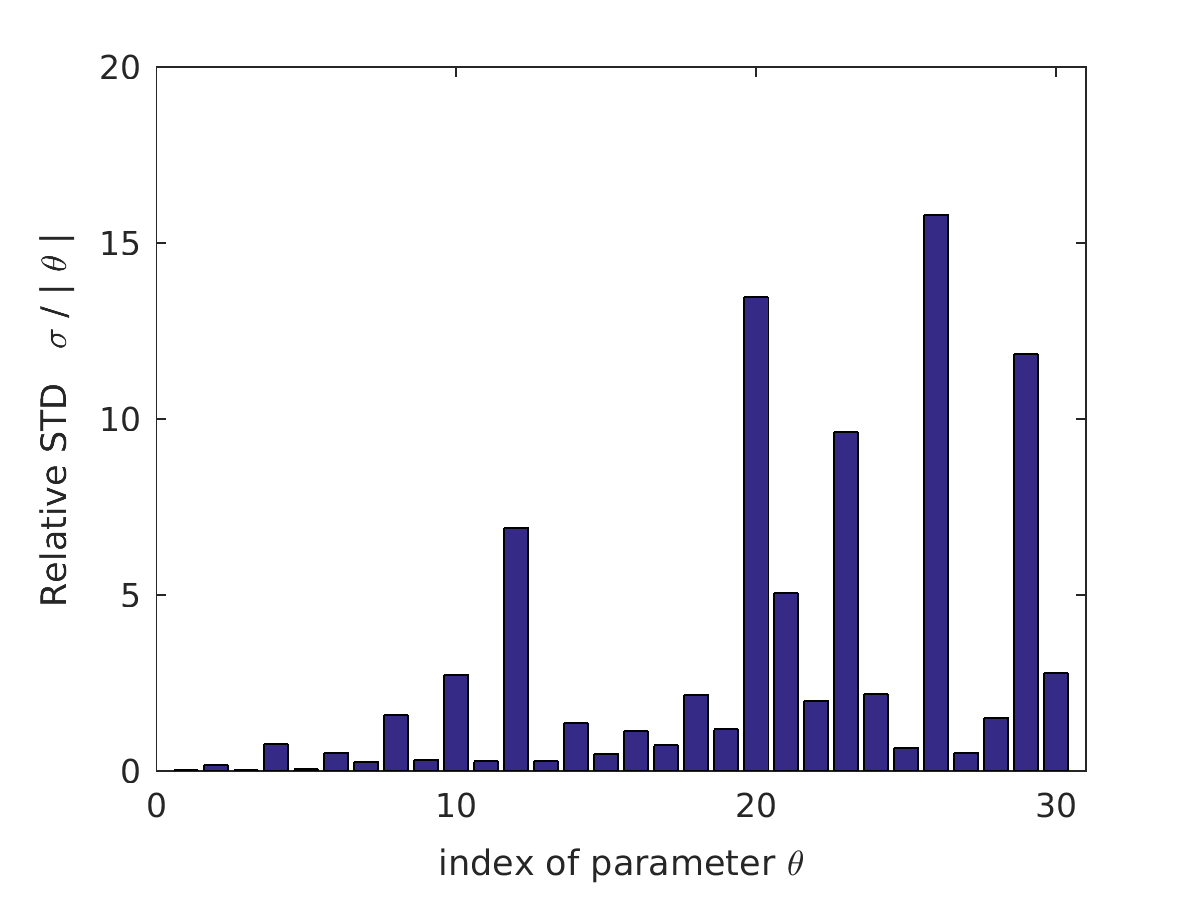}
\caption{ 
CG PE effective potential: (a)  Small sample   ($N=200$) parameter estimate, and bootstrap $ 95\% $ CI.  (b) Relative standard deviation of the parameters, reveals the most uncertain parameters.}
\label{fig:1.CIboot_RSTD}
\end{figure}

\begin{figure}[htbp]
  \includegraphics[width=0.48\textwidth,height=0.5\textheight,keepaspectratio]{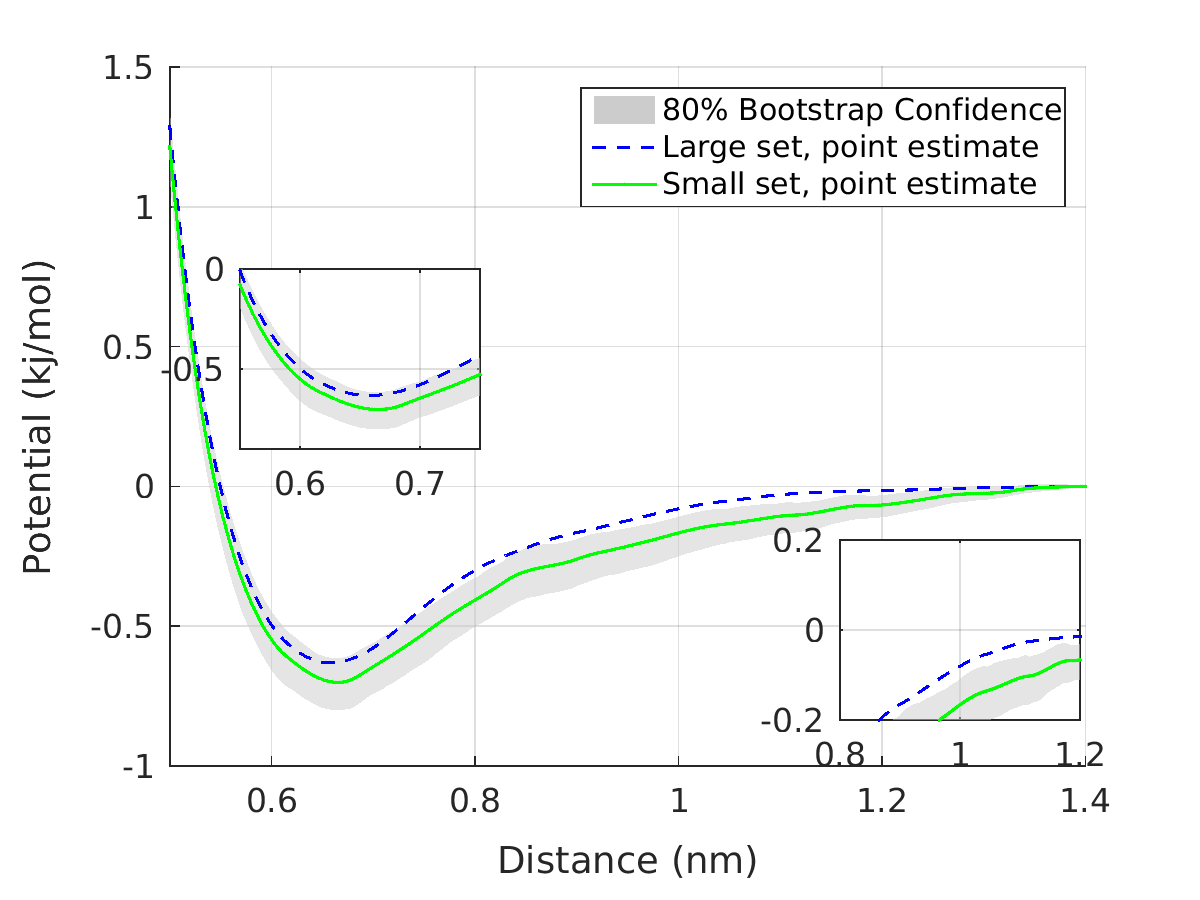}
   \includegraphics[width=0.48\textwidth,height=0.5\textheight,keepaspectratio]{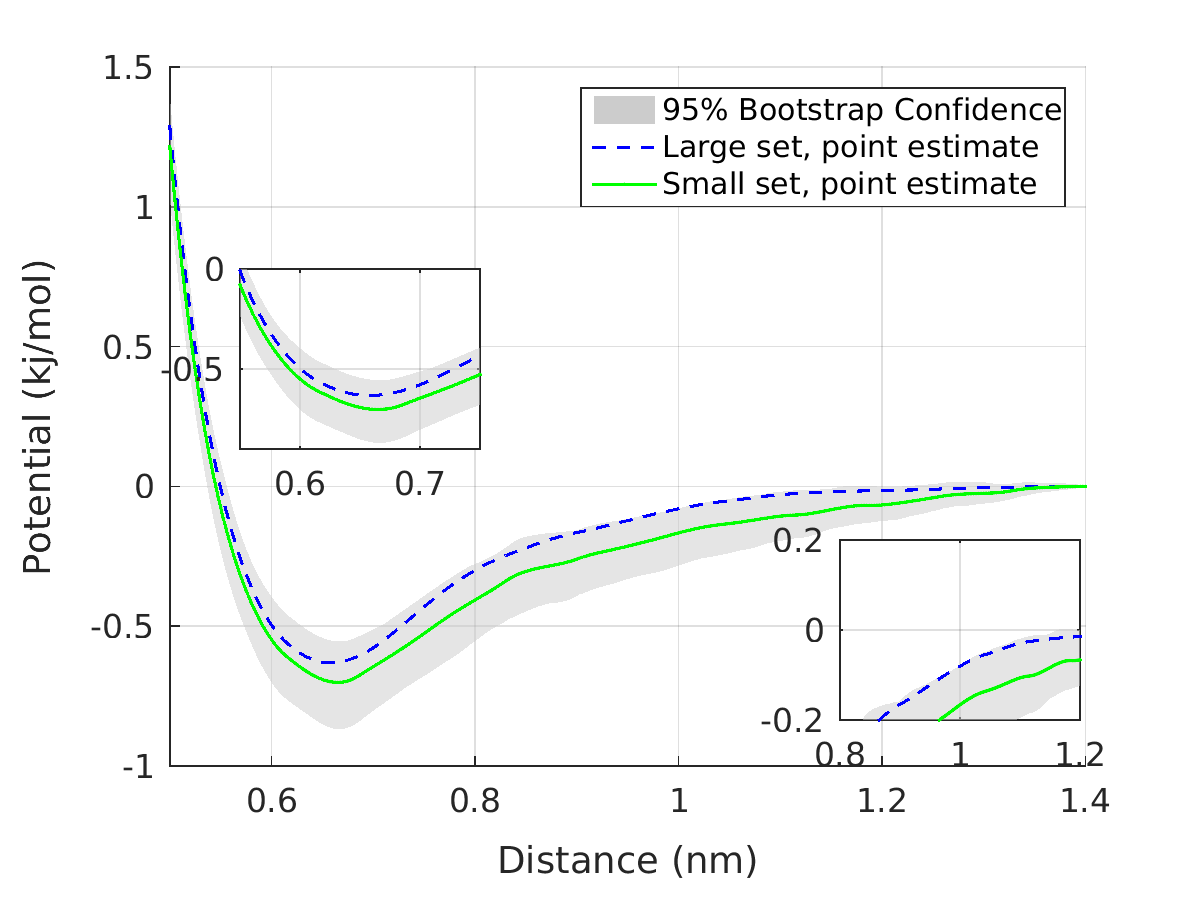} 
   \includegraphics[width=0.48\textwidth,height=0.5\textheight,keepaspectratio]{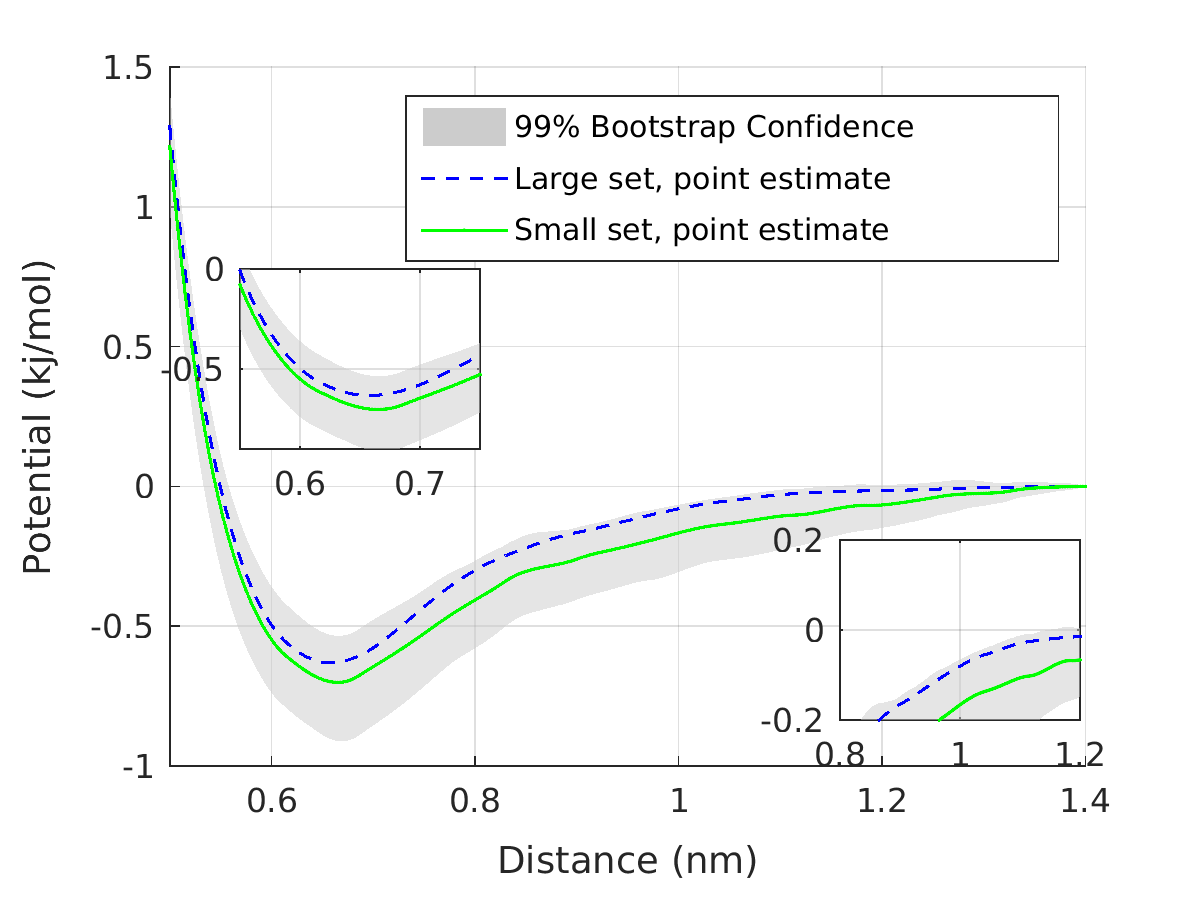}
\caption{Estimated CG PE pair interaction potential $\hat u(r)$  for a large  (2000) and a small (200) data set,  and the  $ 80 \%$,  $ 95 \%$, and $ 99 \%$ bootstrap percentile CI. }
\label{fig:2.PairPotCIs}
  \end{figure}

Next, we examine the bootstrap standard deviation (STD) of the CG pair potential values, as a function of the CG beads distance. 
Results for the bootstrap STD are shown in Figure~\ref{fig:4.STDsmallN}, evaluated for varying number of configurations. Two useful observations can be made out of these data.
First, it is interesting that the STD decreases with increasing the potential interaction range, i.e., the distance between CG particles, for all cases. Indeed, the most uncertain values of the CG pair potential are for small distances.
This is not surprising if we consider that at larger distances the configurations are more 'homogeneous' (pair distribution function approaches one), and thus the variance is expected to be smaller.
Second, the STD decreases with increasing the number of configurations, and the deviation between them is lower as the data set increases. 
Thus, given a desired accuracy, the STD can  serve as a criterion for choosing a sufficient number of configurations. 
 
The $95\%$ jackknife CI for the CG PE pair potential is presented in figure~\ref{fig:5.Jackknife95CI}, for $N=200$. 
It is clear that the jackknife CI can also capture the reference potential for this size of the data set.

Furthermore, to examine the CG interaction potential predictions at specific {\  particle} distances, the mean, the standard deviation, and the percentile CI values are shown in Figure~\ref{fig:6.PDFur} and in tables~\ref{tab:MeanCI_r065}, \ref{tab:MeanCI_r045} and \ref{tab:MeanCI_r095} {\  for three  distances $ r=0.45, \ 0.65,\ 0.95 $.}
In more detail, Figure~\ref{fig:6.PDFur} and table \ref{tab:MeanCI_r065} depict the estimate and CI for the pair potential at distance $r=0.65\nm$ for various data sets. 
This distance corresponds to the reference potential minimum (see also Figure~\ref{tab:MeanCI_r045}, \ref{fig:2.PairPotCIs}). 
It is clear {\  to see }the change of the {\ probability density}, and the most probable CG potential value, with the increase of the data set size. 
Indeed, as the size of the available configurations change form 30 up to 200 a 'concentration of the density' is also observed. At the same time, the expected (average) value approaches the one of the underlying reference system ($N=2,000$), shown in table~\ref{tab:MeanCI_r065}.

Qualitatively similar are the results for the other two distances $r=0.45\nm$, which is in the repulsive part of the potential, and $r=0.95\nm$ that is in the attractive 'tail', shown in tables \ref{tab:MeanCI_r045} and \ref{tab:MeanCI_r095} respectively. For both distances, the bootstrap predictions become more accurate (CIs are reduces) as the size of the data set increases. For $N=200$ the bootstrap an jackknife predictions are very similar.

\begin{figure}[htbp]
 \centering
 \includegraphics[width=0.7\textwidth,height=0.7\textheight,keepaspectratio]{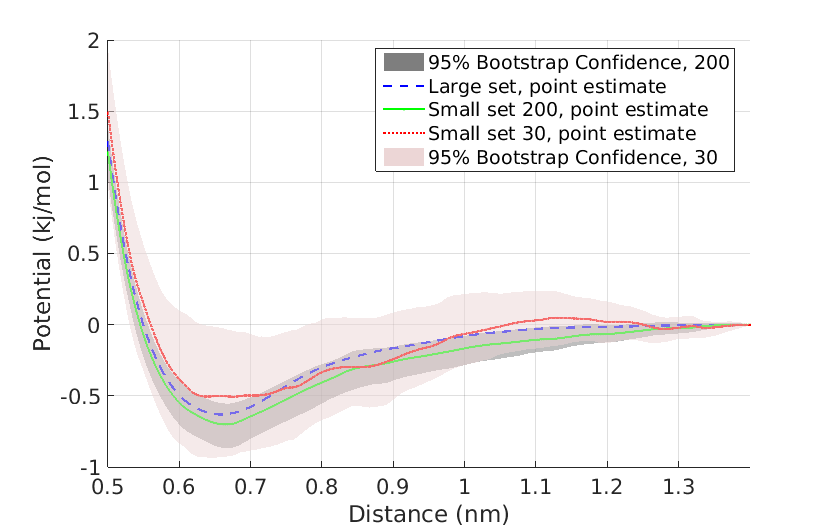}
\caption{$95\%$ bootstrap CI of the CG PE pair effective potentials, for two small data sets, $N=200$ and $30$ configurations.}
\label{fig:3.PairSmallN}
\end{figure}

\begin{figure}[htbp]
 \centering
 \includegraphics[width=0.7\textwidth,height=0.7\textheight,keepaspectratio]{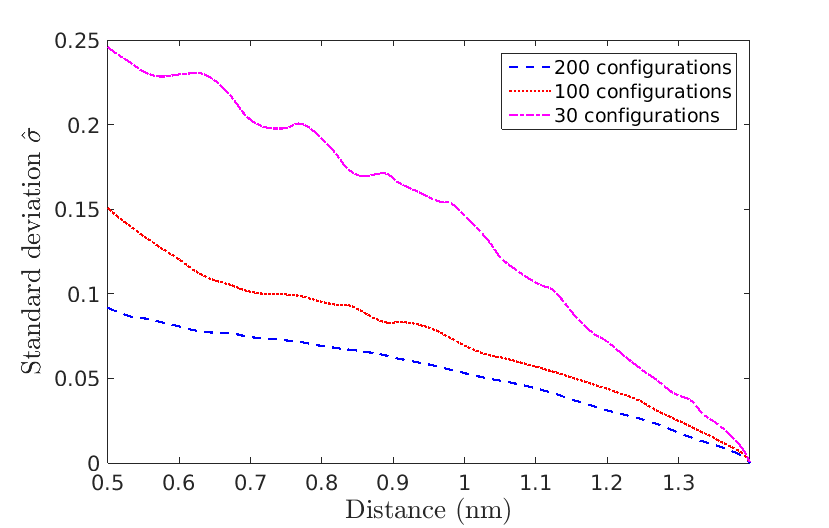}
\caption{Bootstrap standard deviation of the CG PE pair effective potentials for the  data sets of 200, 100, and 30 configurations.}
\label{fig:4.STDsmallN}
  \end{figure}  
  \begin{figure}
  \centering

 \includegraphics[width=0.7\textwidth,height=0.7\textheight,keepaspectratio]{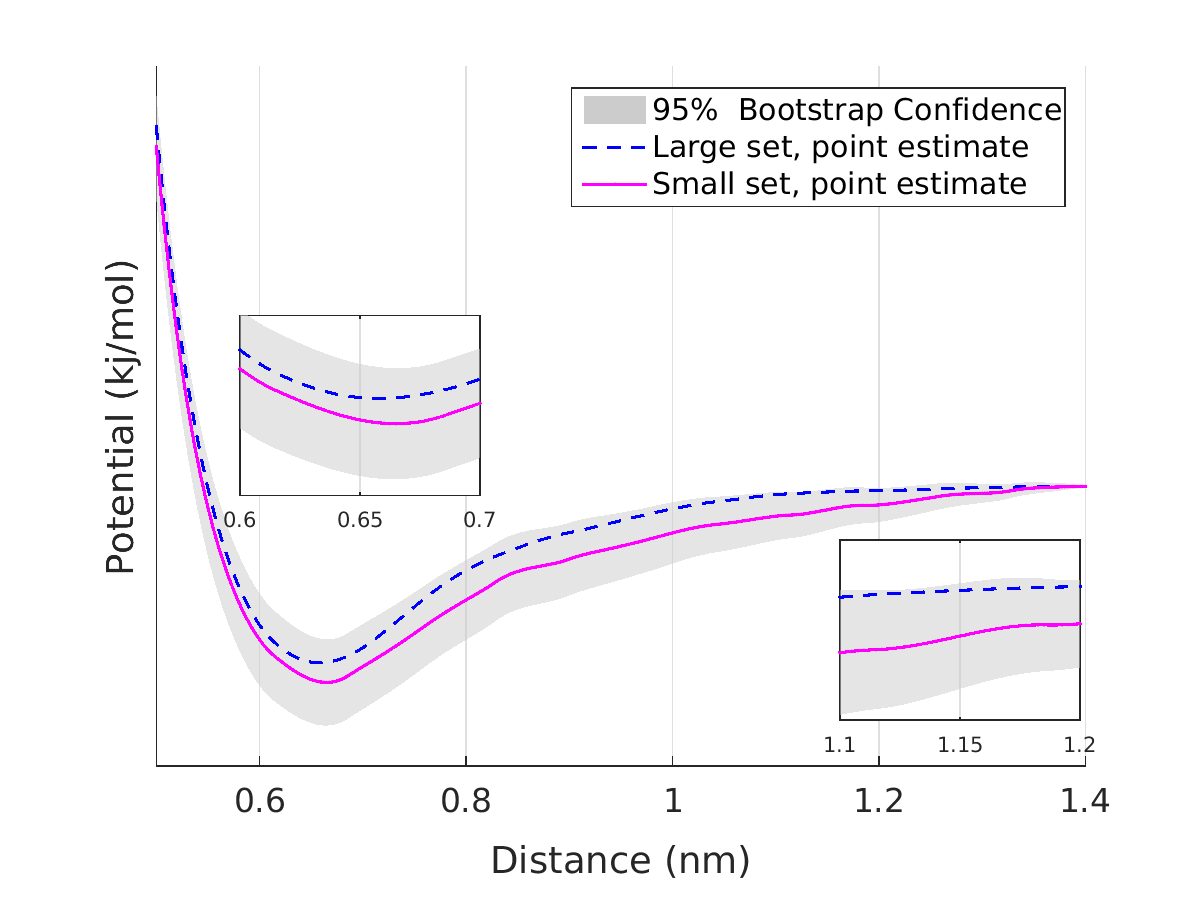}
\caption{The jackknife  95$\%$ CI for the estimated CG PE pair effective interaction potential $u(r)$, for the $N= 200$ configurations data set. }
\label{fig:5.Jackknife95CI}
\end{figure}

\begin{figure}[htbp]
\centering
 \includegraphics[width=0.7\textwidth,height=0.7\textheight,keepaspectratio]{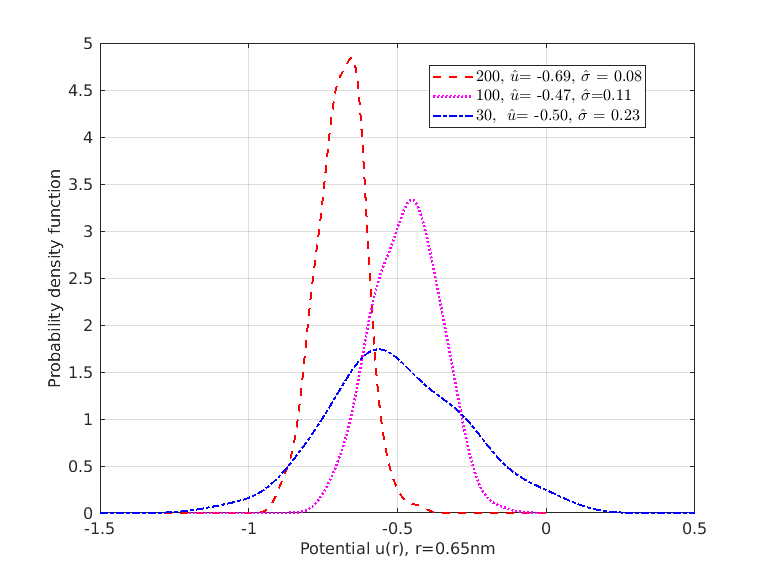}
\caption{Probability density function of CG PE pair effective potential $u(r)$, at $ r=0.65 nm$, derived from bootstrap, for three small data sets involving 200, 100, and 30 atomistic configurations. The corresponding reference value  is  $\hat u = -0.6289$.}
\label{fig:6.PDFur}
\end{figure}

\begin{table}[ht]
      \centering
      \begin{tabular}{|c|c|c|c|c|}
      \hline
            Method &  $\hat{u}$ & $ \hat{\sigma}_u$ & CI &Number of samples\\
            \hline
            \hline
          Large data set & $-0.6289$ & & & $2000$\\
          \hline
          Bootstrap  &
          $\begin{matrix} -0.5027  \\ -0.4706 \\ -0.6900 \end{matrix} $ &
          $\begin{matrix} 0.2259 \\ 0.1079 \\ 0.0770\end{matrix} $ &  
          $\begin{matrix} (-0.9343,\ 	-0.0145)  \\(-0.6847, \ 	-0.2796)   \\ (-0.8504, \ -0.5417 )\end{matrix} $ &  
          $\begin{matrix} 30 \\ 100 \\200 \end{matrix} $\\
          \hline
          Jackknife & $ -0.6900$  & $0.0794$   & $(-0.8463,\ -0.5348)$& $200$\\
           \hline 
      \end{tabular}
      \caption{Mean, standard deviation, and percentile CI for $u(r)$, $ r=0.65$}
      \label{tab:MeanCI_r065}
\end{table}
 
\begin{table}[ht]
      \centering
      \begin{tabular}{|c|c|c|c|c|}
      \hline
            Method &  $\hat{u}$ & $ \hat{\sigma}_u$ & CI &Number of samples \\
            \hline
            \hline
          Large data set & 4.3263  & & &2000\\
          \hline
          Bootstrap  &
          $\begin{matrix}  4.5663 \\ 4.3743 \\4.2631  \end{matrix} $ & 
          $\begin{matrix}  0.2896\\0.1716  \\ 0.1036\end{matrix} $ & 
          $\begin{matrix}  (3.9611, \ 	5.1605) \\ (4.0619,\  	4.6749)  \\ 	(4.0542,\ 	4.4686)  \end{matrix} $ &
          $\begin{matrix} 30 \\ 100 \\200 \end{matrix} $ \\
          \hline
          Jackknife & 	4.2631 & 0.1078	 & $(4.0522, \ 4.4748)$ &200\\
           \hline 
      \end{tabular}
      \caption{Mean, standard deviation, and percentile CI for $u(r)$, r=0.45}
      \label{tab:MeanCI_r045}
\end{table}

 \begin{table}[ht]
       \centering
       \begin{tabular}{|c|c|c|c|c|}
       \hline
             Method &  $\hat{u}$ & $ \hat{\sigma}_u$ & CI &Number of samples \\
             \hline
             \hline
           Large data set & $-0.1210$   & & &$2000$\\
           \hline
           Bootstrap  &
           $\begin{matrix}  -0.1488   \\ 0.0687 \\-0.2216  \end{matrix} $ & 
           $\begin{matrix} 0.1571  \\0.0799  \\  0.0582\end{matrix} $ & 
           $\begin{matrix}  (-0.3974, \  	0.1313) \\ (-0.0583,\ 	0.2453)  \\ 	(-0.3312,\  	-0.1156)  \end{matrix} $ &
           $\begin{matrix} 30 \\ 100 \\200 \end{matrix} $ \\
           \hline
          Jackknife & $-0.2138 $	  & $0.0613$	 & $(-0.3339, \ -0.0937)$ &200\\
           \hline 
       \end{tabular}
       \caption{Mean, standard deviation, and percentile CI for $u(r)$, r=0.95}
       \label{tab:MeanCI_r095}
 \end{table}

As a final check, and in order to understand the effect of the autocorrelated  data on the effective model we present in 
figure \ref{fig:Correlated}  the  pair potential  point estimates
obtained by the FM method with (a) a set of   $200$ correlated configurations  with distance $\tau=50\ps$, (b) the reference  large set of $N=2,000$ configurations  with distance $\tau=250\ps$ and  and (c) the set of $N=200$ uncorrelated configurations with distance $\tau=2,500\ps$. Recall,that the estimated relaxation time is $1,700\ps$. 

 \begin{figure}[htbp]
  \centering
 \includegraphics[width=0.7\textwidth,height=0.7\textheight,keepaspectratio]{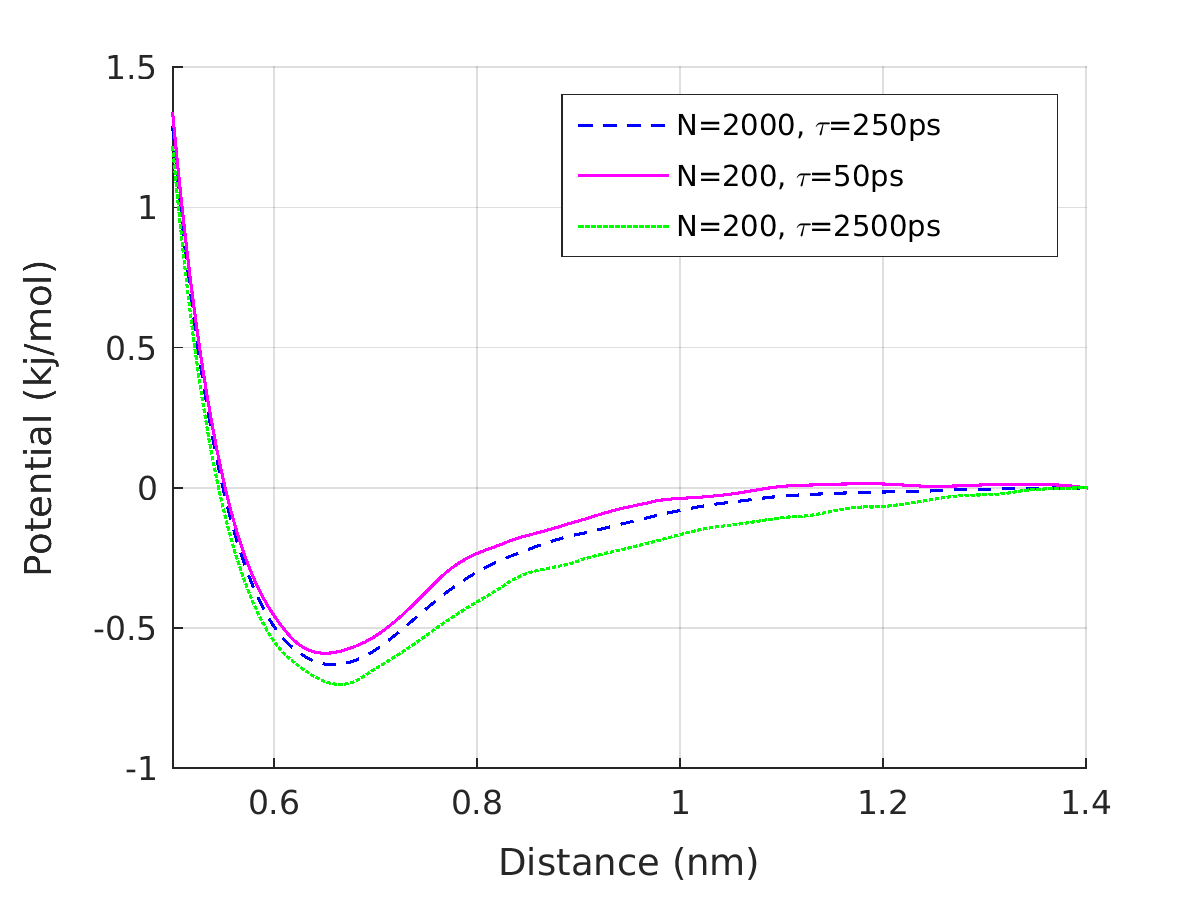}
\caption{FM estimates for correlated and uncorrelated data. }
\label{fig:Correlated}
\end{figure}

\section{Guidelines and Discussion}\label{sec:Conclusion}
To conclude, in this work, we presented an array of   methodologies to generate confidence intervals for systematic bottom-up coarse-grained models, derived from both equilibrium and path-space observations. 
The coarse-graining approach is physics and data driven, relating the true CG model to its digital-twin, the approximate CG model.

We have employed rigorous statistics theory tools for constructing asymptotic and non-asymptotic CIs, and examined their applicability to coarse-graining strategies. 
 We present a  schematic guideline in figure \ref{fig:work_flow2} for the methodology we propose.   
 The main features of the methodology, as depicted in the schematic guideline and observed in the test-bed problems, are: 
\begin{itemize}
    \item Asymptotic vs. non-asymptotic: The asymptotic methods need a parametric form of the variance since we compute the expectation of the first and second derivatives of the log density or transition probability function.
    While the non-asymptotic methods do not need a parametric form of  the variance, they have an additional computation cost due to the repeated optimization to compute sample estimates.  {\  Therefore, }if an analytic form of variance can be derived, asymptotic methods are more computationally efficient.
    \item  Time-series data vs. independent data: Independent data can provide {\  more information as  their  statistical analysis is well established, but obtaining independent data in real-world problem is often impractical}. Correlated data, such as time-series data, are more commonly used. Our proposed confidence intervals for the RER minimization, provides a useful uncertainty quantification of the estimated parameters for time-series data, under the assumption of stationary and ergodicity.
    \item Correlated data in multiple independent trajectories: we also demonstrated in table \ref{tab:nonparametric_CI_RER} that by using the independence between trajectories  a  resampling technique, jackknife and bootstrap,   can also construct non-asymptotic confidence intervals for this type of data.
\end{itemize}

\begin{figure}[htbp]
\centering
\resizebox{0.8\textwidth}{!}{\begin{tikzpicture}[align=center, node distance=2cm, auto]
\node (data1) [io] {iid data \\$\Dd_N$};
\node (data2) [io, right of=data1, node distance=6cm] {time-series data\\ $\Dd_{N_p,N_t}$};
\node (m1) [ below of=data1] {};
\node (m2) [ below of=data2] {} ;
\node (model) [startstop, right of=m1, node distance=3cm] {CG model: $\hat{\theta}$};
\node (size) [decision, below of=m1, node distance=3cm] {Is N big and is Fisher information computable?};
\node (traj) [decision, below of=m2, node distance=3cm] {Does data have multiple trajectories?};
\node (p1) [process, below of=size, node distance=3cm, xshift = -3cm] {Asymptotic CI};
\node (p2) [process, below of=size, node distance=3cm, xshift = 3cm] {Nonasymptotic CI \\(Jackknife, Bootstrap)};
\node (p3) [process, below of=traj, node distance=3cm, xshift = 3cm] {Asymptotic CI};
\draw [arrow] (data1) -- (size);
\draw [arrow] (data2) -- (traj);
\draw [arrow] (model) -- (m1); 
\draw [arrow] (model) -- (m2); 
\draw [arrow] (size) -| node[anchor=west, yshift=-1cm] {yes} (p1);
\draw [arrow] (size) -| node[anchor=east, yshift=-1cm] {no} (p2);
\draw [arrow] (traj) -| node[anchor=west, yshift=-1.03cm] {yes} (p2);
\draw [arrow] (traj) -| node[anchor=east, yshift=-1cm] {no} (p3);
\end{tikzpicture}}
\caption{Schematic methodology for confidence interval estimation in coarse-graining}
\label{fig:work_flow2}
\end{figure}
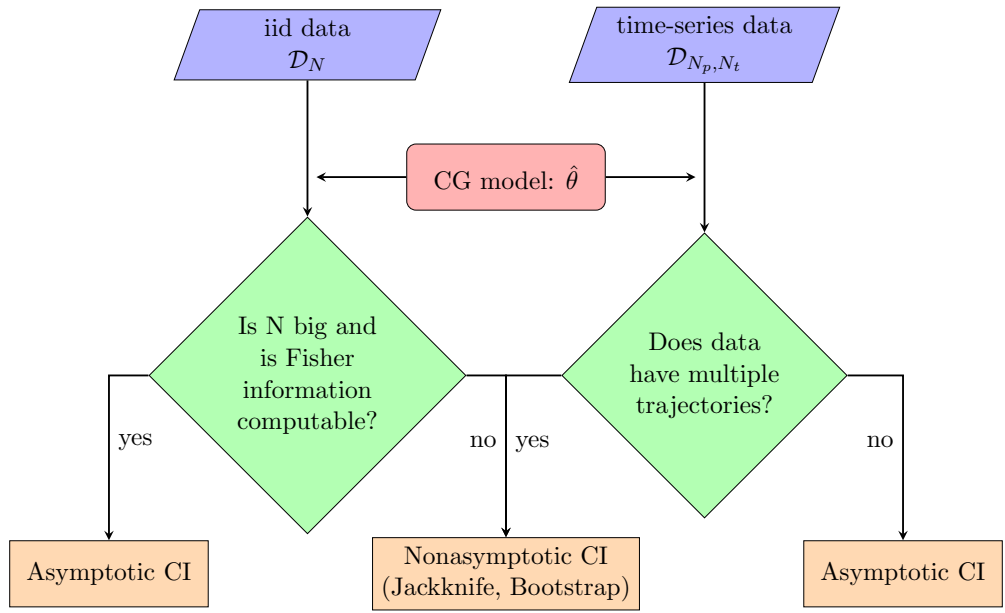

In short, we have demonstrated:
    \begin{itemize}
    \item  the need for employing  non-asymptotic methods in coarse-graining high dimensional molecular systems, and
    \item the benefit of applying time series, path-space techniques.
    \end{itemize}
As it is often extremely time-consuming to generate 'large' data sets of atomistic model configurations in molecular, and  especially macromolecular systems, the asymptotic confidence intervals are often not valid. Therefore, we propose non-parametric, non-asymptotic   methods, i.e.,  bootstrap    and jackknife methods  to provide guaranties of the coarse-grained output model in terms of the size of the available data. 

Moreover, we show with the benchmark example that the path-space   method, i.e., the RER optimization, is best in terms of the cost of generating simulated data, for which we can also provide confidence intervals. {\    Also,  the FM estimator for correlated data gives reliable point estimates though corresponding confidence sets cannot be obtained.}
Indeed, since the bottom-up CG methods are based on simulated data, often for high-dimensional systems, not discarding simulated data to achieve independence saves a large amount of computational time. 

For the polymer melt, at realistic conditions, we have presented the bootstrap and jackknife confidence intervals for the FM estimated parameters and the pair potential.  
A detailed analysis of the CIs for the derived effective CG non-bonded potential suggests that the sufficiency of the data size can be estimated along with the estimated bootstrap variance.

We believe that our work could stimulate further studies on the development and application of rigorous statistical inference methods for coarse-grained models of soft condensed matter, and in particular, of macromolecular systems. This is even more important for hybrid polymer-based complex materials, for which the relaxation times increase rapidly with the complexity of the underlying physico-chemical interactions, thus making the sampling of either a large number of atomistic  i.i.d or long time-correlated  configurations  not feasible~\cite{Petra2019,Tasia2019}.
 
\section*{Acknowledgements}\label{sec:Acknowl}
The research of M.K.  was partially supported by NSF TRIPODS  CISE-1934846 and by the Air Force Office of Scientific Research (AFOSR) under the grant FA-9550-18-1-0214. The research of T. J.   was partially supported by the National Science Foundation (NSF) under the grant DMS-1515712 and by the Air Force Office of Scientific Research (AFOSR) under the grant FA-9550-18-1-0214.
\\
 E.K.  acknowledges   support by the Hellenic Foundation for Research and Innovation
(HFRI) and the General Secretariat for Research and Technology (GSRT), under grant
agreement No [52].

\bibliographystyle{plain}
\bibliography{referenceUQCG}



\section*{Supplementary Information: Data-driven Uncertainty Quantification for Systematic Coarse-grained Models}
\setcounter{section}{0}
\setcounter{equation}{0}
\maketitle
\newtheorem{theorem}{Theorem}[subsection]
\newtheorem{corollary}{Corollary}[subsection] 
\section{Asymptotic convergence results }
\subsection{Asymptotic theorem for i.i.d.\ data}
Suppose we have N i.i.d.\ fine-scale data
\begin{displaymath}
X_1,X_2,\ldots,X_N,
\end{displaymath}
where $X_i\in \mathcal{M}=\mathbb{R}^D$, $i=1,\ldots,N$. Assume that fine-scale data is distributed with probability density $p(x)$.

The coarse-graining (CG) map $\Pi$ is defined as
\begin{displaymath}
\Pi: \mathcal{M}\rightarrow \mathcal{M}_{CG},
\end{displaymath}
where $\mathcal{M}_{CG}=\mathbb{R}^d$, $d\ll D$.
Note that the CG map $\Pi$ is surjective here, that is, $\Pi^{-1}(\mathcal{M}_{CG}) = \mathcal{M}$. 

Let us  assume that the parametric family of the CG models $\mathcal{Q}^\theta$ has probability density $q^\theta$.
We obtain the optimal CG model by minimizing the relative entropy

\begin{equation}
\label{eq:RE}
\mathcal{R}(p|q^\theta\circ\Pi) := \EXPECT_{p}\left[\log\frac{p(X)}{q^\theta\circ\Pi (X)} \right].
\end{equation}
In addition, we have 
$\mathcal{R}(p|q^\theta\circ\Pi) = \lim_{N\rightarrow \infty} \hat{R}_N(p|q^\theta\circ\Pi)$ where
\begin{equation}
\label{eq:RE N}
\hat{R}_N(p|q^\theta\circ\Pi) := \frac{1}{N}\sum_{i=1}^N \log \frac{p(X_i)}{q^\theta\circ\Pi(X_i)}.
\end{equation}
Thus, the minimization of RE \eqref{eq:RE} is asymptoticly equivalent to the optimization problem

\begin{displaymath}
\min_{\theta} \hat{R}_N(p|q^\theta\circ\Pi) \Leftrightarrow \max_{\theta} \frac{1}{N} \sum_{i=1}^N \log\left( q^\theta\circ\Pi(X_i)\right)\PERIOD
\end{displaymath}
Let's also define 
\begin{equation*}
\label{eq: Likelihood iid}
\ell_N(\theta) := \frac{1}{N} \sum_{i=1}^N \log\left( q^\theta\circ\Pi(X_i)\right)\COMMA
\end{equation*}

\begin{align*}
\theta^* &:= \arg\min_{\theta} \mathcal{R}(p|q^\theta\circ\Pi), \\
\hat{\theta}_N &:= \arg\max_{\theta} \frac{1}{N} \sum_{i=1}^N \log \left(q^\theta\circ\Pi(X_i)\right)\PERIOD
\end{align*}

\begin{corollary}
\label{cor: expectation of gradient}
\begin{displaymath}
\EXPECT_p\big[\nabla_\theta \log q^\theta\circ\Pi(X) |_{\theta=\theta^*} \big] = \vec{0}.
\end{displaymath}
\end{corollary}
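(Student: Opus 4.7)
The plan is to derive the claim as the first-order optimality condition for $\theta^*$, combined with the differentiation-under-the-integral trick. First I would decompose the relative entropy as
\begin{displaymath}
\mathcal{R}(p|q^\theta\circ\Pi) \;=\; \EXPECT_p[\log p(X)] \;-\; \EXPECT_p[\log q^\theta\circ\Pi(X)],
\end{displaymath}
observing that the entropy term $\EXPECT_p[\log p(X)]$ is independent of $\theta$. Hence $\theta^* = \arg\min_\theta \mathcal{R}(p|q^\theta\circ\Pi)$ is equivalently the maximizer of the population log-likelihood $\theta \mapsto \EXPECT_p[\log q^\theta\circ\Pi(X)]$, which is exactly the limit as $N \to \infty$ of $\ell_N(\theta)$ already defined in the excerpt.

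Next I would assume $\theta^*$ lies in the interior of the parameter domain and that $q^\theta$ is sufficiently smooth in $\theta$ so that the standard first-order optimality condition applies: $\nabla_\theta \EXPECT_p[\log q^\theta\circ\Pi(X)]\big|_{\theta=\theta^*} = \vec{0}$. To convert this into the claimed statement, I need to push the gradient inside the expectation, i.e.\ invoke the Leibniz rule
\begin{displaymath}
\nabla_\theta \EXPECT_p[\log q^\theta\circ\Pi(X)] \;=\; \EXPECT_p\bigl[\nabla_\theta \log q^\theta\circ\Pi(X)\bigr].
\end{displaymath}
This interchange is justified under a dominated-convergence style hypothesis: there exists an integrable envelope $g(x)$ with $\|\nabla_\theta \log q^\theta\circ\Pi(x)\| \le g(x)$ in a neighborhood of $\theta^*$ and $\EXPECT_p[g(X)] < \infty$. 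Combining the two displays and evaluating at $\theta = \theta^*$ yields the corollary.

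The routine parts are the decomposition of relative entropy and the application of first-order conditions. The main obstacle, in so far as there is one, is the regularity/integrability assumption needed to commute $\nabla_\theta$ with $\EXPECT_p$; this is the only nontrivial analytic step, and I would simply flag it as a standing smoothness and domination hypothesis on the parametric family $\{q^\theta\}$ (which is implicit in writing down score functions at all). No further calculation is required beyond these observations.
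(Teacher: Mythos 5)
Your argument is correct and is essentially the paper's own proof: both derive the claim from the first-order optimality condition $\nabla_\theta \mathcal{R}(p|q^\theta\circ\Pi)\big|_{\theta=\theta^*} = \vec{0}$ together with the interchange of $\nabla_\theta$ and $\EXPECT_p$. Your version merely spells out the cross-entropy decomposition and the domination hypothesis justifying the interchange, which the paper takes for granted.
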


\begin{proof}
By the definition of $\theta^*$, the gradient of $\mathcal{R}(p|q^\theta\circ\Pi)$ equals to $\vec{0}$ at $\theta = \theta^*$. Therefore, 
$\\nabla_\theta\mathcal{R}(p|q^\theta\circ\Pi)= \nabla_\theta \EXPECT_p\big[ \log q^\theta\circ\Pi(X) |_{\theta=\theta^*} \big] = 0$. The fact that 
$\EXPECT_p\big[\nabla_\theta \log q^\theta\circ\Pi(X) |_{\theta=\theta^*} \big] = \nabla_\theta \EXPECT_p\big[ \log q^\theta\circ\Pi(X) |_{\theta=\theta^*} \big]$ yields the result.
\end{proof}

Let $\xrightarrow{P}  $ denote convergence in probability. We say $\hat{\theta}_n \xrightarrow{P} \theta^*$ if, for every $\epsilon>0$,
\begin{displaymath}
P(|\hat{\theta}_N - \theta^*| >\epsilon) \rightarrow 0 \quad \text{as}\quad N \rightarrow \infty.
\end{displaymath}

\begin{corollary}{(Consistency of the estimator)}
\label{cor: consistency of estimator}
Suppose that 
$$\sup_{\theta} |\hat{R}_N(p|q^\theta\circ\Pi) - R(p|q^\theta\circ\Pi)| \xrightarrow{P} 0, $$
and that, for every $\epsilon>0$, 
$$\inf_{\theta:|\theta-\theta^*|>\epsilon} R(p|q^\theta\circ\Pi) > R(p|q^{\theta^*}\circ\Pi). $$
Then
\begin{displaymath}
\hat{\theta}_N \xrightarrow{P} \theta^* \quad\textit{as} \quad  N\rightarrow \infty
\end{displaymath}
\end{corollary}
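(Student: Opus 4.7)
\medskip

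The plan is to follow the standard argmin-consistency argument (Wald-type). Write $\hat R_N(\theta) := \hat R_N(p|q^\theta\circ\Pi)$ and $R(\theta) := \mathcal{R}(p|q^\theta\circ\Pi)$ for brevity. Since $\hat{\theta}_N$ maximizes $\ell_N$, it equivalently minimizes $\hat R_N$ (the $\log p(X_i)$ terms drop out because they do not depend on $\theta$), so $\hat R_N(\hat\theta_N) \le \hat R_N(\theta^*)$. I would first use this defining inequality, together with the uniform convergence hypothesis, to show that the population criterion evaluated at $\hat\theta_N$ is asymptotically no worse than at $\theta^*$.

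Concretely, I would insert $\pm \hat R_N$ into the difference $R(\hat\theta_N) - R(\theta^*)$ and split it as
\begin{displaymath}
\underbrace{[R(\hat\theta_N) - \hat R_N(\hat\theta_N)]}_{\le\, \sup_\theta|\hat R_N-R|} + \underbrace{[\hat R_N(\hat\theta_N) - \hat R_N(\theta^*)]}_{\le\, 0} + \underbrace{[\hat R_N(\theta^*) - R(\theta^*)]}_{\le\, \sup_\theta|\hat R_N-R|}.
\end{displaymath}
The middle bracket is nonpositive by the defining property of $\hat\theta_N$, and the first and third are each bounded by $\sup_\theta|\hat R_N(\theta)-R(\theta)|$, which converges to $0$ in probability by the first hypothesis. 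Therefore $R(\hat\theta_N) - R(\theta^*) \le 2\sup_\theta|\hat R_N-R| \xrightarrow{P} 0$.

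Next I would invoke the well-separated minimum hypothesis. Fix $\epsilon>0$ and set $\eta_\epsilon := \inf_{\theta:|\theta-\theta^*|>\epsilon} R(\theta) - R(\theta^*)$, which is strictly positive by assumption. On the event $\{|\hat\theta_N - \theta^*|>\epsilon\}$, by definition of the infimum we have $R(\hat\theta_N) - R(\theta^*) \ge \eta_\epsilon$, so
\begin{displaymath}
P(|\hat\theta_N - \theta^*|>\epsilon) \;\le\; P\bigl(R(\hat\theta_N) - R(\theta^*) \ge \eta_\epsilon\bigr) \;\le\; P\bigl(2\sup_\theta|\hat R_N-R| \ge \eta_\epsilon\bigr),
\end{displaymath}
and the right-hand side tends to $0$ by uniform convergence. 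Since $\epsilon$ was arbitrary, this yields $\hat\theta_N \xrightarrow{P} \theta^*$.

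The routine inequalities above are mechanical; the genuine content is entirely loaded into the two hypotheses, so there is no real obstacle in the proof itself. The only point that requires a brief justification is the equivalence between maximizing $\ell_N$ and minimizing $\hat R_N$, which follows because $\hat R_N(\theta) = \frac1N\sum_i \log p(X_i) - \ell_N(\theta)$ and the first term is $\theta$-independent; I would mention this explicitly at the start so that the argmin inequality $\hat R_N(\hat\theta_N)\le\hat R_N(\theta^*)$ is unambiguously available.
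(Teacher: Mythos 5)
Your proof is correct and is essentially the paper's own argument: the same insertion of $\pm\hat R_N$ combined with the argmin inequality $\hat R_N(\hat\theta_N)\le\hat R_N(\theta^*)$ and uniform convergence to get $R(\hat\theta_N)-R(\theta^*)\xrightarrow{P}0$, followed by the well-separation hypothesis to convert this into $P(|\hat\theta_N-\theta^*|>\epsilon)\to 0$. Your explicit remark that maximizing $\ell_N$ is equivalent to minimizing $\hat R_N$ (since the $\log p(X_i)$ terms are $\theta$-independent) is a small clarification the paper leaves implicit, but the route is the same.
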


\begin{proof}
Since $\theta^*$ minimizes $R(p|q^{\theta}\circ\Pi)$, so $R(p|q^{\theta^*}\circ\Pi) \leq R(p|q^{\hat{\theta}_N}\circ\Pi)$.
\begin{align*}
    R(p|q^{\hat{\theta}_N}\circ\Pi) - R(p|q^{\theta^*}\circ\Pi) & = R(p|q^{\hat{\theta}_N}\circ\Pi) - \hat{R}_N(p|q^{\theta^*}\circ\Pi) + \hat{R}_N(p|q^{\theta^*}\circ\Pi) - R(p|q^{\theta^*}\circ\Pi) \\
    &\leq R(p|q^{\hat{\theta}_N}\circ\Pi) - \hat{R}_N(p|q^{\hat{\theta}_N}\circ\Pi) + \hat{R}_N(p|q^{\theta^*}\circ\Pi) - R(p|q^{\theta^*}\circ\Pi) \\
    &\leq \sup_\theta |R(p|q^{\hat{\theta}_N}\circ\Pi) - \hat{R}_N(p|q^{\hat{\theta}_N}\circ\Pi)| + \hat{R}_N(p|q^{\theta^*}\circ\Pi) - R(p|q^{\theta^*}\circ\Pi) \\
    &\xrightarrow{P} 0 \COMMA
\end{align*}
where the first inequality follows from $\hat{R}_N(p|q^{\theta^*}\circ\Pi) \geq \hat{R}_N(p|q^{\hat{\theta}_N}\circ\Pi)$, and the last line follows from the first assumption. Hence for any $\delta>0$, we have 
$$P\big(R(p|q^{\hat{\theta}_N}\circ\Pi) > R(p|q^{\theta^*}\circ\Pi) + \delta\big)\rightarrow 0 .$$
By the second assumption, for any $\epsilon>0$, there exists $\delta>0$ such that $|\theta - \theta^*|>\epsilon$ implies $R(p|q^\theta\circ\Pi) > R(p|q^{\theta^*}\circ\Pi) + \delta$, hence 
$$P(|\hat{\theta}_N - \theta^*|>\epsilon) \leq P\big(R(p|q^{\hat{\theta}_N}\circ\Pi) > R(p|q^{\theta^*}\circ\Pi) + \delta\big)\rightarrow 0\COMMA$$
 yields the consistency of the estimator.
\end{proof}

The Fisher information matrices are defined as 
\begin{subequations}
\begin{align}
 {\Ff}_1(\theta^*) &:= -\EXPECT_p\big[\nabla_\theta^2\log q^\theta\circ\Pi(X) |_{\theta=\theta^*}  \big] \label{eq:F1 iid}, \\
 {\Ff}_2(\theta^*) &:= \EXPECT_p\big[(\nabla_\theta \log q^\theta\circ\Pi(X))(\nabla_\theta \log q^\theta\circ\Pi(X))^{tr} |_{\theta=\theta^*} \big] \label{eq:F2_iid}.
\end{align}
\end{subequations}
Here $\cdot ^{tr}$ denotes matrix transpose.

\begin{corollary}
\label{cor: F1equalsF2}
If  $ p = q^{\theta^*}\circ\Pi $, then
\begin{displaymath}
  {\Ff}_1(\theta^*) =  {\Ff}_2(\theta^*).
\end{displaymath}
\end{corollary}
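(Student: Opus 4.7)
The identity ${\Ff}_1(\theta^*) = {\Ff}_2(\theta^*)$ at a well-specified parameter is the classical information matrix equality (the ``Bartlett identity''), and my plan is simply to adapt its standard proof to the present coarse-grained setting. Writing $f^\theta := q^\theta \circ \Pi$, the first step is the pointwise calculus identity
\begin{equation*}
\nabla_\theta^2 \log f^\theta(x) = \frac{\nabla_\theta^2 f^\theta(x)}{f^\theta(x)} - \bigl(\nabla_\theta \log f^\theta(x)\bigr)\bigl(\nabla_\theta \log f^\theta(x)\bigr)^{tr},
\end{equation*}
which follows by differentiating the relation $\nabla_\theta \log f^\theta = \nabla_\theta f^\theta / f^\theta$ a second time.

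Next I would apply $\EXPECT_p$ to both sides and evaluate at $\theta = \theta^*$. The left-hand side becomes $-{\Ff}_1(\theta^*)$ by definition~\eqref{eq:F1 iid}, and the second term on the right is exactly $-{\Ff}_2(\theta^*)$ by definition~\eqref{eq:F2_iid}, so it suffices to show that the residual $\EXPECT_p\bigl[\nabla_\theta^2 f^\theta(X)/f^\theta(X)\bigr]\big|_{\theta=\theta^*}$ vanishes. At this point I would invoke the hypothesis $p = q^{\theta^*}\circ\Pi = f^{\theta^*}$, which cancels $f^{\theta^*}$ in the denominator against the sampling density $p$ and reduces the residual to $\int_{\mathcal{M}} \nabla_\theta^2 f^\theta(x)\big|_{\theta=\theta^*}\,dx$. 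Under standard regularity permitting two differentiations under the integral sign, this equals $\nabla_\theta^2 \int_{\mathcal{M}} f^\theta(x)\,dx \big|_{\theta=\theta^*} = \nabla_\theta^2 (1) = \vec 0$, using that $f^\theta$ is a probability density on $\mathcal{M}$ (implicit in the relative-entropy formulation~\eqref{eq:RE}).

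The only non-routine ingredient is the interchange of $\nabla_\theta^2$ with the integral over $\mathcal{M}$; this is the second-order analogue of the first-derivative interchange already used implicitly in Corollary~\ref{cor: expectation of gradient}, and in the i.i.d.\ parametric setting it is guaranteed by the usual dominated-convergence hypotheses on $\nabla_\theta^i f^\theta$ for $i=0,1,2$ in a neighborhood of $\theta^*$. Once those regularity assumptions are in force, the identity ${\Ff}_1(\theta^*) = {\Ff}_2(\theta^*)$ follows in one line from the decomposition above.
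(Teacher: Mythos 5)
Your proposal is correct and follows essentially the same route as the paper's own proof: the pointwise second-derivative identity for $\log(q^\theta\circ\Pi)$, taking $\EXPECT_p$ at $\theta=\theta^*$, and using $p=q^{\theta^*}\circ\Pi$ to reduce the residual term to $\nabla_\theta^2\int q^{\theta^*}\circ\Pi(X)\,dX = \nabla_\theta^2\,1 = 0$. Your explicit remark on the regularity needed to interchange $\nabla_\theta^2$ with the integral is a welcome addition that the paper leaves implicit.
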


\begin{proof}
$$ -\nabla_\theta^2\log q^\theta\circ\Pi(X) = (\nabla_\theta\log q^\theta\circ\Pi(X))(\nabla_\theta\log q^\theta\circ\Pi(X))^{tr} - \frac{\nabla_\theta^2 q^\theta\circ\Pi(X)}{q^\theta\circ\Pi(X)}. $$
Take expectation with respect to the measure $p$ at $\theta=\theta^*$ on both sides, yields,
$$ {\Ff}_1(\theta^*) =  {\Ff}_2(\theta^*) - \EXPECT_p\left[\frac{\nabla_\theta^2 q^\theta\circ\Pi(X)}{q^\theta\circ\Pi(X)} |_{\theta=\theta^*}  \right]. $$
If $ p = q^{\theta^*}\circ\Pi $, the last term
\begin{align*}
    \EXPECT_p\left[\frac{\nabla_\theta^2 q^\theta\circ\Pi(X)}{q^\theta\circ\Pi(X)} |_{\theta=\theta^*}  \right] & = \int \frac{\nabla_\theta^2 q^{\theta^*}\circ\Pi(X)}{q^{\theta^*}\circ\Pi(X)} q^{\theta^*}\circ\Pi(X) dX \\
    &= \int \nabla_\theta^2 q^{\theta^*}\circ\Pi(X) dX \\
    &= \nabla_\theta^2 \int q^{\theta^*}\circ\Pi(X) dX \\
    &= \nabla_\theta^2  \ 1 \\
    & = 0
\end{align*}
\end{proof}

Let $F_N$ denote the cumulative distribution function (CDF) of $\hat{\theta}_N$ and let $F$ denote the CDF of a normal random variable $\mathcal{N}(\mu, \sigma^2)$ with mean $\mu_0$ and variance $\sigma^2$. We say that 
$$\hat{\theta}_N \xrightarrow{D} \mathcal{N}(\mu_0, \sigma^2)$$ if
\begin{displaymath}
\lim_{N\rightarrow\infty} F_N(t) = F(t)
\end{displaymath}
at all $t$ for which $F$ is continuous.

\begin{theorem}
\label{thm:asymptotic iid 2}
\begin{enumerate}
    \item Under certain conditions,
\begin{displaymath}
\sqrt{N I^{-1}(\theta^*)}(\hat{\theta}_N - \theta^*) \xrightarrow{D} \mathcal{N}( {0}, \mathbb{I}),
\end{displaymath}
where
\begin{displaymath}
I(\theta^*) = (\Ff_1(\theta^*))^{-tr}\Ff_2(\theta^*)\Ff_1(\theta^*))^{-1}.
\end{displaymath}
\item
If $\Ff_1$ and $\Ff_2$ is estimated by 
\begin{align*}
\hat{\Ff_1}(\hat{\theta}_N) &= - \frac{1}{N} \sum_{i=1}^N \nabla_\theta^2 \log q^\theta\circ\Pi(X_i)|_{\theta=\hat{\theta}_N}, \\
\hat{\Ff_2}(\hat{\theta}_N) &= \frac{1}{N} \sum_{i=1}^N (\nabla_\theta \log q^\theta\circ\Pi(X_i))(\nabla_\theta \log q^\theta\circ\Pi(X_i))^{tr}|_{\theta=\hat{\theta}_N}.
\end{align*}
Then we have 
\begin{displaymath}
\sqrt{N \hat{I}^{-1}(\hat{\theta}_N)}  (\hat{\theta}_N - \theta^*) \xrightarrow{D} \mathcal{N}(\vec{0}, \mathbb{I}),
\end{displaymath}
where
\begin{displaymath}
\hat{I}(\hat{\theta}_N) = (\hat{\Ff_1}(\hat{\theta}_N))^{-tr}\hat{\Ff_2}(\hat{\theta}_N)(\hat{\Ff_1}(\hat{\theta}_N))^{-1}.
\end{displaymath}
\end{enumerate}

\end{theorem}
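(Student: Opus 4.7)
The plan is to run the standard M-estimator sandwich argument, specialized to the relative-entropy loss. First I would use the first-order optimality condition $\nabla_\theta\ell_N(\hat\theta_N)=\vec 0$ together with a first-order multivariate Taylor expansion of $\nabla_\theta \ell_N$ about $\theta^*$:
\begin{displaymath}
\vec 0 = \nabla_\theta \ell_N(\theta^*) + \nabla_\theta^2 \ell_N(\tilde\theta_N)(\hat\theta_N-\theta^*),
\end{displaymath}
where $\tilde\theta_N$ lies on the segment joining $\hat\theta_N$ and $\theta^*$. Solving for the deviation gives
\begin{displaymath}
\sqrt N(\hat\theta_N-\theta^*) = \bigl[-\nabla_\theta^2\ell_N(\tilde\theta_N)\bigr]^{-1}\sqrt N\,\nabla_\theta\ell_N(\theta^*).
\end{displaymath}

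The two probabilistic ingredients I need are a central limit theorem for the score and a law of large numbers for the observed information. For the score, Corollary~\ref{cor: expectation of gradient} gives $\EXPECT_p[\nabla_\theta\log q^{\theta^*}\circ\Pi(X)]=\vec 0$, so the i.i.d.\ multivariate CLT applied to the sample average $\nabla_\theta\ell_N(\theta^*)$ yields
\begin{displaymath}
\sqrt N\,\nabla_\theta\ell_N(\theta^*) \xrightarrow{D} \mathcal N\bigl(\vec 0,\Ff_2(\theta^*)\bigr),
\end{displaymath}
under a finite second-moment hypothesis on the score at $\theta^*$. For the Hessian, Corollary~\ref{cor: consistency of estimator} ensures $\hat\theta_N\xrightarrow{P}\theta^*$, hence also $\tilde\theta_N\xrightarrow{P}\theta^*$, and a uniform law of large numbers on a neighborhood of $\theta^*$ (continuity of $\nabla_\theta^2\log q^\theta\circ\Pi$ in $\theta$ together with an integrable dominating envelope) gives
\begin{displaymath}
-\nabla_\theta^2\ell_N(\tilde\theta_N) \xrightarrow{P} \Ff_1(\theta^*).
\end{displaymath}

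Combining these two limits through Slutsky's theorem and the continuous-mapping theorem applied to matrix inversion produces $\sqrt N(\hat\theta_N-\theta^*)\xrightarrow{D}\mathcal N\bigl(\vec 0,\Ff_1(\theta^*)^{-1}\Ff_2(\theta^*)\Ff_1(\theta^*)^{-1}\bigr)$, i.e.\ the sandwich covariance $I(\theta^*)$ (note that $\Ff_1$ is symmetric, so $\Ff_1^{-tr}=\Ff_1^{-1}$). Left-multiplying by the symmetric square root $\sqrt{I^{-1}(\theta^*)}$ normalizes the limit to $\mathcal N(\vec 0,\mathbb I)$ and establishes part~(1). For part~(2), the same consistency combined with the uniform LLN delivers $\hat\Ff_1(\hat\theta_N)\xrightarrow{P}\Ff_1(\theta^*)$ and $\hat\Ff_2(\hat\theta_N)\xrightarrow{P}\Ff_2(\theta^*)$; one further application of Slutsky replaces the deterministic normalizer $I(\theta^*)$ by the random $\hat I(\hat\theta_N)$ without altering the limit distribution.

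The main obstacle will be the uniform law of large numbers for the Hessian evaluated at the random intermediate point $\tilde\theta_N$: this is where the unstated regularity hypotheses --- smoothness of $\theta\mapsto q^\theta\circ\Pi$, integrable dominating envelopes for the first two $\theta$-derivatives of $\log q^\theta\circ\Pi$, invertibility of $\Ff_1(\theta^*)$, and the interior condition placing $\theta^*$ in the interior of the parameter set --- must be stated explicitly. These same conditions also justify the interchange of differentiation and expectation already used in Corollaries~\ref{cor: expectation of gradient} and~\ref{cor: F1equalsF2}, so they fit naturally into the ``certain conditions'' hypothesis of the theorem.
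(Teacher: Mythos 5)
Your proposal follows essentially the same route as the paper's proof: the first-order condition $\nabla_\theta\ell_N(\hat\theta_N)=\vec 0$, a Taylor expansion of the score about $\theta^*$, the i.i.d.\ CLT for $\sqrt N\,\nabla_\theta\ell_N(\theta^*)$ via Corollary~\ref{cor: expectation of gradient}, convergence of the Hessian at the intermediate point to $\Ff_1(\theta^*)$ using Corollary~\ref{cor: consistency of estimator}, and Slutsky's theorem to obtain the sandwich covariance. Your treatment is in fact somewhat more careful than the paper's, since you make explicit the uniform law of large numbers and dominating-envelope conditions that the paper leaves implicit under ``certain conditions.''
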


\begin{proof}
Let's define a score function 
$$\ell_N(\theta) = \frac{1}{N} \sum_{i=1}^{N} \log q^\theta\circ\Pi (X_i). $$

We notice that $\hat{\theta}_N = \arg\max_\theta \ell_N(\theta)$ by definition. If we assume the gradient of the score function with respect to $\theta$ exists, then the gradient of $\ell_N(\theta)$ at $\hat{\theta}_N$ must be zero.
$$\nabla_\theta\ell_N(\hat{\theta}_N) = \vec{0} .$$

Then use Taylor expansion to expand $\nabla_\theta \ell_N(\hat{\theta_N})$ at $\theta^*$, that is
$$\vec{0} = \nabla_\theta \ell_N(\hat{\theta_N}) = \nabla_\theta \ell_N(\theta^*) + \nabla^2_\theta \ell_N(\bar{\theta})(\hat{\theta}_N - \theta^*), $$
where $\hat{\theta}_N \leq \bar{\theta} \leq \theta^*$. Assume $-\nabla^2_\theta \ell_N(\bar{\theta}) $ is invertible and rearrange the equation to get
$$\sqrt{N}(\hat{\theta}_N - \theta^*) = (-\nabla^2_\theta \ell_N(\bar{\theta}) )^{-1} (\sqrt{N}\nabla_\theta \ell_N(\theta^*) ) . $$

Let $Y_i := \nabla_\theta \log q^\theta\circ\Pi(X_i)|_{\theta = \theta^*}$, then $\sqrt{N}\nabla_\theta \ell_N(\theta^*) = \sqrt{N} \bar{Y}$. $Y_i$'s are i.i.d random variables with mean $\vec{0}$ by Corollary \ref{cor: expectation of gradient} and variance $\Ff_2(\theta^*)$. By the Central Limit Theorem, we have the convergence in distribution 
$$\sqrt{N}\nabla_\theta \ell_N(\theta^*) \xrightarrow{D} \mathcal{N}(\vec{0}, \Ff_2(\theta^*)). $$

By Corollary \ref{cor: consistency of estimator}, $\hat{\theta}_N \xrightarrow{P} \theta^*$ as $N \rightarrow \infty$. 
Because $\hat{\theta}_N \leq \bar{\theta} \leq \theta^*$, $\bar{\theta}\xrightarrow{P} \theta^*$. And by the Law of Large Number, we have the convergence $-\nabla^2_\theta \ell_N(\bar{\theta}) \xrightarrow {P} \Ff_1(\bar{\theta})$. If $F_1$ is continuous, $\Ff_1(\bar{\theta})\xrightarrow{P} \Ff_1(\theta^*)$. Thus
$$-\nabla^2_\theta \ell_N(\bar{\theta}) \xrightarrow{P} F_1(\theta^*). $$

Using Slutsky's theorem to combine these two convergences together yields
$$\sqrt{N}(\hat{\theta}_N - \theta^*) \xrightarrow{D} \mathcal{N}(\vec{0}, (\Ff_1(\theta^*))^{-tr}\Ff_2(\theta^* )(\Ff_1(\theta^*))^{-1} ) .$$
This proves the theorem \ref{thm:asymptotic iid 2}. 
\end{proof}

\subsection{Asymptotic theorems for time-series data}
\label{sec: time series}
In this section, we assume fine-scale data
\begin{displaymath}
X_0^N := (X_0, X_1, \ldots, X_N)\COMMA
\end{displaymath}
being time-series data, generated by an unknown Markovian model $P$ with time invariant transition probability $p(x,x')$ and stationary distribution $\mu(x)$. That is 
\begin{displaymath}
P(X_0^N) = \mu(X_0)p(X_0, X_1)p(X_1, X_2) \ldots p(X_{N-1}, X_N).
\end{displaymath}
Here we assume $X_0$ starts in stationary measure. 



We can get the optimal coarse grained model by minimizing the Relative Entropy Rate
\begin{equation}
\label{eq:RER MC}
\mathcal{H}(P|Q^\theta\circ\Pi) := \EXPECT_{\mu}\big[ \sum_{X'}p(X,X')\log\frac{p(X,X')}{q^\theta\circ\Pi (X,X')} \big].
\end{equation}
In addition, the RER $\mathcal{H}(P|Q^\theta\circ\Pi)$ has an unbiased estimator $\hat{H}_N(p|q^\theta\circ\Pi)$

\begin{equation}
\label{eq:RER N MC}
\hat{H}_N(P|Q^\theta\circ\Pi) := \frac{1}{N}\sum_{i=0}^{N-1} \log \frac{p(X_i, X_{i+1})}{q^\theta\circ\Pi(X_i, X_{i+1})}.
\end{equation}

Thus the minimization of RER \eqref{eq:RER MC} is asymptotically equivalent to the optimization problem
\begin{displaymath}
\min_{\theta} \hat{H}_N(P|Q^\theta\circ\Pi) = \max_{\theta} \frac{1}{N} \sum_{i=0}^{N-1} \log q^\theta\circ\Pi(X_i, X_{i+1}).
\end{displaymath}

Similarly define 

\begin{equation}
\label{eq: Likelihood MC}
\ell_N(\theta) := \frac{1}{N} \sum_{i=0}^{N-1} \log q^\theta\circ\Pi(X_i, X_{i+1}).
\end{equation}

\begin{align*}
\theta^* &:= \arg\min_{\theta} \mathcal{H}(P|Q^\theta\circ\Pi), \\
\hat{\theta}_N &:= \arg\max_{\theta} \frac{1}{N} \sum_{i=0}^{N-1} \log q^\theta\circ\Pi(X_i, X_{i+1}).
\end{align*}

\begin{corollary}
\label{cor: expect of gradient MC}
\begin{displaymath}
\EXPECT_\mu\big[\sum_{X'}p(X,X') \nabla_\theta \log q^\theta\circ\Pi(X, X') |_{\theta=\theta^*} \big] = 0 .
\end{displaymath}
\end{corollary}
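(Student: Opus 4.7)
The plan is to mirror the argument of Corollary \ref{cor: expectation of gradient} in the Markov-chain setting. The starting point is that $\theta^*$ minimizes $\mathcal{H}(P|Q^\theta\circ\Pi)$, so the first-order optimality condition forces $\nabla_\theta \mathcal{H}(P|Q^\theta\circ\Pi)|_{\theta=\theta^*} = \vec{0}$. I would then split the RER in \eqref{eq:RER MC} into a $\theta$-independent piece (the entropy-rate term involving $\log p(X,X')$) and a cross-entropy-rate piece involving $\log q^\theta\circ\Pi(X,X')$; only the latter contributes to the gradient.

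Concretely, I would write
\begin{equation*}
\nabla_\theta \mathcal{H}(P|Q^\theta\circ\Pi) = -\nabla_\theta \EXPECT_\mu\left[\sum_{X'} p(X,X')\log q^\theta\circ\Pi(X,X')\right],
\end{equation*}
then interchange $\nabla_\theta$ with $\EXPECT_\mu$ and with the inner sum over $X'$, using that neither $\mu(X)$ nor $p(X,X')$ depends on $\theta$, and finally evaluate at $\theta=\theta^*$. Setting the result equal to zero yields exactly the claimed identity.

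The only technical step is the exchange of differentiation with integration/summation. This is routine under mild regularity, for instance a local dominated-convergence bound on $\|\nabla_\theta \log q^\theta\circ\Pi(x,x')\|$ that is integrable with respect to $\mu(x)p(x,x')$ in a neighborhood of $\theta^*$, and I would simply invoke it as an assumption, since the same regularity is implicitly required throughout the rest of Section \ref{sec: time series} (in particular, in the Taylor expansion that will drive the forthcoming asymptotic-normality argument for the Markovian case). Beyond that interchange there is no genuine obstacle; the corollary is essentially a one-line consequence of first-order optimality, exactly parallel to the i.i.d.\ version.
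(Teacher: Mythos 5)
Your proposal is correct and follows exactly the route the paper takes: differentiate the relative entropy rate in Eq.~\eqref{eq:RER MC}, note that only the $\log q^\theta\circ\Pi$ term depends on $\theta$, and invoke first-order optimality at $\theta^*$. The paper states this in one line; you have simply made the interchange of differentiation with the expectation and inner sum explicit, which is the same argument with the routine regularity assumption spelled out.
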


\begin{proof}
It could be directly proved by taking the gradient in Eq.~\eqref{eq:RER MC} and use the fact that $\theta^*$ is the argument of the minimum.
\end{proof}

\begin{corollary}[Consistency of the estimator]
\label{cor: consistency of estimator MC}
Suppose that 
$$\sup_{\theta \in \Theta} | \hat{H}_N(P|Q^\theta\circ\Pi) - \mathcal{H}(P|Q^\theta\circ\Pi) | \xrightarrow{P} 0 ,$$
 and that for every $\epsilon>0$,
 $$\sup_{\theta:|\theta - \theta^*|\geq \epsilon} \mathcal{H}(P|Q^\theta\circ\Pi) > \mathcal{H}(P|Q^\theta\circ\Pi)|_{\theta = \theta^*}. $$
 Then we have the consistency of the estimator,
\begin{displaymath}
\hat{\theta}_N \xrightarrow{P} \theta^* \quad\textit{as}\quad N \rightarrow \infty
\end{displaymath}
\end{corollary}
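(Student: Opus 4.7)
The plan is to mirror the argument used in the i.i.d.\ case of Corollary \ref{cor: consistency of estimator}, replacing the empirical relative entropy $\hat{R}_N$ by the empirical relative entropy rate $\hat{H}_N$ and the relative entropy $\mathcal{R}$ by the relative entropy rate $\mathcal{H}$. The overall structure has two stages: first show that the excess relative-entropy-rate $\mathcal{H}(P|Q^{\hat{\theta}_N}\circ\Pi)-\mathcal{H}(P|Q^{\theta^*}\circ\Pi)$ tends to zero in probability as $N\to\infty$; then use the identifiability hypothesis to convert this into convergence of $\hat{\theta}_N$ to $\theta^*$ in probability.

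For the first stage, I would introduce the telescoping decomposition
$$ \mathcal{H}(P|Q^{\hat{\theta}_N}\circ\Pi) - \mathcal{H}(P|Q^{\theta^*}\circ\Pi) = A_N + B_N + C_N, $$
where $A_N := \mathcal{H}(P|Q^{\hat{\theta}_N}\circ\Pi) - \hat{H}_N(P|Q^{\hat{\theta}_N}\circ\Pi)$, $B_N := \hat{H}_N(P|Q^{\hat{\theta}_N}\circ\Pi) - \hat{H}_N(P|Q^{\theta^*}\circ\Pi)$, and $C_N := \hat{H}_N(P|Q^{\theta^*}\circ\Pi) - \mathcal{H}(P|Q^{\theta^*}\circ\Pi)$. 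By the definition of $\hat{\theta}_N$ as the minimizer of $\hat{H}_N$, one has $B_N\leq 0$; and both $|A_N|$ and $|C_N|$ are dominated by $\sup_\theta|\hat{H}_N(P|Q^\theta\circ\Pi)-\mathcal{H}(P|Q^\theta\circ\Pi)|$, which vanishes in probability by the first hypothesis. Combining this with the non-negativity of the excess (because $\theta^*$ minimizes $\mathcal{H}$) gives $\mathcal{H}(P|Q^{\hat{\theta}_N}\circ\Pi) - \mathcal{H}(P|Q^{\theta^*}\circ\Pi) \xrightarrow{P} 0$.

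For the second stage, I would apply the identifiability condition contrapositively: for every $\epsilon>0$ there exists $\delta>0$ such that $|\theta-\theta^*|>\epsilon$ forces $\mathcal{H}(P|Q^\theta\circ\Pi)>\mathcal{H}(P|Q^{\theta^*}\circ\Pi)+\delta$. Hence
$$ P(|\hat{\theta}_N-\theta^*|>\epsilon) \leq P\bigl(\mathcal{H}(P|Q^{\hat{\theta}_N}\circ\Pi)-\mathcal{H}(P|Q^{\theta^*}\circ\Pi)>\delta\bigr)\longrightarrow 0, $$
which is exactly $\hat{\theta}_N\xrightarrow{P}\theta^*$.

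The genuine obstacle I expect is not in the skeleton above, which is essentially a transcription of the i.i.d.\ proof, but rather in the plausibility and eventual verification of the first hypothesis for Markovian data. Even pointwise convergence of $\hat{H}_N(P|Q^\theta\circ\Pi)$ to $\mathcal{H}(P|Q^\theta\circ\Pi)$ now requires an ergodic theorem for the stationary chain $(X_i,X_{i+1})$ in place of the classical strong law, and upgrading pointwise to uniform convergence typically demands either compactness of $\Theta$ together with a dominated equicontinuity property of $\theta\mapsto\log q^\theta\circ\Pi(x,x')$, or a stochastic equicontinuity argument tailored to process-indexed empirical functionals. Since uniform convergence is posited here as an assumption, the proof itself inherits the i.i.d.\ template verbatim modulo notation.
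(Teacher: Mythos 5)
Your proof is correct and is essentially the paper's own argument: the paper simply states that the proof is identical to that of Corollary \ref{cor: consistency of estimator} (the i.i.d.\ case), and your three-term telescoping decomposition with $B_N\leq 0$ is just a mild rewriting of that proof's two-step inequality chain, followed by the same use of the identifiability condition (which, as you implicitly do, must be read with $\inf$ rather than the $\sup$ that appears in the paper's statement). Your closing remarks on why uniform convergence is harder to verify for Markovian data are sensible but, as you note, moot here since it is taken as a hypothesis.
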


\begin{proof}
Proof is same as the proof of Corollary \ref{cor: consistency of estimator}. 
\end{proof}

\begin{corollary}
\label{cor: 2-step mc}
If $X_i$ is a Markov chain with stationary distribution $\mu(x)$ and transition probability $p(x, x')$. Then
$$S_i = (X_i, X_{i+1}) \text{ is also a Markov chain}$$
with stationary distribution 
$$\mu'((x_1, x_2)) = \mu(x_1)p(x_1, x_2)$$ and transition probability 
$$p'((x_1, x_2), (x_1', x_2')) = \left\{\begin{array}{ll}
p(x_2, x_2')     & \text{if}\quad x_2 = x_1' \\
0     & \text{otherwise}
\end{array} \right. .$$
\end{corollary}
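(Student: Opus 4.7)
The plan is to verify two independent claims: that $\{S_i\}$ satisfies the Markov property with the stated transition kernel $p'$, and that the measure $\mu'(x_1,x_2) = \mu(x_1)p(x_1,x_2)$ is stationary for $p'$.

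For the Markov property, the key observation is that $\sigma(S_0,\ldots,S_i) = \sigma(X_0,X_1,\ldots,X_{i+1})$, so conditioning on the history of $S$ up through time $i$ is equivalent to conditioning on the history of $X$ up through time $i+1$. Since $S_{i+1} = (X_{i+1}, X_{i+2})$ and $X_{i+1}$ is already determined by $S_i$, the only random component of $S_{i+1}$ is $X_{i+2}$; its conditional law given $X_0,\ldots,X_{i+1}$ is $p(X_{i+1},\cdot)$ by the Markov property of $X$, which depends on the history only through $X_{i+1}$, hence only through $S_i$. The deterministic constraint that the first coordinate of $S_{i+1}$ must equal the second coordinate of $S_i$ accounts for the zero-otherwise branch in the definition of $p'$.

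For stationarity, I would compute directly
\begin{align*}
\sum_{(x_1,x_2)} \mu'(x_1,x_2)\, p'((x_1,x_2),(x_1',x_2'))
&= \sum_{x_1} \mu(x_1)\, p(x_1,x_1')\, p(x_1',x_2') \\
&= p(x_1',x_2') \sum_{x_1} \mu(x_1)\, p(x_1,x_1') \\
&= p(x_1',x_2')\,\mu(x_1') \;=\; \mu'(x_1',x_2'),
\end{align*}
where the zero branch of $p'$ collapses the outer sum to $x_2 = x_1'$ and the penultimate equality uses stationarity of $\mu$ under $p$. I would also briefly note that $\mu'$ is a probability measure, which is immediate from $\sum_{x_2} p(x_1,x_2) = 1$ together with $\sum_{x_1}\mu(x_1)=1$.

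The main obstacle here is purely notational bookkeeping around the joint state $(x_1, x_2)$ and its induced support structure; no deep machinery is required, since the result is essentially the standard fact that a sliding window of a Markov chain is itself Markov on the product state space. The reason to record it is downstream utility: the time-series log-likelihood in \eqref{eq: Likelihood MC} is a sum over consecutive pairs $S_i = (X_i,X_{i+1})$, so recasting the analysis as a Markov chain on the pair space enables the ergodic theorems and CLT for $S_i$ to be invoked in the subsequent asymptotic arguments.
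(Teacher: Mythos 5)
Your proposal is correct and follows essentially the same route as the paper: the stationarity of $\mu'$ under $p'$ is verified by collapsing the sum to $x_2 = x_1'$ and invoking the stationarity of $\mu$ under $p$, exactly as in the paper's computation. The only difference is that you additionally spell out the Markov property of $S_i$ via the equality $\sigma(S_0,\ldots,S_i)=\sigma(X_0,\ldots,X_{i+1})$, a step the paper dismisses as ``clear''; this makes your write-up slightly more complete but does not change the argument.
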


\begin{proof}
$X_i$ follows
$$\int \mu(X_i) p(X_i, X_{i+1}) dX_i = \mu(X_{i+1}),\quad \text{ for all } i \PERIOD$$
Multiply the above equation by $p'((X_{i}, X_{i+1}), (X_{i+1}, X_{i+2}))$ to get 
$$\int \mu(X_i) p(X_i, X_{i+1}) p'((X_{i}, X_{i+1}), (X_{i+1}, X_{i+2})) dX_i = \mu(X_{i+1}) p'((X_{i}, X_{i+1}), (X_{i+1}, X_{i+2})), $$
that is $$\int \mu'((X_i,X_{i+1})) p'((X_{i}, X_{i+1}), (X_{i+1}, X_{i+2})) dX_i = \mu'((X_{i+1},X_{i+2}))\quad \text{ for all } i. $$
Clearly $S_i = (X_i, X_{i+1})$ is a Markov chain with stationary distribution $\mu'$ and transition probability $p'$.
\end{proof}

\label{def: FIM MC}
Two Fisher information matrices are defined as
\begin{subequations}
\begin{align}
\Ff_1(\theta^*) &:= -\EXPECT_{\mu'}\big[ \nabla_\theta^2\log q^\theta\circ\Pi(X, X') |_{\theta=\theta^*}  \big] \label{eq:F1 MC}, \\
\Ff_2(\theta^*) &:= \EXPECT_{\mu'}\big[ (\nabla_\theta \log q^\theta\circ\Pi(X, X'))(\nabla_\theta \log q^\theta\circ\Pi(X, X'))^T |_{\theta=\theta^*} \big] \label{eq:F2 MC}.
\end{align}
\end{subequations}
Note that here $\EXPECT_{\mu'}[(\cdot)] = \int\int \mu'(X,X') (\cdot) dXdX'$

\begin{corollary}
If $ p = q^{\theta^*}\circ\Pi$, then
\begin{displaymath}
\Ff_1(\theta^*) = \Ff_2(\theta^*)
\end{displaymath}
\end{corollary}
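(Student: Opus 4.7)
The plan is to mirror the proof of Corollary \ref{cor: F1equalsF2} in the i.i.d.\ setting, adapting it to the Markovian measure $\mu'(x,x') = \mu(x)p(x,x')$. As a first step I would write down the pointwise algebraic identity obtained by differentiating $\log q^\theta\circ\Pi(x,x')$ twice in $\theta$, namely
\[
-\nabla_\theta^2\log q^\theta\circ\Pi(x,x') = \bigl(\nabla_\theta\log q^\theta\circ\Pi(x,x')\bigr)\bigl(\nabla_\theta\log q^\theta\circ\Pi(x,x')\bigr)^{tr} - \frac{\nabla_\theta^2 q^\theta\circ\Pi(x,x')}{q^\theta\circ\Pi(x,x')}.
\]
Taking $\EXPECT_{\mu'}[\,\cdot\,]$ at $\theta=\theta^*$ on both sides immediately gives
\[
\Ff_1(\theta^*) = \Ff_2(\theta^*) - \EXPECT_{\mu'}\!\Bigl[\tfrac{\nabla_\theta^2 q^\theta\circ\Pi(X,X')}{q^\theta\circ\Pi(X,X')}\big|_{\theta=\theta^*}\Bigr],
\]
so the whole content of the corollary reduces to showing that the second term on the right is zero under the hypothesis $p = q^{\theta^*}\circ\Pi$.

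Next I would interpret the assumption $p = q^{\theta^*}\circ\Pi$ as saying that the transition kernel $p(x,x')$ coincides with $q^{\theta^*}\circ\Pi(x,x')$, so that in particular $\mu'(x,x') = \mu(x)\,q^{\theta^*}\circ\Pi(x,x')$. Plugging this into the remaining expectation cancels the denominator and yields
\[
\int\!\!\int \mu(x)\,\nabla_\theta^2 q^{\theta^*}\circ\Pi(x,x')\,dx'\,dx.
\]
Exchanging $\nabla_\theta^2$ with the inner integral (justified by the usual dominated-convergence / interchange-of-integration-and-differentiation regularity assumptions that are implicit in defining $\Ff_1,\Ff_2$) and using $\int q^{\theta^*}\circ\Pi(x,x')\,dx' = 1$ for every $x$ — since $q^\theta\circ\Pi$ is a transition density — makes the whole expression vanish.

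Finally I would combine the two displays to conclude $\Ff_1(\theta^*) = \Ff_2(\theta^*)$. The only genuinely subtle point, which I would flag as the main obstacle, is the meaning of the hypothesis $p = q^{\theta^*}\circ\Pi$: because $p$ here denotes a \emph{transition} density rather than the marginal density of the i.i.d.\ case, one must be careful that matching transitions also forces the joint measure $\mu'$ to factor as $\mu(x)q^{\theta^*}\circ\Pi(x,x')$, which is what allows the $q^{\theta^*}\circ\Pi$ in the numerator to cancel the $q^{\theta^*}\circ\Pi$ in the denominator. Once this is written explicitly, the rest is the same normalization-of-a-probability argument used in the i.i.d.\ proof, now applied fiberwise in $x'$ and then integrated against $\mu$ in $x$.
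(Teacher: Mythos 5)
Your proposal is correct and follows essentially the same route as the paper's own proof: the same pointwise second-derivative identity, expectation under $\mu'$, cancellation of the transition density $p(x,x')=q^{\theta^*}\circ\Pi(x,x')$ against the denominator, and the normalization-of-a-transition-density argument to kill the remaining term. Your explicit remark that the hypothesis must be read at the level of transition kernels (so that $\mu'$ factors as $\mu(x)\,q^{\theta^*}\circ\Pi(x,x')$) is a point the paper leaves implicit, but it is the same argument.
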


\begin{proof}
It is similar to the proof in Corollary \ref{cor: F1equalsF2}.
$$ -\nabla_\theta^2\log q^\theta\circ\Pi(X, X') = (\nabla_\theta\log q^\theta\circ\Pi(X, X'))(\nabla_\theta\log q^\theta\circ\Pi(X, X'))^{tr} - \frac{\nabla_\theta^2 q^\theta\circ\Pi(X, X')}{q^\theta\circ\Pi(X, X')}. $$
Take expectation with respect to $\mu'$ at $\theta=\theta^*$ on both sides, yields
$$\Ff_1(\theta^*) = \Ff_2(\theta^*) - \EXPECT_{\mu'}\left[\frac{\nabla_\theta^2 q^\theta\circ\Pi(X,X')}{q^\theta\circ\Pi(X,X')} |_{\theta=\theta^*}  \right]. $$
If $ p = q^{\theta^*}\circ\Pi $, the last term
\begin{align*}
    \EXPECT_{\mu'}\big[\frac{\nabla_\theta^2 q^\theta\circ\Pi(X,X')}{q^\theta\circ\Pi(X,X')} |_{\theta=\theta^*}  \big] & = \int\int \mu(X)p(X,X')\frac{\nabla_\theta^2 q^{\theta^*}\circ\Pi(X, X')}{q^\theta\circ\Pi(X,X')}  dXdX' \\
    &= \int\int \mu(X) \nabla_\theta^2 q^{\theta^*}\circ\Pi(X, X') dXdX' \\
    &= \nabla_\theta^2 \int\int \mu(X) q^{\theta^*}\circ\Pi(X,X') dXdX' \\
    &= \nabla_\theta^2  1 \\
    & = 0
\end{align*}
\end{proof}

\begin{theorem}
\label{thm:asymptotic mc}
If Markov chain $(X_i)$ is finite or bounded and functional $\nabla_\theta(q^\theta\circ\Pi)$ has finite second moment, then 
\begin{displaymath}
\sqrt{N} (\hat{\theta}_N - \theta^*) \xrightarrow{D} \mathcal{N}(0, I(\theta^*)),
\end{displaymath}
where
\begin{displaymath}
I(\theta^*) = (\Ff_1(\theta^*))^{-T}\mathbf{\Sigma}(\theta^*)\Ff_1(\theta^*))^{-1},
\end{displaymath}
\begin{align*}
\mathbf{\Sigma}(\theta^*) &= E_{\mu'}[(\nabla_\theta\log q^\theta(\Pi X_0, \Pi X_1))(\nabla_\theta\log q^\theta(\Pi X_0, \Pi X_1))^T]|_{\theta=\theta^*}\\
&+ 2 \sum_{i=1}^\infty E_{\mu'}[(\nabla_\theta\log q^\theta(\Pi X_i, \Pi X_{i+1}))(\nabla_\theta\log q^\theta(\Pi X_i, \Pi X_{i+1}))^T]|_{\theta=\theta^*},
\end{align*}
and $\mu'(X,X') := \mu(X)p(X,X')$.
\end{theorem}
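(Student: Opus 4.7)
The plan is to adapt the argument used in Theorem~\ref{thm:asymptotic iid 2} to the Markov-chain setting, substituting the i.i.d.\ Law of Large Numbers and classical Central Limit Theorem by their ergodic Markov-chain counterparts while keeping the algebraic backbone identical. First I would exploit the score equation $\nabla_\theta \ell_N(\hat\theta_N) = \vec 0$, valid because $\hat\theta_N$ is an interior maximizer, and Taylor-expand around $\theta^*$ to obtain
\begin{displaymath}
\vec 0 = \nabla_\theta \ell_N(\theta^*) + \nabla_\theta^2 \ell_N(\bar\theta)(\hat\theta_N - \theta^*),
\end{displaymath}
with $\bar\theta$ on the segment between $\hat\theta_N$ and $\theta^*$; rearranging yields
\begin{displaymath}
\sqrt{N}(\hat\theta_N - \theta^*) = \bigl(-\nabla_\theta^2 \ell_N(\bar\theta)\bigr)^{-1}\sqrt{N}\,\nabla_\theta \ell_N(\theta^*).
\end{displaymath}

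For the Hessian factor, Corollary~\ref{cor: consistency of estimator MC} yields $\hat\theta_N \xrightarrow{P} \theta^*$, hence $\bar\theta \xrightarrow{P} \theta^*$. By Corollary~\ref{cor: 2-step mc} the pair process $S_i := (X_i, X_{i+1})$ is itself a Markov chain with stationary distribution $\mu'$, so under the ergodicity implicit in the finite-or-bounded hypothesis the Birkhoff ergodic theorem applied to the function $-\nabla_\theta^2 \log q^\theta\circ\Pi$ gives
\begin{displaymath}
-\nabla_\theta^2 \ell_N(\theta^*) = -\frac{1}{N}\sum_{i=0}^{N-1} \nabla_\theta^2 \log q^\theta\circ\Pi(X_i, X_{i+1})\Big|_{\theta=\theta^*} \xrightarrow{P} \Ff_1(\theta^*);
\end{displaymath}
upgrading the evaluation from $\theta^*$ to $\bar\theta$ uses continuity of the Hessian in $\theta$ together with $\bar\theta \xrightarrow{P} \theta^*$.

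For the score factor I set $Y_i := \nabla_\theta \log q^\theta\circ\Pi(X_i, X_{i+1})|_{\theta = \theta^*}$, so that $\sqrt{N}\,\nabla_\theta \ell_N(\theta^*) = \sqrt{N}\,\bar Y$. By Corollary~\ref{cor: expect of gradient MC} we have $\EXPECT_{\mu'}[Y_i] = \vec 0$, and $(Y_i)$ is a stationary, centered, $L^2$ functional of the ergodic chain $(S_i)$. Invoking a Central Limit Theorem for functionals of Markov chains (e.g.\ Gordin's martingale-approximation theorem, or the Kipnis--Varadhan CLT, which apply because the finite or bounded state space gives uniform, hence geometric, ergodicity) I would deduce
\begin{displaymath}
\sqrt{N}\,\nabla_\theta\ell_N(\theta^*) \xrightarrow{D} \mathcal{N}(\vec 0, \mathbf{\Sigma}(\theta^*)),
\end{displaymath}
where the long-run variance $\mathbf{\Sigma}(\theta^*)$ consists of $\EXPECT_{\mu'}[Y_0 Y_0^T]$ plus the absolutely summable lag-$k$ autocovariances of the stationary sequence $(Y_i)$. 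Slutsky's theorem then combines the two limits into the desired sandwich form $\mathcal{N}\bigl(\vec 0,\, \Ff_1(\theta^*)^{-1}\mathbf{\Sigma}(\theta^*)\Ff_1(\theta^*)^{-1}\bigr)$.

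The main obstacle, and the genuinely new ingredient beyond the i.i.d.\ argument, is the Markov-chain CLT with a well-defined long-run variance: in contrast to the i.i.d.\ case, the asymptotic variance of $\sqrt{N}\,\bar Y$ is not simply $\Ff_2(\theta^*)$ but must incorporate serial correlation through a (telescoping) sum of autocovariances, whose absolute summability requires some form of mixing. On a finite or compact state space with a reasonable transition kernel this follows from uniform ergodicity, and I would make the ergodicity assumption explicit rather than leave it buried in ``finite or bounded''; the finite-second-moment hypothesis on $\nabla_\theta(q^\theta\circ\Pi)$ then supplies the integrability needed for the martingale-approximation construction. The remaining pieces---the ergodic LLN for the Hessian and the final application of Slutsky's theorem---are routine.
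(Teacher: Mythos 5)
Your proposal follows essentially the same route as the paper's proof: the score equation and Taylor expansion around $\theta^*$, the sandwich rearrangement, consistency of $\hat\theta_N$ plus an ergodic law of large numbers and continuity for the Hessian factor, a Markov-chain CLT (the paper cites Jones~2004 where you cite Gordin/Kipnis--Varadhan, but these rest on the same uniform-ergodicity-from-boundedness argument) for the score factor, and Slutsky's theorem to combine. Your remark that the long-run variance should be a sum of lag-$k$ autocovariances $\mathrm{Cov}(Y_0,Y_i)$ is consistent with the paper's own definition of $\mathbf{\Sigma}(\theta^*)$ in the proof, and is in fact a more careful rendering than the displayed formula in the theorem statement.
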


\begin{proof}
It is similar to the proof of Theorem \ref{thm:asymptotic iid 2}, but here we need Markov chain Central Limit theorem(\cite{jones2004markov}). We still define a score function 
$$\ell_N(\theta) = \frac{1}{N} \sum_{i=0}^{N-1} \log q^\theta(\Pi X_i, \Pi X_{i+1}). $$
$\hat{\theta}_N = \arg\max_\theta \ell_N(\theta)$ by definition. If we assume the gradient of the score function with respect to $\theta$ exists, then the gradient of $\ell_N(\theta)$ at $\hat{\theta}_N$ must be zero.
$$\nabla_\theta\ell_N(\hat{\theta}_N) = \vec{0} .$$
Then use Taylor expansion to expand $\nabla_\theta \ell_N(\hat{\theta}_N)$ at $\theta^*$.
$$\vec{0} = \nabla_\theta \ell_N(\hat{\theta_N}) = \nabla_\theta \ell_N(\theta^*) + \nabla^2_\theta \ell_N(\bar{\theta})(\hat{\theta}_N - \theta^*), $$
where $\hat{\theta}_N \leq \bar{\theta} \leq \theta^*$. Assume $-\nabla^2_\theta \ell_N(\bar{\theta}) $ is invertible and rearrange the equation to get
$$\sqrt{N}(\hat{\theta}_N - \theta^*) = (-\nabla^2_\theta \ell_N(\bar{\theta}) )^{-1} (\sqrt{N}\nabla_\theta \ell_N(\theta^*) ). $$
Now let $Y_i = \nabla_\theta \log q^\theta(\Pi X_i, \Pi X_{i+1})|_{\theta = \theta^*}$, then $\sqrt{N}\nabla_\theta \ell_N(\theta^*) = \sqrt{N} \bar{Y}$. $Y_i$'s are functionals of Markov chains. The conditions which guarantee the Central Limit Theorem for $Y_i$'s are discussed in Jones's paper \cite{jones2004markov}. In our case, if $X_i$ is finite or bounded, then $X_i$ is uniformly ergodic Markov chain, as well as $(X_i, X_{i+1})$. If $\Ff_2(\theta^*)$ is finite, i.e., the second moment of functional of the Markov chain is finite, then we have the central limit theorem:
$$\sqrt{N}\bar{Y} \xrightarrow{D} \mathcal{N}(\vec{0}, \mathbf{\Sigma}(\theta^*)), $$
where 
\begin{align*}
\mathbf{\Sigma}(\theta^*) &:= Var(Y_0) + \sum_{i=1}^\infty Cov(Y_0,Y_i)\\
 &= \EXPECT_{\mu'}[(\nabla_\theta\log q^\theta(\Pi X_0, \Pi X_1))(\nabla_\theta\log q^\theta(\Pi X_0, \Pi X_1))^{tr}]|_{\theta=\theta^*}\\
&+ 2 \sum_{i=1}^\infty \EXPECT_{\mu'}[(\nabla_\theta\log q^\theta(\Pi X_i, \Pi X_{i+1}))(\nabla_\theta\log q^\theta(\Pi X_i, \Pi X_{i+1}))^{tr}]|_{\theta=\theta^*}.
\end{align*}
Here $\mu'(X_i, X_{i+1}) = \mu(X_i)p(X_i, X_{i+1}).$
The mean is $\vec{0}$ by Corollary \ref{cor: expect of gradient MC}.
By Corollary \ref{cor: consistency of estimator MC}, $\hat{\theta}_N \xrightarrow{P} \theta^*$ as $N \rightarrow \infty$. 
Since $\hat{\theta}_N \leq \bar{\theta} \leq \theta^*$, $\bar{\theta}\xrightarrow{P} \theta^*$. And, by the Law of Large Numbers, we have the convergence $-\nabla^2_\theta \ell_N(\bar{\theta}) \xrightarrow {P} \Ff_1(\bar{\theta})$. If $\Ff_1$ is continuous, 
$$-\nabla^2_\theta \ell_N(\bar{\theta}) \xrightarrow{P} \Ff_1(\theta^*). $$
Using Slutsky's theorem to combine these two convergences together yields
$$\sqrt{N}(\hat{\theta}_N - \theta^*) \xrightarrow{D} \mathcal{N}(\vec{0}, (\Ff_1(\theta^*))^{-T}\Sigma(\theta^* )(\Ff_1(\theta^*))^{-1} ) .$$
This proves the theorem \ref{thm:asymptotic mc}.
\end{proof}

\begin{corollary}
\label{cor:asymptotic mc estimated}
If $\Ff_1(\theta^*)$ is estimated by 
\begin{align*}
\hat{\Ff_1}(\hat{\theta}_N) &= - \frac{1}{N} \sum_{i=0}^{N-1} \nabla_\theta^2 \log q^\theta\circ\Pi(X_i, X_{i+1})|_{\theta=\hat{\theta}_N},
\end{align*}
and $\mathbf{\Sigma}(\theta^*)$ is estimated by batch means(\cite{jones2006fixed}) assuming $N$(N=ab) data are broken into $b$ batch of equal size $a$ that are assumed to be approximately independent.
\begin{displaymath}
\hat{\mathbf{\Sigma}}_{BM} = \frac{b}{a-1}\sum_{j=1}^a (\bar{Y}_j - \bar{Y})(\bar{Y}_j - \bar{Y})^T\,,
\end{displaymath}
where 
\begin{align*}
\bar{Y}_j &= \frac{1}{b} \sum_{i=(j-1)b}^{jb-1} \nabla_\theta \log q^\theta(\Pi X_i, \Pi X_{i+1}) |_{\theta=\hat{\theta}_N}, \\
\bar{Y} &= \frac{1}{N} \sum_{i=0}^{N-1} \nabla_\theta \log q^\theta(\Pi X_i, \Pi X_{i+1}) |_{\theta=\hat{\theta}_N}.
\end{align*}
Then we have 
\begin{displaymath}
\sqrt{N \hat{I}^{-1}(\hat{\theta}_N)}  (\hat{\theta}_N - \theta^*) \xrightarrow{D} \mathcal{N}(0, \mathbb{I}),
\end{displaymath}
where
\begin{displaymath}
\hat{I}(\hat{\theta}_N) = (\hat{\Ff_1}(\hat{\theta}_N))^{-T}\hat{\mathbf{\Sigma}}_{BM}(\hat{\theta}_N)(\hat{\Ff_1}(\hat{\theta}_N))^{-1}.
\end{displaymath}
\end{corollary}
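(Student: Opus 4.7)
The plan is to show that $\hat{I}(\hat{\theta}_N) \xrightarrow{P} I(\theta^*)$ and then combine this with the asymptotic normality already established in Theorem~\ref{thm:asymptotic mc} via Slutsky's theorem. Concretely, once the plug-in estimator is shown to be consistent, the continuous mapping theorem gives $\sqrt{\hat{I}^{-1}(\hat{\theta}_N) I(\theta^*)} \xrightarrow{P} \mathbb{I}$, and writing
$\sqrt{N\hat{I}^{-1}(\hat{\theta}_N)}(\hat{\theta}_N-\theta^*) = \sqrt{\hat{I}^{-1}(\hat{\theta}_N) I(\theta^*)} \cdot \sqrt{N I^{-1}(\theta^*)}(\hat{\theta}_N-\theta^*)$
together with Slutsky's theorem delivers the $\mathcal{N}(\vec{0},\mathbb{I})$ limit. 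So the whole task reduces to establishing the two consistency statements $\hat{\Ff}_1(\hat{\theta}_N) \xrightarrow{P} \Ff_1(\theta^*)$ and $\hat{\mathbf{\Sigma}}_{BM}(\hat{\theta}_N) \xrightarrow{P} \mathbf{\Sigma}(\theta^*)$.

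First I would handle $\hat{\Ff}_1(\hat{\theta}_N) \xrightarrow{P} \Ff_1(\theta^*)$. By Corollary~\ref{cor: 2-step mc}, the lifted chain $S_i = (X_i, X_{i+1})$ is itself Markov with stationary distribution $\mu'$, and because $X_i$ is finite or bounded, $S_i$ inherits uniform ergodicity. The ergodic theorem applied pointwise at $\theta^*$ gives $-\tfrac{1}{N}\sum_{i=0}^{N-1} \nabla_\theta^2 \log q^\theta\circ\Pi(X_i,X_{i+1})|_{\theta=\theta^*} \xrightarrow{P} \Ff_1(\theta^*)$. Under a standard domination condition on $\nabla_\theta^2 \log q^\theta\circ\Pi$ over a neighborhood of $\theta^*$ this strengthens to uniform convergence in $\theta$; combined with $\hat{\theta}_N \xrightarrow{P} \theta^*$ (Corollary~\ref{cor: consistency of estimator MC}) and continuity of $\theta\mapsto\Ff_1(\theta)$, one obtains $\hat{\Ff}_1(\hat{\theta}_N) \xrightarrow{P} \Ff_1(\theta^*)$.

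The main obstacle is the corresponding consistency $\hat{\mathbf{\Sigma}}_{BM}(\hat{\theta}_N) \xrightarrow{P} \mathbf{\Sigma}(\theta^*)$ for the batch means estimator, because it simultaneously involves long-range correlations of the chain and plug-in of an estimated parameter. Here I would appeal directly to the strong consistency result for batch means of Jones, Haran, Caffo and Neath~\cite{jones2006fixed}: under uniform ergodicity of $S_i$, finite $(2+\delta)$-moments of $Y_i = \nabla_\theta\log q^\theta\circ\Pi(X_i,X_{i+1})|_{\theta^*}$ (consistent with the assumption that $\Ff_2(\theta^*)$ is finite), and a batch-size schedule with $a=a(N)\to\infty$ and $b=N/a\to\infty$ (for instance $a \asymp N^{1/2}$), one has $\hat{\mathbf{\Sigma}}_{BM}(\theta^*) \xrightarrow{P} \mathbf{\Sigma}(\theta^*)$. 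The remaining subtlety is that $\hat{\mathbf{\Sigma}}_{BM}$ is evaluated at $\hat{\theta}_N$ rather than at $\theta^*$; I would control the difference $\hat{\mathbf{\Sigma}}_{BM}(\hat{\theta}_N)-\hat{\mathbf{\Sigma}}_{BM}(\theta^*)$ by a first-order Taylor expansion of $\nabla_\theta\log q^\theta\circ\Pi$ around $\theta^*$ on each batch mean, together with a uniform envelope bound on $\nabla_\theta^2\log q^\theta\circ\Pi$ in a shrinking neighborhood of $\theta^*$ and the rate $\hat{\theta}_N-\theta^*=O_P(N^{-1/2})$ from Theorem~\ref{thm:asymptotic mc}, giving an $o_P(1)$ perturbation.

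With both $\hat{\Ff}_1(\hat{\theta}_N) \xrightarrow{P} \Ff_1(\theta^*)$ and $\hat{\mathbf{\Sigma}}_{BM}(\hat{\theta}_N) \xrightarrow{P} \mathbf{\Sigma}(\theta^*)$ in hand, the continuous mapping theorem applied to $(\mathbf{A},\mathbf{B})\mapsto \mathbf{A}^{-T}\mathbf{B}\mathbf{A}^{-1}$ on the open set of invertible matrices yields $\hat{I}(\hat{\theta}_N)\xrightarrow{P} I(\theta^*)$, and the Slutsky argument in the first paragraph then completes the proof.
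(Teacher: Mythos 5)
The paper states this corollary without any proof, so there is nothing to compare your argument against line by line; judged on its own, your proposal is the natural argument and is essentially sound. The structure --- establish $\hat{\Ff}_1(\hat{\theta}_N)\xrightarrow{P}\Ff_1(\theta^*)$ and $\hat{\mathbf{\Sigma}}_{BM}(\hat{\theta}_N)\xrightarrow{P}\mathbf{\Sigma}(\theta^*)$, push through the continuous mapping theorem, and conclude by Slutsky from Theorem~\ref{thm:asymptotic mc} --- is exactly what is needed, and you correctly locate the two real difficulties that the corollary's statement glosses over. First, batch means with a fixed batch size is \emph{not} a consistent estimator of $\mathbf{\Sigma}(\theta^*)$; the Jones--Haran--Caffo--Neath result requires both the batch size and the number of batches to grow (e.g.\ $a\asymp N^{1/2}$), an assumption absent from the corollary as written (whose indices for $a$ and $b$ are moreover inconsistent between the prose and the displayed formulas). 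Second, the plug-in of $\hat{\theta}_N$ in place of $\theta^*$ inside the batch means does need the Taylor-plus-envelope control you sketch, combined with $\hat{\theta}_N-\theta^*=O_P(N^{-1/2})$. Two small caveats: the factorization $\sqrt{\hat{I}^{-1}I}$ should be read as $\hat{I}^{-1/2}I^{1/2}\xrightarrow{P}\mathbb{I}$ for symmetric positive definite matrices (matrix square roots do not distribute over products in general, but the Slutsky step only needs this normalized form); and your appeal to "uniform convergence under a domination condition" for $\hat{\Ff}_1$ imports a regularity hypothesis not stated in the paper, which is reasonable but should be flagged as an added assumption rather than something already available.
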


\section{Test-bed 1:  Two-scale diffusion processes} 
\subsection{Invariant and transition probability density functions}
Denote  $  \sigma  =(x,y)^{tr} $ and $\sigma^{\epsilon}_t:= (X^\epsilon_t, Y^\epsilon_t)^T$,
 ${a}( \sigma ) = \begin{pmatrix}
-y \\ -\epsilon^{-1}(y-x)
\end{pmatrix}$,
and $ { b} = \begin{pmatrix}
1& 0 \\ 0 & \epsilon^{-1/2}
\end{pmatrix}$.
Then the two-scale diffusion SDE system is rewritten as   
\begin{equation}\label{eq:two-SDE}
d\sigma^{\epsilon}_t  =  {  a}(\sigma^{\epsilon}_t ) dt +  {  b} d {  \bf W}_t\COMMA
\end{equation}
where $ { \bf W}_t = (W^1_t, W^2_t)$, with   $W^1_t$ and $W^2_t$ are independent  standard Wiener processes.
We consider an approximation $p_h(\sigma,\sigma')$ of the  exact transition probability 
$p(\sigma,\sigma') $ of the process $ \sigma^{\epsilon}_t$,  by applying the Euler-Maruyama discretization scheme to  \eqref{eq:two-SDE}.
 That is, for the time step $\delta t=h$,   the probability to be at state $\sigma'$ after time $h$,  given that the system is at $\sigma$ is
\begin{equation}\label{eq:transition}
 p_h(\sigma,\sigma') = \frac{1}{Z} e^{-\frac{1}{2}[\sigma'-\sigma-{  a}(\sigma)h)^{tr} {  b}^{-2}(\sigma'-\sigma-{  a}(\sigma)h]}\COMMA \quad \sigma = (x,y), \ \sigma' = (x',y')\PERIOD 
\end{equation}

The invariant   probability density function $ \CGmu(x) $  for the process 
$X_t^{CG}$ satisfying equation 
\begin{equation*}
dX_t^{CG} = a(X_t^{CG};\theta) dt + dW_t \COMMA
\end{equation*}
is given   by the solution of the corresponding stationary Fokker-Planck equation
\begin{equation*}
    -\frac{d}{dx}\left[ a(x;\theta)\CGmu(x) -\frac12 \frac{d\CGmu(x)}{dx}\right] =0 \PERIOD
\end{equation*}
Under appropriate boundary conditions and the normalization  $ \int \CGmu(x) dx =1$, we obtain  
\begin{equation}\label{eq:CGinvariant}
     \CGmu(x) = \frac{1}{Z^{\theta}} e^{-2 \bar U(x;\theta)}\,,
\end{equation}
where $\bar U(x;\theta) $ is defined by
$- \frac{d}{dx} \bar U(x;\theta):= {a}(x;\theta)  $ and 
$    Z^{\theta} = \int e^{-2 \bar U(x;\theta)} dx \,.
$
 
The transition probability of the CG process $X_t^{CG}$ is approximated by $\bar{q}_h^\theta(x,x')$, for a discrete time step $\delta t=h$,  
\begin{equation}\label{eq:CGtransition}
\bar{q}_h^\theta (x,x') \sim e^{-\frac{1}{2}|x' - x -  {a}(x;\theta)h|^2} \PERIOD 
\end{equation}

\subsection{Relative Entropy Rate minimization reduces to Force Matching}
For the two-scale diffusion process, we here show that the RER minimization, i.e.,  by using the transition probability, reduces to the force matching. 
Denote $\Delta(\sigma) = \sigma + {\bf a}(\sigma)h$ and recall that the CG map is the orthogonal projection 
$\Pi\sigma  = x$.  
Then \eqref{eq:transition} is  decomposed as 
$$p_h(\sigma,\sigma')d\sigma' = \frac{1}{\bar{Z}}e^{-\frac{1}{2}|\Pi\sigma'-\Pi\Delta(\sigma)|^2}dx'\times \frac{1}{\tilde{Z}}e^{-\frac{\epsilon}{2}|\Pi^{\perp}\sigma'-\Pi^\perp\Delta(\sigma)|^2}d y'\COMMA$$
where $\Pi^\perp  $ is the orthogonal complement of $\Pi$, i.e. 
 $\sigma = \Pi \sigma + \Pi^\perp \sigma$ for any $\sigma$.  
Meanwhile, we define  the (non-unique) transition probability for the CG model in the original state space  
$$q_h^\theta(\sigma,\sigma') = \bar{q}_h^\theta(\Pi\sigma,\Pi\sigma')v(\sigma'|\Pi\sigma')\COMMA$$
where $\bar{q}_h^\theta (\Pi\sigma,\Pi\sigma')$ is defined in \eqref{eq:CGtransition}, and  $v(\sigma'|\Pi\sigma')$ is a non-unique back-mapping probability density.
%
Then  minimizing the RER  
$\min_\theta \Hh(P|Q^\theta)$
 is equivalent to 
$$
\min_\theta \left\{-\int \int \mu(\sigma)p_h(\sigma,\sigma')\log q_h^\theta(\sigma,\sigma')d\sigma d\sigma'\right\}\COMMA $$
and if  we  ignore the terms which are independent of $\theta$, 
$$\min_\theta \{-\int \int \mu(\sigma)p_h(\sigma,\sigma')\log \bar{q}_h^\theta(\Pi\sigma,\Pi\sigma')d\sigma d\sigma'\} \PERIOD$$
Integrate over $\sigma'_2$ to get 
$$\min_\theta \left\{-\int\int \frac{1}{\bar{Z}}e^{-|\Pi\sigma'-\Pi\Delta(\sigma)|^2}\log \bar{q}_h^\theta(\Pi\sigma',\Pi\sigma')d\sigma_1' \mu(\sigma) d\sigma \right\} \PERIOD$$

It is equivalent to 
$$\min_\theta \int \int e^{-|\Pi\sigma'-\Pi\Delta(\sigma)|^2} |\Pi\sigma'-\Delta^\theta(\Pi\sigma)|^2d\sigma'_1\mu(\sigma)d\sigma \PERIOD$$

If write $\Pi\sigma'-\Delta^\theta(\Pi\sigma) = \Pi\sigma' - \Pi\Delta(\sigma) + \Pi\Delta(\sigma) -\Delta^\theta(\Pi\sigma)$, then integrate over $d\sigma'_1(=d\Pi\sigma')$, we have
$$\min_\theta \int |\Pi\Delta(\sigma)-\Delta^\theta(\Pi\sigma)|^2\mu(\sigma)d\sigma \PERIOD$$

Notice that $\Pi\Delta(\sigma) = X_t^\epsilon - Y_t^\epsilon h$ and $\Delta^\theta(\Pi\sigma) = X_t^\epsilon + {a}(X_t^\epsilon;\theta)h$, yields
$$\min_\theta \int |Y_t^\epsilon +  {a}(X_t^\epsilon;\theta)|^2\mu(\sigma)d\sigma\PERIOD$$

\subsection{RE minimization}
For the case with i.i.d.\  data 
$\{(X_i, Y_i)\}_{i=1}^N$, samples from the microscopic stationary probability density $p_s(x,y)$. 
The CG maping is    $ \Pi(X_i, Y_i) = X_i $ and the corresponding   RE   minimization problem is 
\begin{eqnarray*}
   {\theta}^{iid, re}  &=& \arg\max_\theta \EXPECT_{\mu}[  \log \CGmu ] \\
             &=& \arg\max_\theta \left\{-2 \EXPECT_{\mu}[   \bar U(\cdot;\theta) ] - \log Z^{\theta} \right\}\PERIOD
  \end{eqnarray*}

We apply  the Newton-Raphson optimization algorithm to calculate an estimation of $\hat{\theta}$. 
The k-th iteration of the Newton - Raphson algorithm is 
\begin{eqnarray*}
 \hat{\theta}^{(k+1)}_{N} =  \hat{\theta}^{(k)}_{N} -  \hat H^{-1}(\theta^{(k)}) \hat J(\theta^{\theta^{(k)}}) \COMMA
\end{eqnarray*}
where  $\hat J(\theta)$ and $\hat H(\theta)$ are estimators of the 
 Jacobian and Hessian matrix respectively.
The Jacobian is 
\begin{equation*}
    J(\theta) =  -2 \EXPECT_{\mu}[   \nabla_{\theta}  \bar U(\cdot;\theta) ] + 2 \EXPECT_{\bar{\mu}^{\theta}}[   \bar \nabla_{\theta}U(\cdot;\theta) ]\COMMA
\end{equation*}
an estimator of which is 
\begin{equation*}
    \hat J(\theta) =  -2\frac1N \sum_{i=1}^N \nabla_{\theta} \bar U(X_i;\theta)   + 2    \frac1M \sum_{j=1}^M  \nabla_{\theta}\bar U(\bar X_j;\theta)  \COMMA
\end{equation*}
where $\{X_i = \COP x_i\}_{i=1}^N$ is the CG projection of a sample set  generated from the fine model, and $\{\bar X_j\}_{j=1}^M$ is a sample set generated from the coarse  model , for the given value of $\theta$.
The Hessian matrix has elements $$ H_{ij} = 4\EXPECT_{\CGmu}\left[\PD{\bar U}{\theta_i}  \PD{\bar U}{\theta_j} \right]  - 4 \EXPECT_{\CGmu}\left[\PD{\bar U}{\theta_i} \right] \EXPECT_{\CGmu}\left[\PD{\bar U}{\theta_j} \right]$$
for which an estimator  is 

$$ \hat H_{ij} =  \frac4N \sum_{k=1}^N  \PD{\bar U}{\theta_i}   (X_k;\theta) \PD{\bar U}{\theta_j}(X_l;\theta) - \frac4N \sum_{k=1}^N   \PD{\bar U}{\theta_i}   (X_k;\theta) \frac1N\sum_{l=1}^N \PD{\bar U} {\theta_j}(X_l;\theta)\,,$$
where 
$$\PD{\bar U}{\theta_i}   (x;\theta)=\frac{x^{i+1}}{i+1}\,, \quad i=0,\dots 4 \PERIOD$$

\subsection{Additional numerical results}

 \begin{table}[htbp]
    \centering
    \resizebox{\columnwidth}{!}{\begin{tabular}{|c|c|c|c|c|c|c|c|}
    \hline
        N &$\hat\theta$ & Asymptotic $\hat{\sigma}^2$ &Jackknife  $\hat{\sigma}^2$ & Bootstrap $\hat{\sigma}^2$ & Asymptotic CI & Jackknife CI & Bootstrap CI  \\ \hline \\
     50   & $\left[\begin{matrix}  0.0484 \\  -0.8575\\ -0.0340\\ -0.0337\\ -0.0627\end{matrix} \right] $ & $\begin{bmatrix}0.0205  \\  0.0654   \\ 0.1765 \\   0.0209 \\   0.0269
 \end{bmatrix}$ & $\left[\begin{matrix}  0.0182\\ 0.0968\\ 0.2454\\ 0.0514\\ 0.0500\end{matrix} \right] $& $\left[\begin{matrix}0.0172\\ 0.0982\\ 0.2276\\ 0.1102\\ 0.0721\end{matrix} \right]$ & $\begin{bmatrix} -0.2321 &   0.3289 \\
   -1.3589 &  -0.3562 \\
   -0.8573 &   0.7894 \\
   -0.3172 &   0.2498 \\
   -0.3844 &   0.2589 \end{bmatrix}$ & $\begin{bmatrix} -0.2164 &   0.3132 \\
   -1.4674 &  -0.2477 \\
   -1.0049 &   0.9370 \\
   -0.4779 &   0.4104 \\
   -0.5009 &   0.3754 \end{bmatrix}$ & $\begin{bmatrix} -0.2084 &   0.3052 \\
   -1.4717 &  -0.2433 \\
   -0.9690 &   0.9010 \\
   -0.6845 &   0.6170 \\
   -0.5890 &   0.4635 \end{bmatrix}$  \\ \hline \\
     100 & $\left[\begin{matrix}   -0.0323\\ -1.1140\\ 0.3858\\ 0.0706\\ -0.2068\end{matrix} \right]  $& $\begin{bmatrix} 0.0113 \\   0.0359 \\   0.1000 \\   0.0156  \\  0.0184 \end{bmatrix}$ & $\left[\begin{matrix}  0.0083\\ 0.0469\\  0.1432\\ 0.0209\\ 0.0262\end{matrix} \right]$& $\left[\begin{matrix}0.0073\\ 0.0442\\ 0.1377\\ 0.0288\\ 0.0363\end{matrix} \right]$ & $\begin{bmatrix} -0.2403 &   0.1757 \\
   -1.4853 &  -0.7427 \\
   -0.2341 &   1.0057 \\
   -0.1739 &   0.3152 \\
   -0.4729 &   0.0594\end{bmatrix}$ & $\begin{bmatrix} -0.2105  &   0.1459 \\
   -1.5387 &  -0.6894 \\
   -0.3560 &   1.1276 \\
   -0.2128 &   0.3541 \\
   -0.5243 &   0.1107 \end{bmatrix}$ & $\begin{bmatrix} -0.1995  &  0.1349 \\
   -1.5263 &  -0.7017 \\
   -0.3416 &   1.1132 \\
   -0.2618 &   0.4031 \\
   -0.5804 &   0.1668 \end{bmatrix}$   \\\hline \\
    200  &$ \begin{bmatrix}  -0.1122\\ -0.9702\\ 0.1548\\  -0.0587\\ -0.1148\end{bmatrix}$  & $\begin{bmatrix} 0.0053 \\   0.0167 \\   0.0446  \\  0.0060 \\   0.0074 \end{bmatrix}$ & $\left[\begin{matrix}   0.0053\\ 0.0200\\ 0.0631\\ 0.0086\\ 0.0130\end{matrix} \right] $& $\left[\begin{matrix}0.0054\\ 0.0197\\ 0.0703\\ 0.0085\\ 0.0144\end{matrix} \right]$ & $\begin{bmatrix}-0.2545 &   0.0302 \\
   -1.2233 &  -0.7172 \\
   -0.2592 &   0.5688 \\
   -0.2109 &   0.0935 \\
   -0.2839 &   0.0543  \end{bmatrix}$  & $\begin{bmatrix} -0.2543 &   0.0300 \\
   -1.2476 &  -0.6929 \\
   -0.3375 &   0.6472 \\
   -0.2404 &   0.1230 \\
   -0.3385 &   0.1090 \end{bmatrix}$ & $\begin{bmatrix} -0.2558 &   0.0315 \\
   -1.2452 &  -0.6953 \\
   -0.3648 &   0.6745 \\
   -0.2398 &   0.1224 \\
   -0.3500 &   0.1204 \end{bmatrix}$  \\\hline
     \end{tabular}}
    \caption{ Asymptotic, jackknife and bootstrap estimates of the variance and 95\% CI for the FM with i.i.d.\ data.}
    \label{tab:jack_boot}
\end{table}

\begin{table}[htbp]
    \centering
    \resizebox{\columnwidth}{!}{\begin{tabular}{|c|c|c|c|c|c|c|c|}
    \hline
     N& &Asymptotic with FM & Jackknife with FM & Bootstrap with FM & asymptotic with RE & Jackknife with RE & Bootstrap with RE\\\hline
      \multirow{2}{*}[-2em]{50} & $\hat{\theta}$ & \multicolumn{3}{c|}{$\left[\begin{matrix}  0.0484  &  -0.8575 & -0.0340 & -0.0337 & -0.0627\end{matrix} \right] $ } &  \multicolumn{3}{c|}{$\begin{bmatrix} -0.0117 &  -0.8720  & -0.1158 &  -0.0321 &   0.0201 \end{bmatrix}$}\\\cline{2-8}
     & CI & $\begin{bmatrix} -0.2321 &   0.3289 \\
   -1.3589 &  -0.3562 \\
   -0.8573 &   0.7894 \\
   -0.3172 &   0.2498 \\
   -0.3844 &   0.2589 \end{bmatrix}$ & $\begin{bmatrix} -0.2164 &   0.3132 \\
   -1.4674 &  -0.2477 \\
   -1.0049 &   0.9370 \\
   -0.4779 &   0.4104 \\
   -0.5009 &   0.3754 \end{bmatrix}$ & $\begin{bmatrix} -0.2084 &   0.3052 \\
   -1.4717 &  -0.2433 \\
   -0.9690 &   0.9010 \\
   -0.6845 &   0.6170 \\
   -0.5890 &   0.4635 \end{bmatrix}$ & $\begin{bmatrix} -0.1465 &   0.1231 \\
   -1.1165 &  -0.6275 \\
   -0.4993 &   0.2677 \\
   -0.1739 &   0.1096 \\
   -0.0984 &   0.1386 \end{bmatrix}$ & $\begin{bmatrix}-0.9145 &   0.8912 \\
   -2.2984 &   0.5543 \\
   -1.3890 &   1.1573 \\
   -0.8599 &   0.7957 \\
   -0.5091 &   0.5492 \end{bmatrix}$ & $\begin{bmatrix} -0.1704 &   0.1470 \\
   -1.0670 &  -0.6771 \\
   -0.3251 &   0.0934 \\
   -0.1710 &   0.1068 \\
   -0.0756 &   0.1157 \end{bmatrix}$ \\\hline
     \multirow{2}{*}[-2em]{200} & $\hat{\theta}$ & \multicolumn{3}{c|}{$\begin{bmatrix}  -0.1122 & -0.9702 & 0.1548 &  -0.0587 & -0.1148\end{bmatrix}$} &  \multicolumn{3}{c|}{$\begin{bmatrix} 0.0659 &  -0.9759 &  -0.0085 &  -0.0533 & -0.0272\end{bmatrix}$}\\\cline{2-8}
     &CI & $\begin{bmatrix} -0.2545 &    0.0302 \\
   -1.2233 &  -0.7172\\
   -0.2592 &   0.5688\\
   -0.2109 &   0.0935\\
   -0.2839  &  0.0543 \end{bmatrix}$ & $\begin{bmatrix}-0.2543 &  0.0300 \\
   -1.2476 &  -0.6929 \\
   -0.3375 &   0.6472 \\
   -0.2404 &   0.1230 \\
   -0.3385 &   0.1090 \end{bmatrix}$ & $\begin{bmatrix} -0.2558 & 0.0315 \\
   -1.2452 &  -0.6953 \\
   -0.3648 &   0.6745 \\
   -0.2398 &   0.1224 \\
   -0.3500 &   0.1204 \end{bmatrix}$ & $\begin{bmatrix} -0.0740 &   0.2057 \\
   -1.2574 &  -0.6945 \\
   -0.5077 &   0.4906 \\
   -0.1924 &   0.0858 \\
   -0.1801 &   0.1258 \end{bmatrix}$ & $\begin{bmatrix} -0.6344 &  0.7661 \\
   -2.9675 &   1.0157 \\
   -2.0745 &   2.0575 \\
   -1.3104 &   1.2037 \\
   -0.7309 &   0.6765 \end{bmatrix}$ & $\begin{bmatrix} -0.0520 &  0.1837 \\
   -1.1523 &  -0.7996 \\
   -0.2072 &   0.1902 \\
   -0.1587 &   0.0521 \\ 
   -0.0828 &   0.0285 \end{bmatrix}$\\\hline
   \multirow{2}{*}[-2em]{500} & $\hat{\theta}$ & \multicolumn{3}{c|}{$\begin{bmatrix}  0.0236& -1.0240& 0.0039& -0.0012& -0.0338 \end{bmatrix}$} &  \multicolumn{3}{c|}{$\begin{bmatrix}   0.0247& -0.9827& 0.0260&  -0.0640 & 0.0001 \end{bmatrix}$}\\\cline{2-8}
     &CI & $\begin{bmatrix} -0.0663  &  0.1135\\
   -1.1790 &  -0.8689\\
   -0.2265  &  0.2342\\
   -0.0947  &  0.0922\\
   -0.1189  &  0.0513 \end{bmatrix}$ & $\begin{bmatrix}  -0.0667 &   0.1139 \\
   -1.1900 &  -0.8579\\
   -0.2477 &   0.2555\\
   -0.1099 &   0.1075\\
   -0.1350 &   0.0674 \end{bmatrix}$ & $\begin{bmatrix} -0.0703 &  0.1175 \\
   -1.1868 &  -0.8611 \\
   -0.2593 &   0.2670 \\
   -0.1080 &   0.1056 \\
   -0.1396 &   0.0721 \end{bmatrix}$ & $\begin{bmatrix}-0.1390 & 0.1151\\
   -1.2287 &  -0.7505 \\
   -0.3005 &   0.3572 \\
   -0.1345 &   0.1509 \\
   -0.0913 &   0.0572 \end{bmatrix}$ & $\begin{bmatrix} -1.4482 &  1.5079 \\
   -4.2963 &   2.3272 \\
   -3.7320 &   3.7412 \\
   -2.0751 &   1.9495 \\
   -1.1753 &   1.2034 \end{bmatrix}$ & $\begin{bmatrix} -0.0542 &  0.1139 \\
   -1.1491 &  -0.8200 \\
   -0.1631 &   0.1723 \\
   -0.1682 &   0.0427 \\
   -0.0465 &   0.0746 \end{bmatrix}$ \\\hline
    \end{tabular}}
    \caption{ Comparison of FM and RE with jackknife and bootstrap, with i.i.d.\ samples.}
    \label{tab:FM_RE_nonparametric}
\end{table}

\begin{table}[htbp]
\begin{minipage}[t]{0.95\linewidth}
    \centering
    \begin{tabular}{|c|c|c|c|}
    \hline
 Estimator & $N_p$ & $N_t$ & $\hat{\theta}$ \\\hline
  FM & 1 & 500 &$\begin{bmatrix}\ \ 0.1207 &  -0.4328 &  -0.5375 &  -3.0699 &  -2.0022 \end{bmatrix}$ \\\hline
  RER    & 1 & 500  &$\begin{bmatrix}-1.5032 &  -6.0944 & \ \  2.1876 & \ 20.4666  & 15.4355\end{bmatrix}$\\\hline
  FM & 1  & 5,000 &$\begin{bmatrix}\ \  0.0050  &  -0.9485 &  \ \ 0.0453 &  -0.0221 &  -0.0277 \end{bmatrix}$  \\\hline
  RER  &   1  & 5,000& $\begin{bmatrix}-0.1530 &  -1.1251  & \ \ 0.1439 &  -0.0260 &  -0.0481 \end{bmatrix}$ \\\hline
   FM & 1 & 50,000 &$\begin{bmatrix} \ \ 0.0026  & -0.9734 & -0.0105 &  -0.0062  & \ \ 0.0038   \end{bmatrix}$  \\\hline
  RER & 1 & 50,000 &$\begin{bmatrix} \ \ 0.0746  & -0.9805 & \ \ 0.0483 &  -0.0313 &  -0.0255 \end{bmatrix}$  \\\hline
    FM & 10 & 500 &$\begin{bmatrix}  \ \  0.0136 &    -0.8960  &  \ \ 0.0839 &  -0.0339 &  -0.0485 \end{bmatrix}$ \\\hline
    PSRE  & 10 & 500 &$\begin{bmatrix}  -0.3615 &   -1.3288 &  \ \ 0.9739 &  -0.0195  & -0.4129 \end{bmatrix}$ \\\hline  
    FM  &100 & 500 &$\begin{bmatrix} -0.0005 &  -0.9728  &  \ \  0.0058 &  -0.0081 &  -0.0005 \end{bmatrix}$ \\\hline
    PSRE &100  & 500 &$\begin{bmatrix}  -0.1359 &  -0.7976  & \ \ 0.0756 &  -0.0803 & \ \ 0.0188 \end{bmatrix}$ \\\hline
    FM & 100 & 5,000 & $\begin{bmatrix}   -0.0003   & -0.9777  & \ \ 0.0002 &  -0.0017  & \ \  0.0003 \end{bmatrix}$ \\\hline
    PSRE & 100 & 5,000 & $\begin{bmatrix}  -0.0222   & -0.9673  & \ \  0.0798 &    -0.0223  & -0.0171\end{bmatrix}$ \\\hline
    \end{tabular}
    \caption{ Point estimates for the correlated time-series data, with the different estimators for the PSRE and RER.}
    \label{tab:FM_multiple_trajectories}
\end{minipage}
\end{table}



\section{Test-bed 2: Effective force-fields and confidence in coarse-graining of linear polymer chains}
\subsection{Model parameters and functional forms of bonded and non-bonded interactions of the atomistic force-field}

\begin{center}
\setlength{\tabcolsep}{15pt}
\renewcommand{\arraystretch}{1.2}
\begin{tabularx}{\textwidth}{ c c c c   }
 \hline 
 \multicolumn{4}{ c }{\textbf{Non-Bonded Interactions}} \\
 \hline
  \multicolumn{4}{ c }{ $ V_{LJ} = 4 \epsilon \left[ \left( \frac{\sigma}{r}\right)^{12} - \left(\frac{\sigma}{r} \right)^6 \right] $ }\\
 \hline
 \textbf{Atom Type} & \textbf{mass (g/mol)} & \textbf{$\sigma$ (nm)} & \textbf{$\epsilon$ (kj/mol)} \\ 
CH3 & 15.0 & 0.375 & 0.8156 \\
CH2 & 14.0 & 0.395 & 0.3827 \\
 \hline
\end{tabularx}
\end{center}

\bigskip
\begin{center}
\setlength{\tabcolsep}{15pt}
\renewcommand{\arraystretch}{1.2}
\begin{tabularx}{\textwidth}{ c c c }
 \hline 
 \multicolumn{3}{ c }{\textbf{Bonded Interactions}} \\
 \hline
  \multicolumn{3}{ c }{ $ V_b(r) = \frac{1}{2} k \left( r - b \right)^2 $ }\\
 \hline
 \textbf{Bond Type} & \textbf{b (nm)} & \textbf{k $(kj/mol/nm^2)$} \\ 
  CH3-CH2   &  0.154  &   83736.0 \\
  CH2-CH2   &  0.154  &   83736.0 \\
 \hline
\end{tabularx}
\end{center}

\bigskip

\begin{center}
\setlength{\tabcolsep}{15pt}
\renewcommand{\arraystretch}{1.2}
\begin{tabularx}{\textwidth}{ c c c }
 \hline 
 \multicolumn{3}{ c }{\textbf{Angular Interactions}} \\
 \hline
  \multicolumn{3}{ c }{ $ V_a(\theta) = \frac{1}{2} k \left( \theta - \theta_0 \right)^2 $ }\\
 \hline
 \textbf{Angle Type} & \textbf{$\theta_0$ (degrees)} & \textbf{k $(kj/mol/rad^2)$}  \\
  CH3-CH2-CH2  & 112.00  & 482.319 \\
  CH2-CH2-CH2  & 112.00  & 482.319 \\
  \hline 
\end{tabularx}
\end{center}

\bigskip
\begin{center}
\setlength{\tabcolsep}{5pt}
\renewcommand{\arraystretch}{1.2}
\begin{tabularx}{\textwidth}{ c c c c c c c c c}
 \hline 
 \multicolumn{9}{ c }{\textbf{Dihedral Interaction}} \\
 \hline
  \multicolumn{9}{ c }{ $ V_d(\phi) = \sum_{n=0}^8 C_n cos(\phi)^n $ \textit{(IUPAC/IUB convention)}}\\
 \hline
 \textbf \textbf{$C_0$ (kj/mol)} & \textbf{$C_1$} & \textbf{$C_2$} & \textbf{$C_3$} & \textbf{$C_4$} & \textbf{$C_5$} & \textbf{$C_6$} & \textbf{$C_7$} & \textbf{$C_8$} \\
  8.33 &  -17.72 & -2.52 & 30.06 & 18.53 & -16.36 & -37.36 & 14.45 & 23.44 \\
  \hline 
\end{tabularx}
\end{center}

\subsection{CG force-field: Bonded interactions}
Figures \ref{fig:bondlength}, \ref{fig:bondangle}, \ref{fig:dihangle} depict the bonded   interaction potentials, bond length, bond angle, and dihedral angle respectively, for the 3:1 coarse grained  polyethylene model. The bonded interaction were estimated with the Iterative Inverse Boltzmann method.  
  \begin{figure}[htbp]
   \centering
  \includegraphics[width=0.7\textwidth,height=0.7\textheight,keepaspectratio]{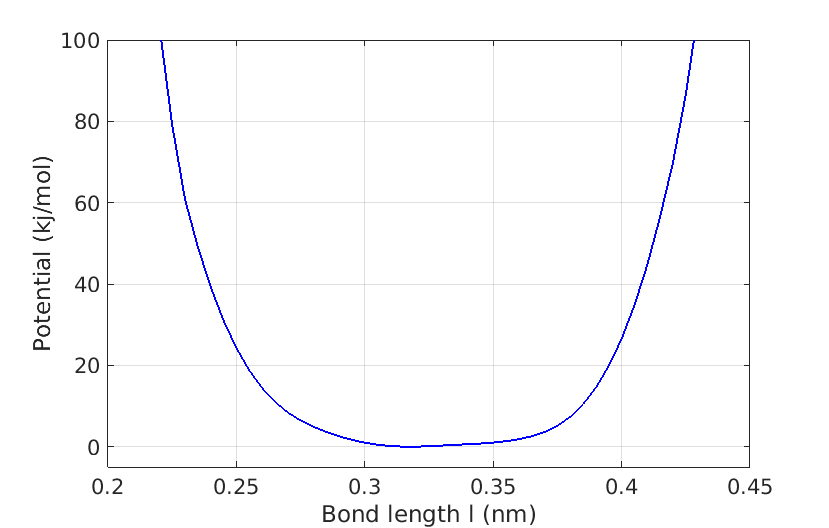}
  \caption{Bond length interaction potential}
\label{fig:bondlength}
  \end{figure}
  
    \begin{figure}[htbp]
   \centering
   \includegraphics[width=0.7\textwidth,height=0.7\textheight,keepaspectratio]{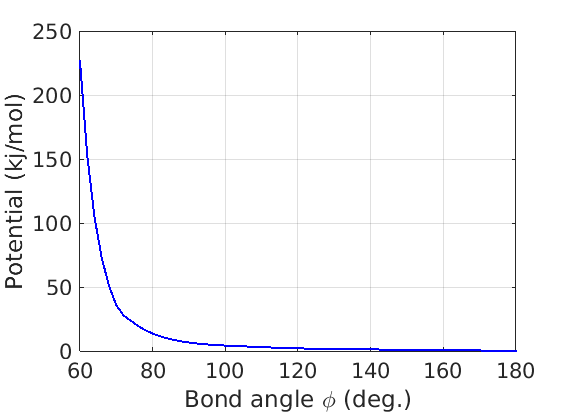}
  \caption{Bond angle interaction potential}
\label{fig:bondangle}
  \end{figure}
  
    \begin{figure}[htbp]
   \centering
  \includegraphics[width=0.7\textwidth,height=0.7\textheight,keepaspectratio]{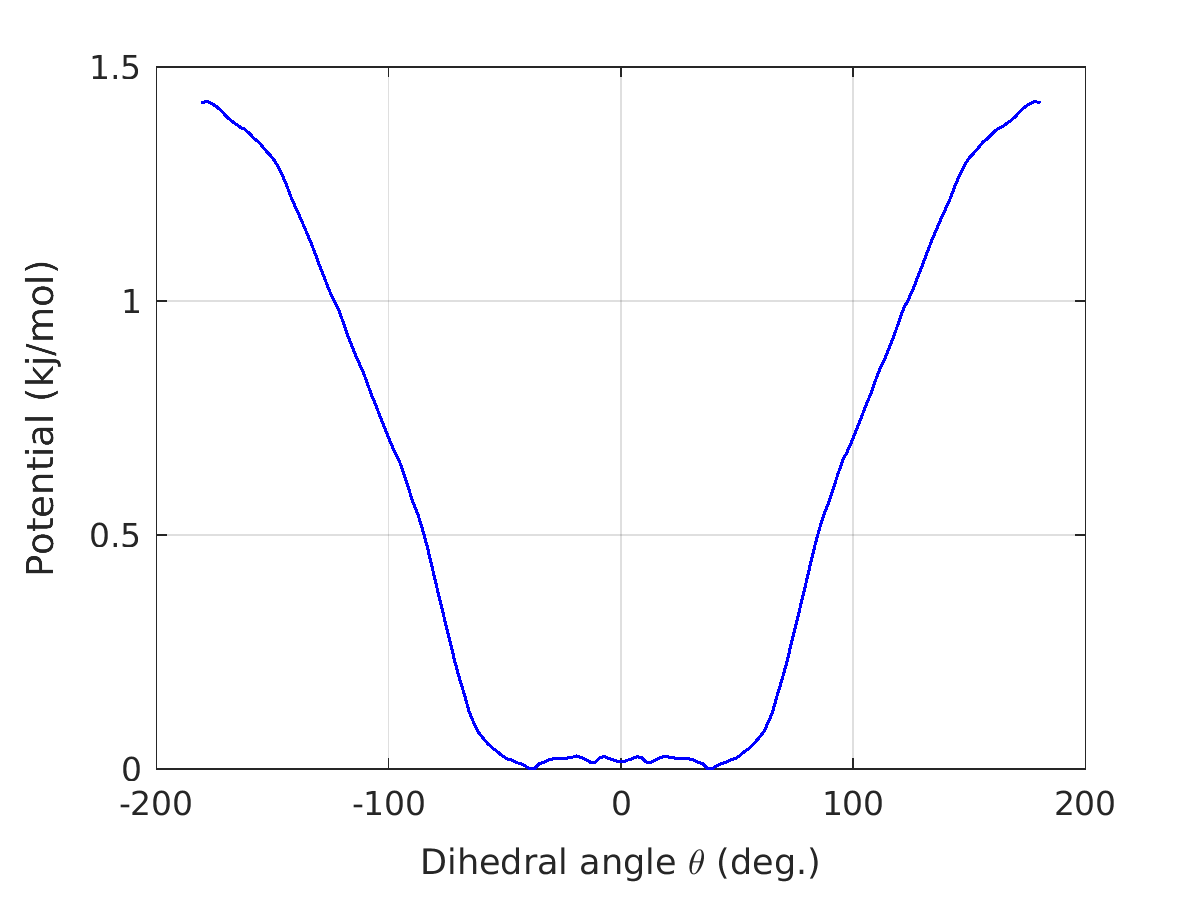}
  \caption{Dihedral angle interaction potential}
\label{fig:dihangle}
  \end{figure}
  
 \begin{figure}[htbp]
   \centering
  \includegraphics[width=0.7\textwidth,height=0.7\textheight,keepaspectratio]{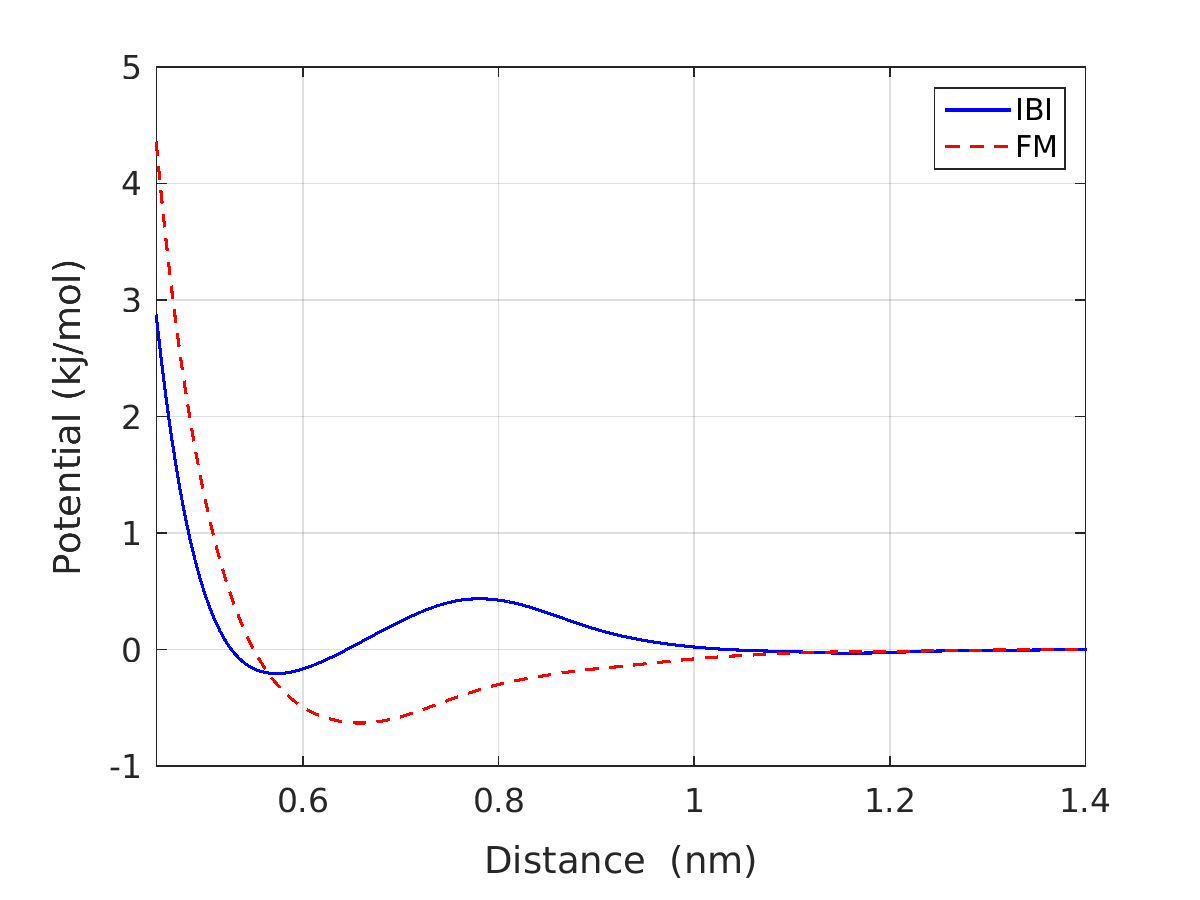}
  \caption{   Comparison of the pair interaction potential $u(r)$ obtained with IBI  and FM (cubic  B-splines).} 
\label{fig:IBIvsFMCubic}
  \end{figure}

\subsection{CG force-field: Non-bonded   interactions}
Figure~\ref{fig:LinearVSCubic} shows a comparison of different expansions for the pair interaction potential, estimated with the FM method. The expansions we compare are  linear B-splines and cubic B-splines, both with 30 parameters and both infered from the same  set of  2000 microscopic configurations. We can observe small differences between the linear and cubic splines.

  \begin{figure}[htbp]
   \centering
  \includegraphics[width=0.7\textwidth,height=0.7\textheight,keepaspectratio]{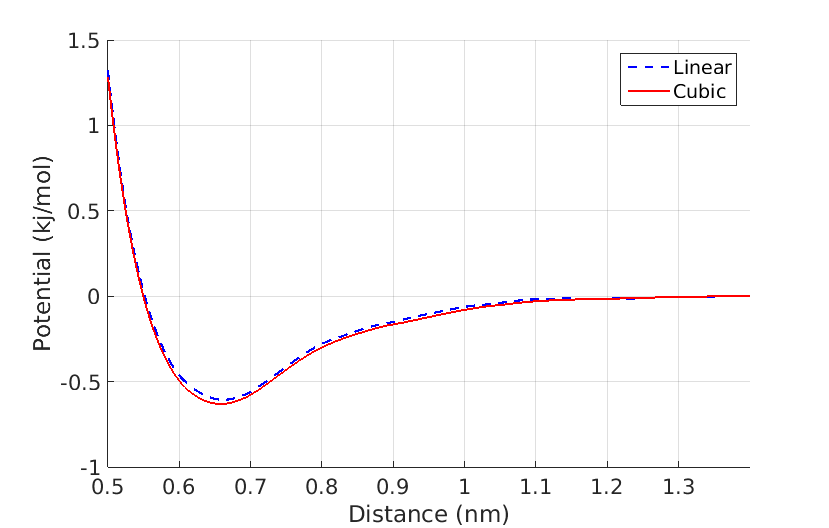}
  \caption{ Pair interaction potential $u(r)$. Linear vs cubic  B-splines.}
\label{fig:LinearVSCubic}
  \end{figure}

 Figures \ref{fig:Parameter_linear_large} and   \ref{fig:potential_linear_boot} depict the results of the FM estimation and the corresponding confidence sets for an expansion with linear B-splines and $75$ parameters. 
    We  estimate the bootstrap mean and confidence intervals for a small $N_B = 300$ set of configurational samples, and $B=300$ bootstrap samples.   The parameters confidence sets are shown in 
 the left figure of \ref{fig:Parameter_linear_large}, depicting higher uncertainty for the first coefficients, 
 The relative standard deviation (RSTD) for the non-zero parameters, shown in the right figure of  \ref{fig:Parameter_linear_large}.
\begin{figure}[!htbp]
 \includegraphics[width=0.5\textwidth,height=0.5\textheight,keepaspectratio]{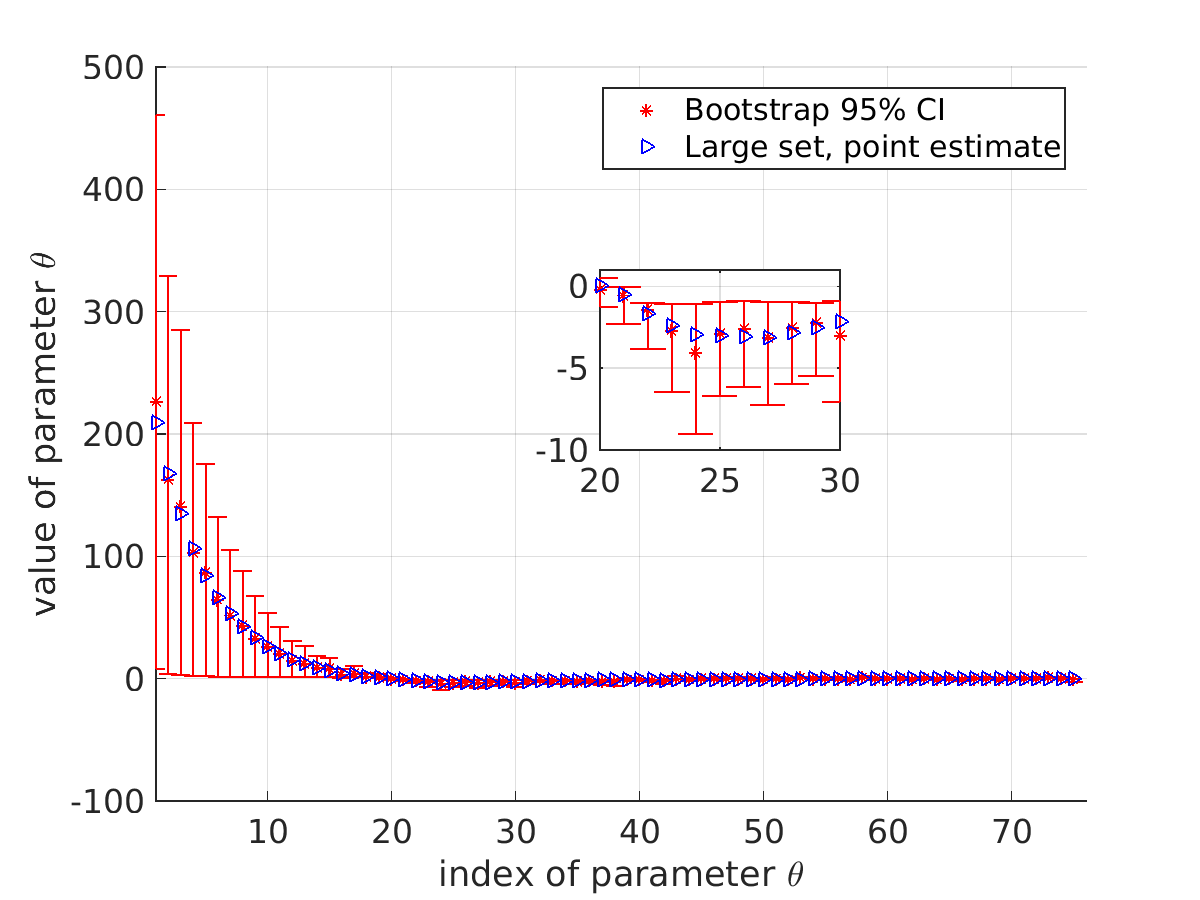}
  \includegraphics[width=0.5\textwidth,height=0.5\textheight,keepaspectratio]{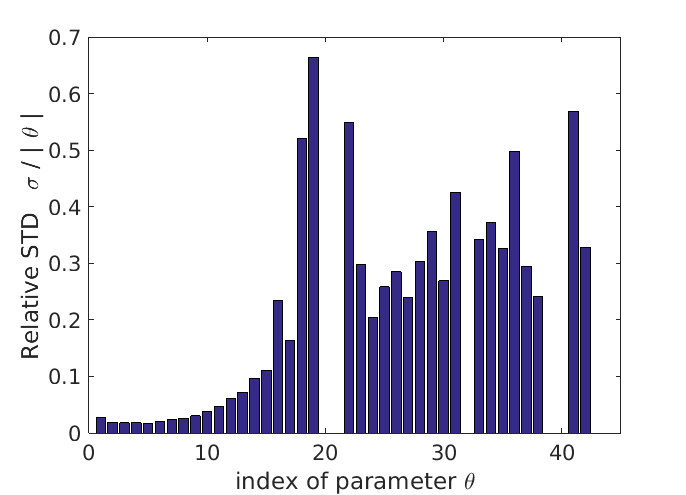}
\caption{Linear splines, 75 parameters. Large sample (5000) vs bootstrap (300) estimates. $ 95\% $ percentile CI}
\label{fig:Parameter_linear_large}
\end{figure}
   
The $80\%$, $95\%$, and $99\%$ bootstrap confidence intervals for the pair interaction potential are presented in figure~\ref{fig:potential_linear_boot}, along with the point estimate of the FM method for a set of $N=5000$ configuration samples.
   \begin{figure}[htbp]
     \includegraphics[width=0.5\textwidth,height=0.5\textheight,keepaspectratio]{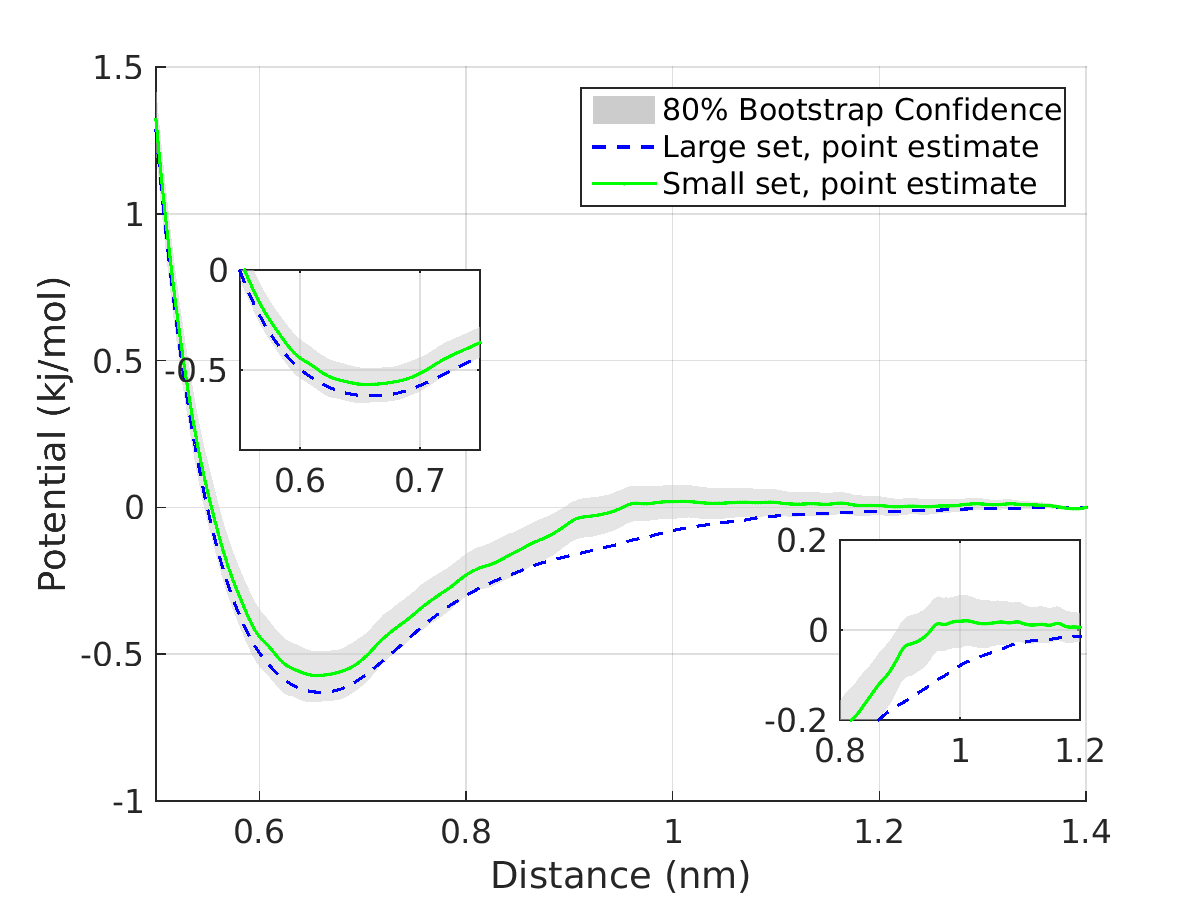}
   \includegraphics[width=0.5\textwidth,height=0.5\textheight,keepaspectratio]{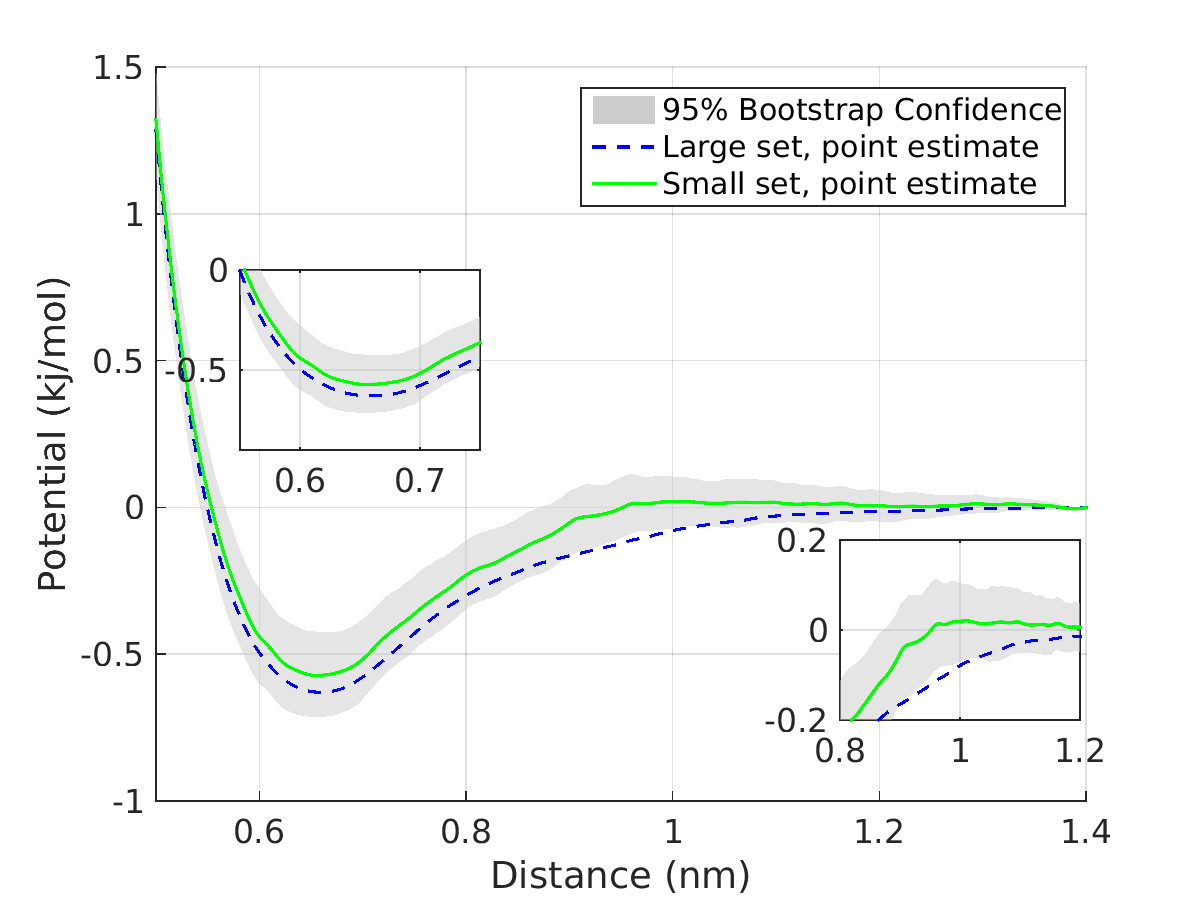}
   \includegraphics[width=0.5\textwidth,height=0.5\textheight,keepaspectratio]{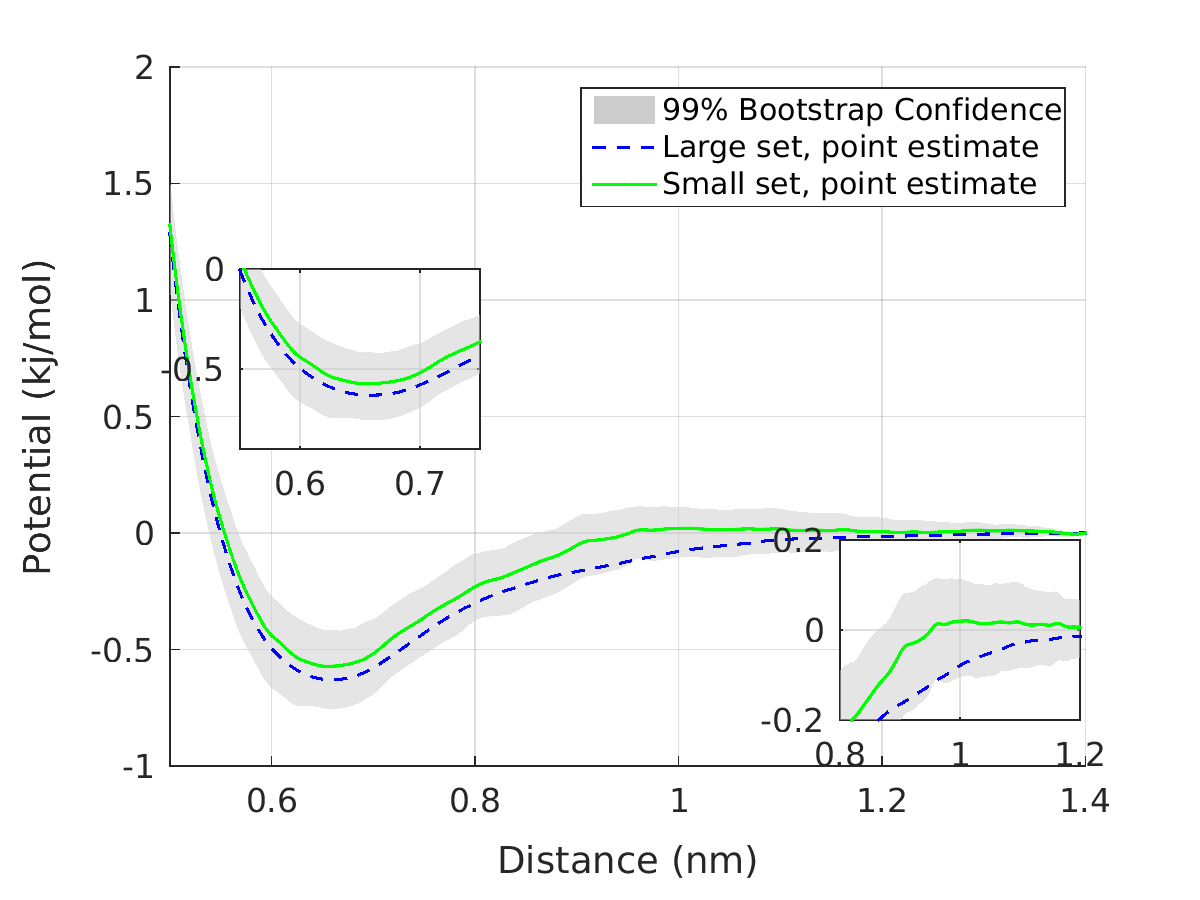}
\caption{  Pair interaction potential $u(r)$   with  $ 80 \%$,  $ 95 \%$, and $ 99 \%$ bootstrap confidence intervals, for  linear splines with a large  (5000) and a small (300) data set. The number of bootstrap samples is 300.}
\label{fig:potential_linear_boot}
  \end{figure}
We present the jackknife mean and  $ 95 \%$ confidence interval in figure \ref{fig:potential_linear_jack}. 
  \begin{figure}[htbp]
   \centering
   \includegraphics[width=0.7\textwidth,height=0.7\textheight,keepaspectratio]{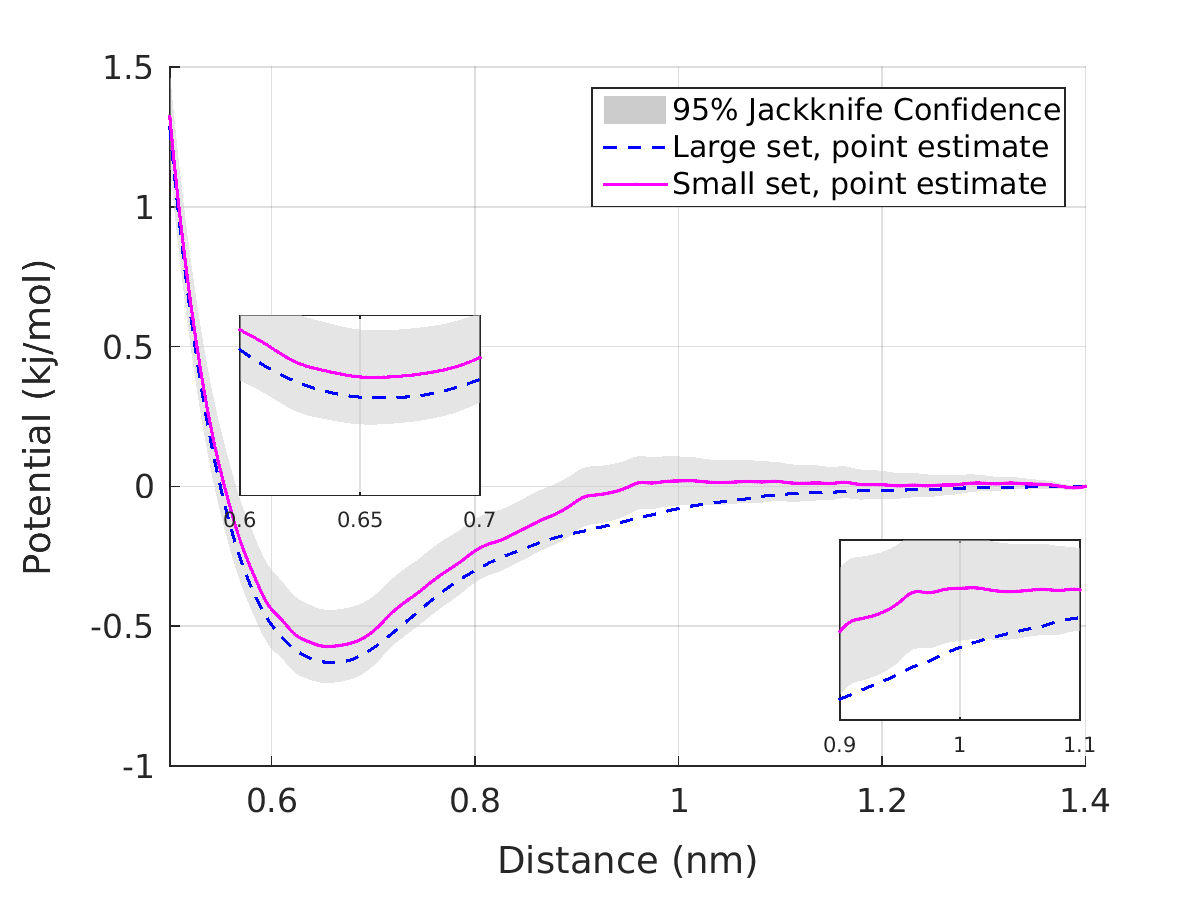}
  \caption{Pair interaction potential $u(r)$   with $ 95 \%$ jackknife confidence interval, for  linear splines with a large  (5000) and a small (300) data set.}
\label{fig:potential_linear_jack}
  \end{figure}

   Figure \ref{fig:meanvspoint}  verifies that for the chosen parametric models the bootstrap and jackknife mean coincide with the point estimate. This is because the model is linear in the parameters and the model is unbiased.
  \begin{figure}[htbp]
  \centering
  \includegraphics[width=0.7\textwidth,height=0.7\textheight,keepaspectratio]{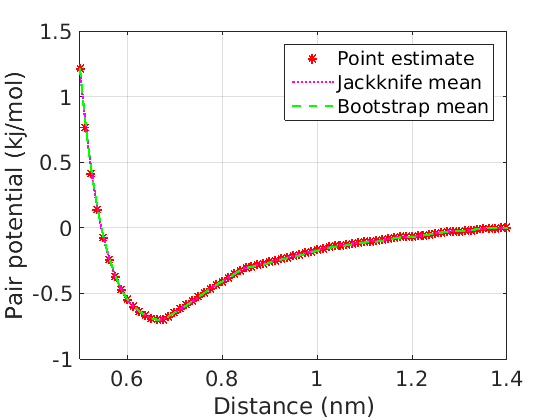}
 \caption{  Pair interaction potential $u(r)$ comparison of the point estimate to the bootstrap and jackknife mean. Cubic splines with 30 parameters. }
\label{fig:meanvspoint}
 \end{figure}

\bibliographystyle{plain}
\bibliography{ref_supp_info}

%

\end{document}